\documentclass[aos, preprint]{imsart}

\RequirePackage[OT1]{fontenc}
\RequirePackage{amsthm,amsmath}
\usepackage{array}
\usepackage{mathrsfs}
\usepackage{amssymb,bbm}
\usepackage{amsfonts}
\usepackage{amsmath,amssymb}
\usepackage{graphicx}
\usepackage{amscd}
\usepackage{color}
\usepackage{mathrsfs}
\usepackage{latexsym,bm}
\usepackage{array}
\usepackage{threeparttable}

\usepackage{threeparttable}
\usepackage{rotating}
\usepackage{booktabs}

\usepackage{rotating}

\usepackage[plain,noend]{algorithm2e}

\DeclareFontFamily{U}{mathx}{\hyphenchar\font45}
\DeclareFontShape{U}{mathx}{m}{n}{
      <5> <6> <7> <8> <9> <10>
      <10.95> <12> <14.4> <17.28> <20.74> <24.88>
      mathx10
      }{}
\DeclareSymbolFont{mathx}{U}{mathx}{m}{n}
\DeclareMathAccent{\widecheck}{\mathalpha}{mathx}{"71}

\DeclareMathAccent{\widecheck}{\mathalpha}{mathx}{"71}

\setlength{\textheight}{23cm} \setlength{\textwidth}{14.3cm}
\setlength{\oddsidemargin}{0.5cm} \setlength{\evensidemargin}{0.5cm}
\setlength{\topmargin}{-5.5mm} 

\linespread{1.5}


\startlocaldefs \numberwithin{equation}{section} \theoremstyle{it}
\newtheorem{thm}{Theorem}[section]
\newtheorem{lem}{Lemma}[section]
\newtheorem{ass}{Assumption}[section]

\endlocaldefs

\newcommand{\f}{\frac}
\newcommand{\p}{\partial}

\begin{document}

\begin{frontmatter}
\title{New HSIC-based tests for independence between two stationary multivariate time series}


\begin{aug}
\author{\fnms{Guochang} \snm{Wang}\thanksref{t1,m1}\ead[label=e1]{twanggc@jnu.edu.cn}},
\author{\fnms{Wai Keung} \snm{Li}\thanksref{t2,m2}\ead[label=e2]{hrntlwk@hku.hk}}
\and
\author{\fnms{Ke} \snm{Zhu}\thanksref{t3,m2}
\ead[label=e3]{mazhuke@hku.hk}
\ead[label=u1,url]{http://www.foo.com}}

\thankstext{t1}{Supported in part by National Natural Science Foundation of China (No.11501248).}
\thankstext{t2}{Supported in part by Research Grants Council of the Hong Kong SAR Government (GRF
grant HKU17303315).}
\thankstext{t3}{Supported in part by National Natural Science Foundation of China (No.11571348, 11371354, 11690014, 11731015 and
71532013).}

\affiliation{Jinan University\thanksmark{m1} and The University of Hong Kong\thanksmark{m2}}
%
%

\address{Jinan University\\
College of Economics\\
Guangzhou, China\\
\printead{e1}\\
}

\address{The University of Hong Kong\\
Department of Statistics \& Actuarial Science\\
Hong Kong\\
\printead{e2}\\
\phantom{E-mail:\ }\printead*{e3}}
\end{aug}

\begin{abstract}
This paper proposes some novel one-sided omnibus tests for independence between two multivariate stationary time series. These new tests apply the Hilbert-Schmidt independence criterion
(HSIC) to test the independence between the innovations of both time series.
Under regular conditions, the limiting null distributions of
our HSIC-based tests are established. Next, our HSIC-based tests are shown to be consistent.
Moreover, a residual bootstrap method is used to obtain the critical values for our HSIC-based tests, and its validity is justified.
Compared
with the existing cross-correlation-based tests for linear dependence, our tests
examine the general (including both linear and non-linear) dependence to give investigators more complete
information on the causal relationship between two multivariate time series.
The merits of our tests are illustrated by some simulation results and a real example.
\end{abstract}


\begin{keyword}
\kwd{Hilbert-Schmidt independence
criterion; multivariate time series models; non-linear dependence; residual bootstrap; testing for independence} 
\end{keyword}

\end{frontmatter}

\section{Introduction}

Before applying any sophisticated method to describe relationships
between two time series, it is important to check whether they are
independent or not. If they are dependent, causal analysis
techniques, such as copula and multivariate modeling, can be used to
investigate the relationship between them, and this may lead to
interesting insights or effective predictive models; otherwise, one
should analyze them using two independent parsimonious models; see, e.g., Pierce
(1977), Schwert (1979), Hong (2001a), Lee and Long (2009), Shao (2009), and Tchahou and Duchesne (2013) for
many empirical examples in this context.

Most of the existing methods for testing the independence between two multivariate time series models use a measure based on
cross-correlations. Specifically, they aim to check whether the sample cross-correlations of model residuals,
 up to either certain fixed lag or all valid lags, are significantly different from zeros. The former includes the portmanteau tests (Cheung and Ng, 1996; El Himdi and Roy, 1997; Pham et al. 2003; Hallin and Saidi, 2005 and 2007; Robbins and Fisher, 2015), and the latter
with the aid of kernel smooth  technique falls in the category of spectral tests (Hong, 2001a and 2001b; Bouhaddioui and Roy, 2006).
It must be noted that the idea of using the cross-correlations is a natural extension of the pioneered studies in
Haugh (1976) and Hong (1996) for univariate time
series models, but in many circumstances it only suffices to
convey evidence of uncorrelatedness rather than independence.

Generally speaking, all of the aforementioned tests
are designed for investigating the linear dependence
(i.e., the cross-correlation in the mean, variance or higher moments) between two
model residuals, and hence they could exhibit a lack of power
in detecting the non-linear dependence structure. A significant body of research so far has
documented the non-linear dependence relationship among a myriad of economic fundamentals; see, e.g.,
Hiemstra and Jones (1994), Wang et al. (2013), Choudhry et al. (2016), and Diks and Wolski (2016) to name a few.
However, less attempts have been made in the literature to account for both linear and nonlinear dependence
structure, which shall be two parallel important characteristics to be tested.

To examine the general dependence structure, a direct measure on independence is expected for testing purpose. In the last decade,
the Hilbert-Schmidt independence
criterion (HSIC) in Gretton et al. (2005) has been extensively used
in many fields. Some inspiring works in one- or two-sample independence tests via HSIC
include Gretton et al. (2008) and Gretton and Gy\"{o}rfi (2010) for observable i.i.d. data,
and Zhang et al. (2009), Zhou (2012) and Fokianos and Pitsillou (2017) for observable dependent or time series data.
The last two instead applied the distance
covariance (DC) in Sz\'{e}kely et al. (2007), while Sejdinovic et al. (2013) showed that HSIC and DC are equivalent.
When the data are un-observable and derived from a fitted statistical model (e.g., the estimated model innovations), the estimation
effect has to be taken into account. The original procedure based on HSIC or DC will no longer be valid, and a modification of the above procedure
has to be derived for testing purpose.
By now, very little work has been done in this context. Two exceptions are
Sen and Sen (2014) and Davis et al. (2016) for one-sample independence tests; the former focused on the regression model with
independent covariates, and the latter considered the vector AR models but without providing a rigorous way to obtain
the critical values of the related test.

This paper proposes some novel one-sided tests for the independence
between two stationary multivariate time series.
These new tests apply the HSIC to examine the
independence between the un-observable innovation vectors of both time series.
Among them, the single HSIC-based test is tailored to
detect the general dependence between these two innovation vectors
at a specific lag $m$, and the joint HSIC-based test is designed for this purpose up to certain lag $M$.
Under regular conditions, the limiting null distributions of
our HSIC-based tests are established. Next, our HSIC-based tests are shown to be consistent.
Moreover, a residual bootstrap method is used to obtain the critical values for our HSIC-based tests, and its validity is justified.
Our methodologies are applicable
for the general specifications of the time series models driven by i.i.d. innovations.
By choosing different lags,
our new tests can give investigators more complete
information on the general (including both linear and non-linear) dependence
relationship between two time series.
Finally, the importance of our HSIC-based tests is illustrated by some simulation results and a real example.

This paper is organized as follows. Section 2 introduces our
HSIC-based test statistics and some technical assumptions. Section 3
studies the asymptotic properties of our HSIC-based tests. A residual bootstrap method
is provided in Section 4. Simulation
results are reported in Section 5. One real example is presented in
Section 6. Concluding remarks are offered in Section 7. The proofs
are provided in the Appendix.

Throughout the paper, $\mathcal{R}=(-\infty,\infty)$, $C$ is a
generic constant, $I_{s}$ is the $s\times s$ identity matrix,
$1_{s}$ is the
 $s\times1$ vector of ones, $\otimes$ is the Kronecker product,  $A^{T}$ is the transpose  of matrix $A$,
 $\|A\|$ is the Euclidean norm of
 matrix $A$,  $vec(A)$ is the vectorization of $A$,
 $vech(A)$ is the half vectorization of $A$,
 $D(A)$ is the diagonal matrix whose main diagonal is the main diagonal of
matrix $A$,
 $\partial_{x}h$ denotes the partial derivative with respect to $x$ for any function $h(x,y,\cdots)$,
 $o_{p}(1)(O_{p}(1))$ denotes a sequence of random numbers
 converging to zero (bounded) in probability,
 ``$\to_{d}$'' denotes convergence in distribution,
 and ``$\to_{p}$'' denotes convergence in probability.


\section{The HSIC-based test statistics}

\subsection{Review of the Hilbert-Schmidt Independence Criterion}

In this subsection, we briefly review the Hilbert-Schmidt
independence criterion (HSIC) for testing the independence of two
random vectors; see, e.g., Gretton et al. (2005) and Gretton et al.
(2008) for more details.

Let $\mathcal{U}$ be a metric space, and
$k: \mathcal{U}\times \mathcal{U}\to \mathcal{R}$ be a symmetric and positive definite (i.e., $\sum_{i,j}c_{i}c_{j}k(x_{i},x_{j})\geq0$ for all
$c_{i}\in\mathcal{R}$) kernel
function. There exists a Hilbert space $\mathcal{H}$ (called {\it
Reproducing Kernel Hilbert Space} (RKHS)) of functions $f: \mathcal{U}\to \mathcal{R}$ with
inner product $\langle \cdot,\cdot \rangle$ such that
\begin{flalign}
(i)&\,\, k(u,\cdot)\in\mathcal{H},\,\,\,\,\,\,\forall u\in\mathcal{U};\label{2.1}\\
(ii)&\,\,\langle f,k(u,\cdot)\rangle=f(u),\,\,\,\,\,\forall
f\in\mathcal{H} \mbox{ and }\forall u\in\mathcal{U}.\label{2.2}
\end{flalign}
For any Borel probability measure $P$ defined on $\mathcal{U}$, its {\it mean element}
$\mu[P]\in\mathcal{H}$ is defined as follows:
\begin{flalign}
E[f(U)]=\langle f,\mu[P]\rangle,\,\,\,\,\forall
f\in\mathcal{H},\label{2.3}
\end{flalign}
where the random variable $U\sim P$. From (\ref{2.2})-(\ref{2.3}), we have
$\mu[P](u)=\langle k(\cdot,u),\mu[P]\rangle=E[k(U,u)]$.
Furthermore, we say that $\mathcal{H}$ is {\it characteristic} if
and only if the map $ P\to \mu[P] $ is injective on the space
$\mathcal{P}:=\{P:\int_{\mathcal{U}}k(u,u)dP(u)<\infty\}$.

Likewise, let $\mathcal{G}$ be a second RKHS on a metric space
$\mathcal{V}$ with kernel $l$. Let $P_{uv}$ be a Borel probability
measure defined on $\mathcal{U}\times\mathcal{V}$, and let $P_{u}$
and $P_{v}$ denote the respective marginal distributions on
$\mathcal{U}$ and $\mathcal{V}$, respectively. Assume that
\begin{flalign}
E[k(U,U)]<\infty\,\,\,\,\,\mbox{ and
}\,\,\,\,\,E[l(V,V)]<\infty,\label{2.4}
\end{flalign}
where the random variable $(U,V)\sim P_{uv}$. The HSIC of
$P_{uv}$ is defined as
\begin{flalign*}
\Pi(U,V):&=E_{U,V}E_{U',V'}[k(U,U')l(V,V')]+E_{U}E_{U'}E_{V}E_{V'}[k(U,U')l(V,V')]\nonumber\\
&\quad\quad\quad-2E_{U,V}E_{U'}E_{V'}[k(U,U')l(V,V')],
\end{flalign*}
where $(U',V')$ is an i.i.d. copy of $(U,V)$, and $E_{\xi,\zeta}$ (or $E_{\xi}$) denotes the expectation over $(\xi,\zeta)$ (or $\xi$).
Following  Sejdinovic et al. (2013), if (\ref{2.4})
holds and both $\mathcal{H}$ and $\mathcal{G}$ are characteristic,
then
$$\Pi(U,V)=0\,\,\,\,\mbox{ if and only if }\,\,\,\, P_{uv}=P_{u}\times P_{v}.$$
Therefore, we can test the independence of $U$
and $V$ by examining whether $\Pi(U,V)$ is significantly different
from zero.

Suppose the samples $\{(U_{i},V_{i})\}_{i=1}^{n}$ are
from $P_{uv}$. Following  Gretton et al. (2005), the empirical estimator of
$\Pi(U,V)$ is
\begin{flalign}
\Pi_{n}&=\frac{1}{n^{2}}\sum_{i,j}k_{ij}l_{ij}+\frac{1}{n^{4}}\sum_{i,j,q,r}k_{ij}l_{qr}-\frac{2}{n^{3}}\sum_{i,j,q}k_{ij}l_{iq}\label{2.5}\\
&=\frac{1}{n^{2}}\mbox{trace}(KHLH),\label{2.6}
\end{flalign}
where $k_{ij}=k(U_{i},U_{j})$, $l_{ij}=l(V_{i},V_{j})$, $K=(k_{ij})$
and $L=(l_{ij})$ are $n\times n$ matrices with entries $k_{ij}$ and
$l_{ij}$, respectively, and $H=I_{n}-(1_{n}1_{n}^{T})/n$.
Here, each index of the summation $\sum$
is taken from 1 to $n$.
If $\{(U_{i},V_{i})\}_{i=1}^{n}$ are i.i.d. samples,
Gretton et al. (2005) showed that $\Pi_{n}$ is a
consistent estimator of $\Pi(U,V)$.

In order to compute $\Pi_{n}$, we need to choose the kernel
functions $k$ and $l$. In the sequel, we assume $\mathcal{U}=\mathcal{R}^{\kappa_{1}}$ and $\mathcal{V}= \mathcal{R}^{\kappa_{2}}$
for two positive integers $\kappa_{1}$ and $\kappa_{2}$.
Then, some well known choices (see Peters, 2008; Zhang et al. 2017)  for $k$ (or $l$) are given below:
\begin{flalign*}
[\mbox{Gaussian kernel}]&:
k(u,u')=\exp\left(-\frac{\|u-u'\|^{2}}{2\sigma^{2}}\right)\,\,\,\mbox{for some}\,\,\,\sigma>0;\\
[\mbox{Laplace kernel}]&:
k(u,u')=\exp\left(-\frac{\|u-u'\|}{\sigma}\right)\,\,\,\mbox{for
some}\,\,\,\sigma>0;\\
[\mbox{Inverse multi-quadratics kernel}]&:
k(u,u')=\frac{1}{(\beta+\|u-u'\|)^{\alpha}}\,\,\,\mbox{for
some}\,\,\,\alpha,\beta>0;\\
{\bf[\mbox{Fractional Brownian motion kernel}]}&:
k(u,u')=\frac{1}{2}(\|u\|^{2h}+\|u'\|^{2h}-\|u-u'\|^{2h}),\\
 &\,\,\,\,\,\,\,\mbox{for some}\,\,\,0<h<1.
\end{flalign*}

We shall highlight that the HSIC is easy-to-implement in multivariate cases, since
the computation cost of $\Pi_{n}$ is $O(n^{2})$
regardless of the dimensions of $U$ and $V$, and
many softwares can calculate (\ref{2.6}) very fast.


\subsection{Test statistics}
Consider two multivariate time series $Y_{1t}$ and $Y_{2t}$, where
$Y_{1t}\in\mathcal{R}^{d_{1}}$ and $Y_{2t}\in\mathcal{R}^{d_{2}}$.
Assume that each $Y_{st}$ ($s=1$ or $2$ hereafter) admits the
following specification:
\begin{flalign}
Y_{st}=f_{s}(I_{st-1},\theta_{s0},\eta_{st}),\label{2.8}
\end{flalign}
where $I_{st}=(Y_{st}^{T},Y_{st-1}^{T},\cdots)^{T}\in\mathcal{R}^{\infty}$ is
the information set at time $t$, $\theta_{s0}\in\mathcal{R}^{p_{s}}$
is the true but unknown parameter value of model (\ref{2.8}),
$\eta_{st}\in\mathcal{R}^{d_{s}}$ is a sequence of i.i.d.
innovations such that $\eta_{st}$ and $\mathcal{F}_{st-1}$ are
independent,
$\mathcal{F}_{st}:=\sigma(I_{st})$ is a sigma-field, and $f_{s}:
\mathcal{R}^{\infty}\times\mathcal{R}^{p_{s}}\times\mathcal{R}^{d_{s}}\to
\mathcal{R}^{d_{s}}$ is a known measurable function. Model
(\ref{2.8}) is rich enough to cover many often used models, e.g.,
the vector AR model in Sim (1980), the BEKK model in Engle and
Kroner (1995), the dynamic correlation model in Tse and Tsui (2002),
and the vector ARMA-GARCH model in Ling and McAleer (2003) to name a
few; see also L\"{u}tkepohl (2005), Bauwens et al. (2006),
Silvennoinen and Ter\"{a}svirta (2008), Francq  and Zako\"{i}an (2010), and Tsay (2014) for surveys.

Model (\ref{2.8}) ensures that each $Y_{st}$ admits a
dynamical system generated by the innovation sequence
$\{\eta_{st}\}$. A practical question is whether either one of the
dynamical systems should include the information from the other one,
and this is equivalent to testing the null hypothesis:
\begin{flalign}\label{null_1}
H_{0}: \{\eta_{1t}\} \mbox{ and } \{\eta_{2t}\} \mbox{ are
independent.}
\end{flalign}
If $H_{0}$ is accepted, we can separately study
these two systems; otherwise, we may use the information of one
system to get a better prediction of the other system.
Let $m$ be a given integer. Most of the conventional testing methods for $H_{0}$ in (\ref{null_1})
aim to detect the linear dependence between $\eta_{1t}$ and $\eta_{2t+m}$ (or their higher moments)
via their cross-correlations. Below,
we apply HSIC to examine the general dependence between $\eta_{1t}$ and $\eta_{2t+m}$.

%

To introduce our HSIC-based tests, we need some more notations. Let $\theta_{s}=(\theta_{s1},\theta_{s2},$ $\cdots,\theta_{sp_{s}})\in\Theta_{s}\subset\mathcal{R}^{p_{s}}$ be the unknown parameter of
model (\ref{2.8}), where $\Theta_{s}$ is a compact parametric space.
Assume that $\theta_{s0}$ is an interior point of $\Theta_{s}$, and $Y_{st}$ admits a causal
representation, i.e.,
\begin{flalign}\label{causal_rep}
\eta_{st}=g_{s}(Y_{st},I_{st-1},\theta_{s0}),
\end{flalign}
where $g_{s}:
\mathcal{R}^{d_{s}}\times\mathcal{R}^{\infty}\times\mathcal{R}^{p_{s}}\to\mathcal{R}^{d_{s}}$
is a measurable function. Moreover, based on the observations $\{Y_{st}\}_{t=1}^{n}$ and (possibly)  some assumed initial values,
we let
\begin{flalign}\label{residual}
\widehat{\eta}_{st}:=g_{s}(Y_{st},\widehat{I}_{st-1},\widehat{\theta}_{sn})
\end{flalign}
be the residual of model (\ref{2.8}), where
 $\widehat{\theta}_{sn}$ is an estimator of
$\theta_{s0}$, and $\widehat{I}_{st}$ is the observed information set up to time $t$.


As for (\ref{2.5})-(\ref{2.6}), our single HSIC-based
test statistic on $\widehat{\eta}_{1t}$ and $\widehat{\eta}_{2t+m}$ is
\begin{flalign}
S_{1n}(m):=\Pi(\widehat{\eta}_{1t},\widehat{\eta}_{2t+m})
&=\frac{1}{N^{2}}\sum_{i,j}\widehat{k}_{ij}\widehat{l}_{ij}+\frac{1}{N^{4}}\sum_{i,j,q,r}\widehat{k}_{ij}\widehat{l}_{qr}
-\frac{2}{N^{3}}\sum_{i,j,q}\widehat{k}_{ij}\widehat{l}_{iq}\nonumber\\
&=\frac{1}{N^{2}}\mbox{trace}(\widehat{K}H\widehat{L}H)\label{2.10}
\end{flalign}
for $m\geq0$, where $\widehat{k}_{ij}=k(\widehat{\eta}_{1i},\widehat{\eta}_{1j})$,
$\widehat{l}_{ij}=l(\widehat{\eta}_{2i+m},\widehat{\eta}_{2j+m})$, and
$\widehat{K}=(\widehat{k}_{ij})$ and
$\widehat{L}=(\widehat{l}_{ij})$ are $N\times N$ matrices with
entries $\widehat{k}_{ij}$ and $\widehat{l}_{ij}$, respectively.
Here, the effective sample size $N=n-m$, and each index of the summation
is taken from $1$ to $N$.
Likewise, our single HSIC-based
test statistic on $\widehat{\eta}_{1t+m}$ and $\widehat{\eta}_{2t}$ is
\begin{flalign}
S_{2n}(m):=\Pi(\widehat{\eta}_{1t+m},\widehat{\eta}_{2t})\label{2.11}
\end{flalign}
for $m\geq0$.
Clearly, $S_{1n}(0)=S_{2n}(0)$.

With the help of the single HSIC-based test statistics, we can further define the joint HSIC-based test statistics as follows:
\begin{flalign}
J_{1n}(M):=\sum_{m=0}^{M}S_{1n}(m)\,\,\,\mbox{ and }
\,\,\,J_{2n}(M):=\sum_{m=0}^{M}S_{2n}(m)\label{2.12}
\end{flalign}
for some specified integer $M\geq0$. The joint test statistic
$J_{1n}(M)$ or $J_{2n}(M)$ can detect the general dependence structure of
two innovations up to certain lag $M$, while
the single test statistic $S_{1n}(m)$ or $S_{2n}(m)$ is used to examine the general dependence structure of
two innovations at a specific lag $m$.


\section{Asymptotic theory} This section studies the asymptotics of
our HSIC-based test statistics $S_{1n}(m)$ and $J_{1n}(M)$.
The asymptotics of $S_{2n}(m)$ and $J_{2n}(M)$ can be derived similarly, and hence the details
are omitted for simplicity.

\subsection{Technical conditions}
To derive our asymptotic theory, the following assumptions are needed.

\begin{ass}
$Y_{st}$ is strictly stationary and ergodic.\label{asm2.1}
\end{ass}

\begin{ass}\label{asm2.2}
(i) The function $g_{st}(\theta_{s}):=g_{s}(Y_{st},I_{st-1},\theta_{s})$ satisfies that
\begin{flalign*}
&E\left[\sup_{\theta_{s}}\left\|\frac{\p
g_{st}(\theta_{s})}{\p\theta_{si}}\right\|\right]^{2}<\infty,\,\,\,\,\,\,\,\,\,\,
E\left[\sup_{\theta_{s}}\left\|\frac{\p^{2}
g_{st}(\theta_{s})}{\p\theta_{si}\p\theta_{sj}}\right\|\right]^{2}<\infty,\\
&\,\,\,\,\,\,\,\mbox{and}\,\,\,\,\,\,\,\,\,\,
E\left[\sup_{\theta_{s}}\left\|\frac{\p^{3}
g_{st}(\theta_{s})}{\p\theta_{si}\p\theta_{sj}\p\theta_{sq}}\right\|\right]^{2}<\infty,
\end{flalign*}
for any $i$, $j$, $q\in\{1,\cdots,p_{s}\}$, where $g_{s}$ is defined as in (\ref{causal_rep}).

(ii) $\sum_{j=0}^{\infty} \beta_{\eta}(j)^{c/(2+c)}<\infty$ for some $c>0$, where
$\beta_{\eta}(j)$ is the $\beta$-mixing coefficient of $\{(\eta_{1t}^{T},\eta_{2t}^{T})^{T}\}$.
\end{ass}

\begin{ass}
The estimator $\widehat{\theta}_{sn}$ given in (\ref{residual}) satisfies that
\begin{flalign}
\sqrt{n}(\widehat{\theta}_{sn}-\theta_{s0})&=\frac{1}{\sqrt{n}}\sum_{t}\pi_{s}(Y_{st},I_{st-1},\theta_{s0})+o_{p}(1)\nonumber\\
&=:
\frac{1}{\sqrt{n}}\sum_{t}\pi_{st}+o_{p}(1),\label{2.9}
\end{flalign}
where
$\pi_{s}:
\mathcal{R}^{d_{s}}\times\mathcal{R}^{\infty}\times\mathcal{R}^{p_{s}}\to\mathcal{R}^{p_{s}}$
is a measurable function,
$E(\pi_{st}|\mathcal{F}_{st-1})=0$, and
$E\|\pi_{st}\|^{2}<\infty$.\label{asm2.3}
\end{ass}

\begin{ass}
For $\widehat{R}_{st}(\theta_{s}):=\widehat{g}_{st}(\theta_{s})-
g_{st}(\theta_{s})$,
$$\sum_{t}\sup_{\theta_{s}}\|\widehat{R}_{st}(\theta_{s})\|^{3}=O_{p}(1),$$
where
$\widehat{g}_{st}(\theta_{s})=g_{s}(Y_{st},\widehat{I}_{st-1},\theta_{s})$, and
$\widehat{I}_{st}$ is defined as in (\ref{residual}).\label{asm2.4}
\end{ass}

\begin{ass}
The kernel functions $k$ and $l$ are symmetric, and both of them
and their partial derivatives up to second order are all uniformly bounded and Lipschitz continuous,
that is,
\begin{flalign*}
(i)\,\,\sup_{x,y}\|p(x,y)\|\leq C;
(ii)\,\,\|p(x_{1},y_{1})-p(x_{2},y_{2})\|\leq C\|(x_{1},y_{1})-(x_{2},y_{2})\|,
\end{flalign*}
for $p=k,k_{x},k_{y},k_{xx},k_{xy},k_{yy},
l,l_{x},l_{y},l_{xx},l_{xy},l_{xy}$, where $k_{x}=\p_{x} k(x,y)$, $k_{xy}=\p_{x}\p_{y} k(x,y)$, $l_{x}=\p_{x} l(x,y)$, and $l_{xy}=\p_{x}\p_{y} l(x,y)$.
\label{asm2.5}
\end{ass}

We offer some remarks on the above assumptions. Assumption
\ref{asm2.1} is standard for time series models. Assumption
\ref{asm2.2}(i) requires some technical moment conditions for the partial
derivatives of $g_{st}$. Assumption
\ref{asm2.2}(ii) presents some temporal dependence condition on the joint sequence $\{(\eta_{1t}^{T},\eta_{2t}^{T})^{T}\}$.
Assumption \ref{asm2.3} is satisfied under mild conditions
for most estimators,  such as (quasi) maximum likelihood estimator (MLE), least squares estimator (LSE), nonlinear least squares estimator (NLSE) and their robust modifications; see, e.g., Comte and Lieberman (2003), L\"{u}tkepohl (2005),  and Hafner and Preminger (2009) for more details. Assumption
\ref{asm2.4} is a condition on the truncation of the information set
$\widehat{I}_{st-1}$ and is similar to
Assumption A5 in Escanciano (2006). Assumption \ref{asm2.5} gives
some restrictive conditions for kernel functions $k$ and $l$; these conditions may exclude some kernel functions such as the fractional Brownian motion kernel, but they are usually satisfied by the often used Gaussian kernel, Laplace kernel and inverse multi-quadratics kernel.  The conditions in Assumptions \ref{asm2.1}-\ref{asm2.5} may be further relaxed, but they are convenient for
presenting our proofs in a simple way.

\subsection{Some lemmas}

This subsection gives some useful lemmas, which are key to study the asymptotics of our test statistics.

Before introducing these lemmas, we present some notations. Let
\begin{flalign}
\overline{k}_{ij}&=\frac{\p
g_{1i}(\theta_{10})}{\p\theta_{1}}k_{x}(\eta_{1i},\eta_{1j})+
\frac{\p
g_{1j}(\theta_{10})}{\p\theta_{1}}k_{y}(\eta_{1i},\eta_{1j}),\label{3.1}\\
\overline{l}_{qr}&=\frac{\p
g_{2q+m}(\theta_{20})}{\p\theta_{2}}l_{x}(\eta_{2q+m},\eta_{2r+m})+
\frac{\p
g_{2r+m}(\theta_{20})}{\p\theta_{2}}l_{y}(\eta_{2q+m},\eta_{2r+m}),\label{3.2}\\
\widecheck{k}_{ij}&= \left(\frac{\p
g_{1i}(\theta_{10})}{\p\theta_{1}}, \frac{\p
g_{1j}(\theta_{10})}{\p\theta_{1}}\right)\left(
\begin{array}{cc}
k_{xx}(\eta_{1i},\eta_{1j}) & k_{xy}(\eta_{1i},\eta_{1j})\\
k_{xy}(\eta_{1i},\eta_{1j}) & k_{yy}(\eta_{1i},\eta_{1j})
\end{array}
\right)\nonumber\\
&\quad\quad\times\left(\frac{\p g_{1i}(\theta_{10})}{\p\theta_{1}},
\frac{\p g_{1j}(\theta_{10})}{\p\theta_{1}}\right)^{T},\label{3.3}\\
\widecheck{l}_{qr}&= \left(\frac{\p
g_{2q+m}(\theta_{20})}{\p\theta_{2}}, \frac{\p
g_{2r+m}(\theta_{20})}{\p\theta_{2}}\right)\nonumber\\
&\quad\quad\times\left(
\begin{array}{cc}
l_{xx}(\eta_{2q+m},\eta_{2r+m}) & l_{xy}(\eta_{2q+m},\eta_{2r+m})\\
l_{xy}(\eta_{2q+m},\eta_{2r+m}) & l_{yy}(\eta_{2q+m},\eta_{2r+m})
\end{array}
\right)\nonumber\\
&\quad\quad\quad\quad\times\left(\frac{\p g_{2q+m}(\theta_{20})}{\p\theta_{2}},
\frac{\p
g_{2r+m}(\theta_{20})}{\p\theta_{2}}\right)^{T}\label{3.4}
\end{flalign}
for $i, j, q, r\in\{1, 2, \cdots, N\}$. With these notations, define
\begin{flalign}
S_{1n}^{(0)}(m)&=\frac{1}{N^{2}}\sum_{i,j}k_{ij}l_{ij}+\frac{1}{N^{4}}\sum_{i,j,q,r}k_{ij}l_{qr}
-\frac{2}{N^{3}}\sum_{i,j,q}k_{ij}l_{iq}, \label{3.5}\\
S_{1n}^{(ab)}(m)&=\frac{1}{N^{2}}\sum_{i,j}k_{ij}^{(ab)}l_{ij}^{(ab)}+\frac{1}{N^{4}}\sum_{i,j,q,r}k_{ij}^{(ab)}l_{qr}^{(ab)}
-\frac{2}{N^{3}}\sum_{i,j,q}k_{ij}^{(ab)}l_{iq}^{(ab)} \label{3.6}
\end{flalign}
for $a\in\{1,2\}$ and $b\in\{1,\cdots,a+1\}$, where
$k_{ij}^{(11)}=\overline{k}_{ij}$, $l_{ij}^{(11)}=l_{ij}$,
$k_{ij}^{(12)}=k_{ij}$, $l_{ij}^{(12)}=\overline{l}_{ij}$,
$k_{ij}^{(21)}=\widecheck{k}_{ij}$, $l_{ij}^{(21)}=l_{ij}$,
$k_{ij}^{(22)}=k_{ij}$, $l_{ij}^{(22)}=\widecheck{l}_{ij}$,
$k_{ij}^{(23)}=\overline{k}_{ij}$, and
$l_{ij}^{(23)}=\overline{l}_{ij}^{T}$.
Then,
$S_{1n}^{(0)}(m)$ can be expressed as the $V$-statistic of the form (see Gretton et al. 2005):
\begin{flalign*}
S_{1n}^{(0)}(m)=\frac{1}{N^{4}}\sum_{i,j,q,r}h^{(0)}_{m}(\eta_{i}^{(m)},\eta_{j}^{(m)},\eta_{q}^{(m)},\eta_{r}^{(m)})
\end{flalign*}
for some symmetric kernel $h^{(0)}_{m}$ given by
\begin{flalign*}
h^{(0)}_{m}(\eta_{i}^{(m)},\eta_{j}^{(m)},\eta_{q}^{(m)},\eta_{r}^{(m)})=
\frac{1}{4!}\sum_{(t,u,v,w)}^{(i,j,q,r)}\left(k_{tu}l_{tu}+k_{tu}l_{vw}-2k_{tu}l_{tv}\right),
\end{flalign*}
where the sum is taken over all $4!$ permutations of $(i,j,q,r)$, and $\eta_{t}^{(m)}=(\eta_{1t},\eta_{2t+m})\in\mathcal{R}^{d_{1}}\times\mathcal{R}^{d_{2}}$.
Likewise, all $S_{1n}^{(ab)}(m)$ can be expressed as the $V$-statistics
for the symmetric kernel $h^{(ab)}_{m}$ given by
\begin{flalign*}
h^{(ab)}_{m}(\varsigma_{i}^{(m)},\varsigma_{j}^{(m)},\varsigma_{q}^{(m)},\varsigma_{r}^{(m)})
=\frac{1}{4!}\sum_{(t,u,v,w)}^{(i,j,q,r)}\left(k_{tu}^{(ab)}l_{tu}^{(ab)}+
k_{tu}^{(ab)}l_{vw}^{(ab)}-2k_{tu}^{(ab)}l_{tv}^{(ab)}\right),
\end{flalign*}
where the sum is taken over all $4!$ permutations of $(i,j,q,r)$, and
$$\varsigma_{t}^{(m)}=\left(\eta_{1t},\frac{\p
g_{1t}(\theta_{10})}{\p\theta_{1}},\eta_{2t+m},\frac{\p g_{2t+m}(\theta_{20})}{\p\theta_{2}}\right)
\in \mathcal{R}^{d_{1}}\times
\mathcal{R}^{p_{1}\times d_{1}}\times \mathcal{R}^{d_{2}}\times \mathcal{R}^{p_{2}\times d_{2}}.$$

Now, we are ready to introduce these three lemmas.
The first lemma below gives an important expansion of $S_{1n}(m)$.

\begin{lem} \label{lem3.1}
$S_{1n}(m)$ admits the following expansion:
\begin{flalign*} 
S_{1n}(m)&=S_{1n}^{(0)}(m)+\zeta_{1n}^{T}S_{1n}^{(11)}(m)+\zeta_{2n}^{T}S_{1n}^{(12)}(m)\nonumber\\
&\quad+
\frac{1}{2}\zeta_{1n}^{T}S_{1n}^{(21)}(m)\zeta_{1n}+\frac{1}{2}\zeta_{2n}^{T}S_{1n}^{(22)}(m)\zeta_{2n}+
\zeta_{1n}^{T}S_{1n}^{(23)}(m)\zeta_{2n}+R_{1n}(m),
\end{flalign*}
where $S_{1n}^{(0)}(m)$ and $S_{1n}^{(ab)}(m)$ are defined as in (\ref{3.5}) and (\ref{3.6}), respectively,
$R_{1n}(m)$ is the remainder term,
and $\zeta_{sn}=\widehat{\theta}_{sn}-\theta_{s0}$.
\end{lem}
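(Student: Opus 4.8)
The plan is to treat $S_{1n}(m)=\frac{1}{N^{2}}\mathrm{trace}(\widehat{K}H\widehat{L}H)$ as a \emph{bilinear} functional of the pair of Gram matrices $(\widehat{K},\widehat{L})$ and to reduce the lemma to a second-order Taylor expansion of the individual kernel entries in the residuals. First I would decompose the residual errors. Splitting $\widehat{g}_{1i}(\widehat{\theta}_{1n})-g_{1i}(\theta_{10})=[\widehat{g}_{1i}(\widehat{\theta}_{1n})-g_{1i}(\widehat{\theta}_{1n})]+[g_{1i}(\widehat{\theta}_{1n})-g_{1i}(\theta_{10})]$, the first bracket is $\widehat{R}_{1i}(\widehat{\theta}_{1n})$, and a Taylor expansion of the second bracket in $\theta_{1}$ (legitimate by Assumption \ref{asm2.2}(i)) yields
\begin{flalign*}
\widehat{\eta}_{1i}-\eta_{1i}=\left(\frac{\p g_{1i}(\theta_{10})}{\p\theta_{1}}\right)^{T}\zeta_{1n}+\widehat{R}_{1i}(\widehat{\theta}_{1n})+\rho_{1i},
\end{flalign*}
where $\rho_{1i}$ gathers the quadratic-and-higher remainder in $\theta_{1}$ and is controlled by $\|\zeta_{1n}\|^{2}\sup_{\theta_{1}}\|\p^{2}g_{1i}(\theta_{1})/\p\theta_{1i}\p\theta_{1j}\|$; the same decomposition holds for $\widehat{\eta}_{2,i+m}-\eta_{2,i+m}$ with $(\theta_{2},\zeta_{2n},\widehat{R}_{2,i+m},\rho_{2,i+m})$.

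Next I would expand each kernel entry to second order. By Assumption \ref{asm2.5} the kernels are $C^{2}$ with bounded Lipschitz derivatives up to order two, so $\widehat{k}_{ij}-k_{ij}$ equals the linear part $k_{x}(\eta_{1i},\eta_{1j})^{T}(\widehat{\eta}_{1i}-\eta_{1i})+k_{y}(\eta_{1i},\eta_{1j})^{T}(\widehat{\eta}_{1j}-\eta_{1j})$, plus the exact quadratic form $Q^{k}_{ij}$ in $(\widehat{\eta}_{1i}-\eta_{1i},\widehat{\eta}_{1j}-\eta_{1j})$ built from $k_{xx},k_{xy},k_{yy}$ at $(\eta_{1i},\eta_{1j})$, plus a third-order Lagrange remainder $r^{k}_{ij}$; likewise for $\widehat{l}_{ij}$. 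Substituting the residual decomposition and keeping, in every argument, only the leading term $(\p g_{1i}(\theta_{10})/\p\theta_{1})^{T}\zeta_{1n}$, the linear-in-arguments part of $\widehat{k}_{ij}-k_{ij}$ collapses to $\zeta_{1n}^{T}\overline{k}_{ij}$ with $\overline{k}_{ij}$ as in (\ref{3.1}), and the leading piece of $\tfrac12 Q^{k}_{ij}$ collapses to $\tfrac12\zeta_{1n}^{T}\widecheck{k}_{ij}\zeta_{1n}$ with $\widecheck{k}_{ij}$ as in (\ref{3.3}); everything else (every term carrying a factor $\widehat{R}_{1i}$ or $\rho_{1i}$, plus $r^{k}_{ij}$) is placed in a remainder matrix $\Delta^{K}$. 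Analogously $\widehat{L}=L+\Delta_{1}^{L}+\Delta_{2}^{L}+\Delta^{L}$ with $\Delta_{1}^{L}$, $\Delta_{2}^{L}$ having $(i,j)$ entries $\zeta_{2n}^{T}\overline{l}_{ij}$ and $\tfrac12\zeta_{2n}^{T}\widecheck{l}_{ij}\zeta_{2n}$. Writing $\widehat{K}=K+\Delta_{1}^{K}+\Delta_{2}^{K}+\Delta^{K}$ in this entrywise sense, I would plug both decompositions into $\frac{1}{N^{2}}\mathrm{trace}(\widehat{K}H\widehat{L}H)$ and expand by bilinearity into the $16$ pairings of the four summands of $\widehat{K}$ with the four of $\widehat{L}$; since $\zeta_{1n}$ and $\zeta_{2n}$ factor out of each trace, the six pairings $(K,L)$, $(\Delta_{1}^{K},L)$, $(K,\Delta_{1}^{L})$, $(\Delta_{2}^{K},L)$, $(K,\Delta_{2}^{L})$, $(\Delta_{1}^{K},\Delta_{1}^{L})$ reproduce $S_{1n}^{(0)}(m)$, $\zeta_{1n}^{T}S_{1n}^{(11)}(m)$, $\zeta_{2n}^{T}S_{1n}^{(12)}(m)$, $\tfrac12\zeta_{1n}^{T}S_{1n}^{(21)}(m)\zeta_{1n}$, $\tfrac12\zeta_{2n}^{T}S_{1n}^{(22)}(m)\zeta_{2n}$ and $\zeta_{1n}^{T}S_{1n}^{(23)}(m)\zeta_{2n}$ respectively — the transpose in $l^{(23)}_{ij}=\overline{l}_{ij}^{T}$ arising precisely because the last pairing produces $(\zeta_{1n}^{T}\overline{k}_{ij})(\overline{l}_{ij}^{T}\zeta_{2n})$ — while the remaining ten pairings (those containing $\Delta^{K}$ or $\Delta^{L}$, together with $(\Delta_{1}^{K},\Delta_{2}^{L})$, $(\Delta_{2}^{K},\Delta_{1}^{L})$, $(\Delta_{2}^{K},\Delta_{2}^{L})$) are collected into $R_{1n}(m)$. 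Finally I would check that each $\frac{1}{N^{2}}\mathrm{trace}(\cdot\,H\cdot H)$ so produced equals the $V$-statistic $S_{1n}^{(ab)}(m)$ in (\ref{3.6}) via the same $4!$-symmetrization identity that turns (\ref{2.5}) into (\ref{2.6}).

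The identity is ultimately algebraic once the two Taylor expansions are in place, so the principal difficulty is bookkeeping: correctly classifying the many cross-products (which are among the six displayed terms and which fall into $R_{1n}(m)$), and making sure that the derivatives $\p^{2}g_{st}$ and the truncation terms $\widehat{R}_{st}$ may legitimately be evaluated at the data-dependent point $\widehat{\theta}_{sn}$ — this is exactly why the $\sup_{\theta_{s}}$ moment bounds of Assumption \ref{asm2.2}(i) and the $\sup_{\theta_{s}}$ summability of Assumption \ref{asm2.4} are imposed, while the $\sqrt{n}$-expansion of Assumption \ref{asm2.3} guarantees $\zeta_{sn}=O_{p}(n^{-1/2})$ so that the classification by powers of $\zeta_{sn}$ is meaningful. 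I would not attempt to bound $R_{1n}(m)$ here: its negligibility is a separate matter handled later, and the content of this lemma is only that the six leading terms are exactly as stated, with $R_{1n}(m)$ defined as the remainder.
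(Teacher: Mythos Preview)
Your approach is correct and essentially mirrors the paper's: both arguments perform a second-order Taylor expansion of the kernel entries in the residuals, then substitute the first-order expansion of $\widehat{\eta}_{st}-\eta_{st}$ in $\theta_s$, and collect terms by powers of $\zeta_{1n},\zeta_{2n}$. The only organizational difference is that the paper expands the \emph{product} $\widehat{z}_{ijqr}=\widehat{k}_{ij}\widehat{l}_{qr}$ jointly as a function of the stacked vector $(\widehat{\eta}_{1i},\widehat{\eta}_{1j},\widehat{\eta}_{2,q+m},\widehat{\eta}_{2,r+m})$---so the cross term $\overline{k}_{ij}\overline{l}_{qr}^{T}$ appears directly as the off-diagonal block of the Hessian of the product---whereas you expand $\widehat{k}_{ij}$ and $\widehat{l}_{qr}$ separately and recover the same cross term from the bilinear pairing $(\Delta_{1}^{K},\Delta_{1}^{L})$; the resulting remainders are organized differently but contain the same third-and-higher-order content.
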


The second lemma below is crucial in studying the asymptotics of
$S_{1n}^{(0)}(m)$ and $S_{1n}^{(ab)}(m)$ under $H_{0}$.

\begin{lem}\label{lem3.2}
Suppose Assumptions \ref{asm2.1}, \ref{asm2.2}(i) and \ref{asm2.5} hold. Then, under $H_{0}$,
\begin{flalign*}
(i)\,\, &E[h^{(0)}_{m}(x_{1},\eta_{2}^{(m)},\eta_{3}^{(m)},\eta_{4}^{(m)})]=0
\end{flalign*}
for all $x_{1}\in\mathcal{R}^{d_{1}}\times\mathcal{R}^{d_{2}}$;
\begin{flalign*}
(ii)\,\,&E[h^{(ab)}_{m}(x_{1},\varsigma_{2}^{(m)},\varsigma_{3}^{(m)},\varsigma_{4}^{(m)})]=0
\end{flalign*}
for all $x_{1}\in
\mathcal{R}^{d_{1}}\times \mathcal{R}^{p_{1}\times d_{1}}\times
\mathcal{R}^{d_{2}}\times \mathcal{R}^{p_{2}\times d_{2}}$ and each $a, b=1,2$;
\begin{flalign*}
(iii)\,\,&E[h^{(23)}_{m}(x_{1},\varsigma_{2}^{(m)},\varsigma_{3}^{(m)},\varsigma_{4}^{(m)})]=\Upsilon
\end{flalign*}
for all $x_{1}\in
\mathcal{R}^{d_{1}}\times \mathcal{R}^{p_{1}\times d_{1}}\times
\mathcal{R}^{d_{2}}\times \mathcal{R}^{p_{2}\times d_{2}}$, where
\begin{flalign*}
\Upsilon=&4E\left[\frac{\p g_{12}(\theta_{10})}{\p\theta_{1}}k_{x}(\eta_{12},\eta_{11})\right]
E\left[\frac{\p g_{22}(\theta_{20})}{\p\theta_{2}}l_{x}(\eta_{22},\eta_{21})-\frac{\p g_{23}(\theta_{20})}{\p\theta_{2}}l_{x}(\eta_{23},\eta_{21})\right]\\
&+4E\left[\frac{\p g_{13}(\theta_{10})}{\p\theta_{1}}k_{x}(\eta_{13},\eta_{11})\right]
E\left[\frac{\p g_{23}(\theta_{20})}{\p\theta_{2}}l_{x}(\eta_{23},\eta_{21})-\frac{\p g_{22}(\theta_{20})}{\p\theta_{2}}l_{x}(\eta_{22},\eta_{21})\right].
\end{flalign*}
\end{lem}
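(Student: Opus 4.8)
The plan is to reduce all three claims to bounded, finite computations by exploiting two facts available under $H_{0}$: (a) $\{\eta_{1t}\}$ and $\{\eta_{2t}\}$ are independent, so that any measurable function of the first innovation sequence is independent of any measurable function of the second; and (b) by the causal representation (\ref{causal_rep}), the gradient $\p g_{st}(\theta_{s0})/\p\theta_{s}$ is $\mathcal{F}_{st}$-measurable, hence a function of $\{\eta_{st'}:t'\le t\}$ only. Combining these with the i.i.d.\ property of each $\{\eta_{st}\}$ and with Assumptions \ref{asm2.1}, \ref{asm2.2}(i) and \ref{asm2.5} (the last of which permits differentiation under the expectation sign), I will use throughout the elementary identities $E[k(u,\eta_{1t})]=:k^{*}(u)$, $E[k_{x}(u,\eta_{1t})]=\p_{u}k^{*}(u)$, $\mu_{k}:=E[k^{*}(\eta_{1t})]$, and their $l$-analogues; these allow one to integrate out whichever innovation sits at the latest time among the arguments of a given kernel evaluation.

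For part (i), I would rewrite the symmetrized kernel $h^{(0)}_{m}$ as the average over the $4!$ permutations of its three canonical groups of monomials, namely those proportional to $k_{tu}l_{tu}$, to $k_{tu}l_{vw}$, and to $k_{tu}l_{tv}$. Under $H_{0}$ the vectors $\eta_{2}^{(m)},\eta_{3}^{(m)},\eta_{4}^{(m)}$ are i.i.d.\ with mutually independent coordinates, so the conditional expectation of each group factorizes over its $k$- and $l$-parts and depends on $x_{1}$ only through $k^{*},l^{*}$ at its two coordinates. Collecting the three contributions, the four surviving terms $\mu_{k}\mu_{l}$, $k^{*}l^{*}$, $k^{*}\mu_{l}$ and $\mu_{k}l^{*}$ appear with coefficients that sum to zero; this is exactly the first-order degeneracy of the HSIC $V$-statistic under independence (Gretton et al.\ 2005), re-verified in the present notation, and it yields (i).

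For part (ii), the decisive point is that for each $(a,b)\in\{(1,1),(1,2),(2,1),(2,2)\}$ exactly one side of $h^{(ab)}_{m}$ is the unperturbed kernel — $l_{ij}$ when $(a,b)\in\{(1,1),(2,1)\}$, and $k_{ij}$ otherwise — while the other side carries the gradient factors. I would condition on the entire innovation block feeding the perturbed side and integrate out only the coordinates feeding the plain side, which under $H_{0}$ are i.i.d.\ and independent of that block. Writing $p_{ij}$ for a generic plain-kernel evaluation, the HSIC pattern becomes $E[p_{tu}]+E[p_{vw}]-2E[p_{tv}]$; since $\{t,u\}$ and $\{v,w\}$ partition $\{1,2,3,4\}$, exactly one of $E[p_{tu}],E[p_{vw}]$ equals $p^{*}$ at the relevant coordinate of $x_{1}$ and the other equals the unconditional mean, so their sum is a permutation-independent constant $c$. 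The residual $-2E[p_{tv}]$ is then annihilated by the algebraic identity $\sum_{\mathrm{perm}}(\text{pert})_{tu}E[p_{tv}]=\tfrac12\,c\sum_{\mathrm{perm}}(\text{pert})_{tu}$, which one checks by grouping the permutations according to their shared index. Hence the inner conditional expectation vanishes identically, for every value of the gradient factors, and so does its outer expectation. Note that this argument never uses independence of the perturbed side's coordinates across the four slots, which is what makes it robust to the temporal dependence built into $\varsigma^{(m)}_{t}$.

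For part (iii) this device is unavailable, since in $h^{(23)}_{m}$ both sides, $\overline{k}$ and $\overline{l}^{T}$, carry gradient factors, so neither collapses to a two-valued quantity after integration. Here I would compute $E[h^{(23)}_{m}(x_{1},\varsigma_{2}^{(m)},\varsigma_{3}^{(m)},\varsigma_{4}^{(m)})]$ head-on. By the block independence in (a) it becomes a sum over the $4!$ permutations of outer products $E[\overline{k}_{tu}](E[\overline{l}_{t'u'}])^{T}$, one per monomial in the HSIC pattern; integrating out the latest innovation in each evaluation as in the first paragraph, each factor $E[\overline{k}_{ij}]$ splits into a lag-free piece of the form $E[\p_{\theta_{1}}g_{1t}(\theta_{10})\,\p_{u}k^{*}(\eta_{1t})]$, a piece depending on $x_{1}=(a_{1},A_{1},b_{1},B_{1})$ with the gradient factor replaced by $A_{1}$ (of the form $A_{1}\p_{u}k^{*}(a_{1})$), and an irreducible piece $E[\p_{\theta_{1}}g_{1j}(\theta_{10})\,k_{y}(\eta_{1i},\eta_{1j})]$ that depends on the time gap between the two slots; similarly for $\overline{l}$. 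Expanding the products and summing over permutations, the lag-free pieces, the $x_{1}$-dependent pieces and the longer-gap pieces cancel by the same partition/shared-index combinatorics used in (ii), leaving exactly the gap-one and gap-two contributions that make up $\Upsilon$. Carrying out this last accounting — tracking which permutations produce which gap combinations while carrying along the $p_{1}\times p_{2}$ matrix structure of the outer products — is the step I expect to be the main obstacle; everything else follows from the identities assembled in the first paragraph.
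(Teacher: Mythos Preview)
Your proposal is correct and follows essentially the same route as the paper: all three parts are obtained by direct computation of the symmetrized kernel's conditional expectation, exploiting the factorization $E[\,\cdot\,]=E_{\eta_{1}\text{-block}}\,E_{\eta_{2}\text{-block}}$ available under $H_{0}$ together with the i.i.d.\ structure of each innovation sequence. The paper organizes the calculation by splitting the $4!$ permutations according to which slot is pinned at $x_{1}$ (its four pieces $\Delta_{1},\dots,\Delta_{4}$, shown to cancel pairwise), whereas you group by monomial type and invoke the permutation identity $\sum_{\mathrm{perm}}(\text{pert})_{tu}E[p_{tv}]=\tfrac{c}{2}\sum_{\mathrm{perm}}(\text{pert})_{tu}$; these are two bookkeepings of the same cancellation. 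Your conditioning device in (ii)---integrating out only the unperturbed side so that the HSIC pattern collapses before the gradient factors are touched---is a slightly cleaner way to see why the temporal dependence in $\varsigma_{t}^{(m)}$ is harmless, but the paper arrives at the same pairwise vanishing $E\Delta_{1}^{(ab)}=-E\Delta_{2}^{(ab)}$, $E\Delta_{3}^{(ab)}=-E\Delta_{4}^{(ab)}$ by direct evaluation. One small wording issue in (iii): among slots $\{2,3,4\}$ the only gaps are $1$ and $2$, so there are no ``longer-gap pieces'' to cancel; what actually cancels are the lag-free and $x_{1}$-dependent pieces, and the surviving gap-$1$/gap-$2$ cross terms assemble into $\Upsilon$ exactly as in the paper's $E\Delta_{3}^{(23)}+E\Delta_{4}^{(23)}$.
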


By standard arguments for V-statistics (see, e.g., Lee (1990)), we have
$N[S_{1n}^{(0)}(m)]=N[V_{1n}^{(0)}(m)]+o_{p}(1)$, where
\begin{flalign}\label{3.7}
V_{1n}^{(0)}(m)=\frac{1}{N^{2}}\sum_{i,j}h_{2m}^{(0)}(\eta_{i}^{(m)},\eta_{j}^{(m)})
\end{flalign}
is the $V$-statistic with the kernel function
\begin{flalign}\label{h2m0}
h_{2m}^{(0)}(x_{1},x_{2})=E\left[h^{(0)}_{m}(x_{1},x_{2},\eta_{3}^{(m)},\eta_{4}^{(m)})\right]
\end{flalign}
for $x_{1}, x_{2}\in\mathcal{R}^{d_{1}}
\times\mathcal{R}^{d_{2}}$. Under $H_{0}$, $\{\eta_{t}^{(m)}\}$ is a sequence of i.i.d. random variables,
and hence Lemma \ref{lem3.2}(i) implies that $V_{1n}^{(0)}(m)$
is a degenerate $V$-statistic of order 1, from which
$h_{2m}^{(0)}$ can be expressed as
\begin{flalign}\label{3.8}
h_{2m}^{(0)}(x_{1},x_{2})=\sum_{j=0}^{\infty}\lambda_{jm}\Phi_{jm}(x_{1})\Phi_{jm}(x_{2}),
\end{flalign}
where $\{\Phi_{jm}(\cdot)\}$ is an orthonormal function in $L_{2}$ norm, and
$\lambda_{jm}$ is the eigenvalue corresponding to the eigenfunction $\Phi_{jm}(\cdot)$. That is,
 $\{\lambda_{jm}\}$ is a
finite enumeration of the nonzero eigenvalues of the
equation
$$E[h_{2m}^{(0)}(x_{1},\eta_{1}^{(m)})\Phi_{jm}(\eta_{1}^{(m)})]=\lambda_{jm}\Phi_{jm}(x_{1}),$$ where
 $E\Phi_{jm}(\eta_{1}^{(m)})=0$ for all $j\geq1$, and
$$E[\Phi_{jm}(\eta_{1}^{(m)})\Phi_{j'm}(\eta_{1}^{(m)})]=
\left\{\begin{array}{ll}
1 & j=j',\\
0 & j\not=j'
\end{array}\right.$$
(see, e.g.,  Dunford and Schwartz (1963, p.1087)).
With (\ref{3.7}) and (\ref{3.8}), we can obtain that under $H_{0}$,
\begin{flalign}\label{3.9}
N[S_{1n}^{(0)}(m)]=\sum_{j=1}^{\infty}\lambda_{jm}\left[\frac{1}{\sqrt{N}}\sum_{i=1}^{N}\Phi_{jm}(\eta_{i}^{(m)})\right]^{2}+o_{p}(1).
\end{flalign}

Moreover, we consider $S_{1n}^{(ab)}(m)$, which results from the estimation effect. Under $H_{0}$,
$S_{1n}^{(ab)}(m)$ (for $a, b=1, 2$) is a degenerate $V$-statistic of order 1 by Lemma \ref{lem3.2}(ii), and hence $N[S_{1n}^{(ab)}(m)]=O_{p}(1)$,
and then its related estimation effect is negligible in view of that
$\zeta_{sn}^{T}N[S_{1n}^{(ab)}(m)]=o_{p}(1)$. However, under $H_{0}$, the estimation effect related to
$S_{1n}^{(23)}(m)$ is negligible only when $\Upsilon=0$. This is because
when $\Upsilon\not=0$,
$S_{1n}^{(23)}(m)=O_{p}(1)$ by the law of large numbers for V-statistics, and
its related estimation effect is not negligible based on the ground that $N[\zeta_{1n}^{T}S_{1n}^{(23)}(m)\zeta_{2n}]=O_{p}(1)$.

Our third lemma below provides a useful central limit theorem.

\begin{lem}\label{lem3.3}
Suppose Assumptions \ref{asm2.1}, \ref{asm2.2}(i) and \ref{asm2.3}-\ref{asm2.5} hold. Then, under $H_{0}$,
\begin{flalign*}
\mathcal{T}_{n}&:=\left(\frac{1}{\sqrt{N}}\sum_{i=1}^{N}\mathcal{T}_{1i}^{T}, \frac{1}{\sqrt{n}}\sum_{i=1}^{n}
\mathcal{T}_{2i}^{T}\right)^{T}\to_{d} \mathcal{T}:=
((\mathcal{Z}_{jm})_{j\geq1,0\leq m\leq M},(\mathcal{W}_{s}^{T})_{1\leq s\leq 2})^{T}
\end{flalign*}
as $n\to\infty$, where
$\mathcal{T}_{1i}=\left(\big(\Phi_{jm}(\eta_{i}^{(m)})\big)_{j\geq1, 0\leq m\leq M}\right)^{T}$, $\mathcal{T}_{2i}=\left(\pi_{1i}^{T},\pi_{2i}^{T}\right)^{T}$,
$\mathcal{T}$ is a multivariate normal distribution with mean zero and covariance
matrix $\overline{\mathcal{T}}=E(\mathcal{T}_{1}\mathcal{T}_{1}^{T})$ with $\mathcal{T}_{i}=(\mathcal{T}_{1i}^{T}, \mathcal{T}_{2i}^{T})^{T}$, $\{\mathcal{Z}_{jm}\}_{j\geq1}$ is a
sequence of i.i.d. $\mbox{N}(0,1)$ random variables, and
$\mathcal{W}_{s}$ is a $p_{s}$-variate normal
random variable.
\end{lem}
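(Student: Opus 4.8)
The plan is to use the Cram\'er--Wold device to reduce the (possibly infinite-dimensional) $\mathcal{Z}$-part of $\mathcal{T}$ to finitely many coordinates, and then to prove a joint central limit theorem by combining a martingale CLT for the estimation-effect part $\{(\pi_{1i}^{T},\pi_{2i}^{T})^{T}\}$ with a (conditional) martingale argument for the $\Phi$-part. Concretely, fix a finite index set $\mathcal{J}$ of pairs $(j,m)$ with $0\le m\le M$ and real vectors $\alpha=(\alpha_{jm})_{(j,m)\in\mathcal{J}}$ and $\beta=(\beta_1^{T},\beta_2^{T})^{T}$; by Cram\'er--Wold it suffices to show
$$\xi_n:=\frac{1}{\sqrt{N}}\sum_{i=1}^{N}\sum_{(j,m)\in\mathcal{J}}\alpha_{jm}\Phi_{jm}(\eta_i^{(m)})+\frac{1}{\sqrt{n}}\sum_{i=1}^{n}\big(\beta_1^{T}\pi_{1i}+\beta_2^{T}\pi_{2i}\big)\;\to_d\;N(0,\sigma^2),$$
where $\sigma^2$ is the corresponding quadratic form in $\overline{\mathcal{T}}$. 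Since $N=n-m$ differs from $n$ by at most $M$ and all summands are mean zero with finite variance, replacing $1/\sqrt{N}$ by $1/\sqrt{n}$ changes each $\Phi$-sum by only $o_p(1)$, so throughout we may use the normalization $1/\sqrt{n}$ and sums over $i=1,\dots,n$.

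The crucial structural fact, valid under $H_0$, is that $\{\varepsilon_t:=(\eta_{1t}^{T},\eta_{2t}^{T})^{T}\}$ is i.i.d.\ and, because $k$ depends only on the $\eta_1$-coordinates and $l$ only on the $\eta_2$-coordinates, the degenerate kernel $h_{2m}^{(0)}$ factorizes into a product of a doubly-centered $\eta_1$-kernel and a doubly-centered $\eta_2$-kernel. Combined with Lemma \ref{lem3.2}(i), this yields the ``complete degeneracy'' $E[\Phi_{jm}(\eta_{11},v)]=0$ for all $v$ and $E[\Phi_{jm}(u,\eta_{21})]=0$ for all $u$ (the eigenfunctions split as products of separately mean-zero, orthonormal functions). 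Three consequences follow. First, $\{\Phi_{jm}(\eta_i^{(m)})\}_i$ is $M$-dependent, mean zero, and has zero autocovariance at every nonzero lag. Second, for every lag $\Phi_{jm}(\eta_i^{(m)})$ is uncorrelated with $\pi_{1i'}$ (killed by the mean-zero $\eta_2$-factor) and with $\pi_{2i'}$ (killed by the mean-zero $\eta_1$-factor), so the long-run covariance of $\{\mathcal{T}_i\}$ collapses to $\overline{\mathcal{T}}=E(\mathcal{T}_1\mathcal{T}_1^{T})$, which is block-diagonal between the $\Phi$-block and the $\pi$-block. Third, orthonormality of $\{\Phi_{jm}(\cdot)\}_{j\ge1}$ for fixed $m$ gives the i.i.d.\ $N(0,1)$ structure of $\{\mathcal{Z}_{jm}\}_{j\ge1}$, while $\{\pi_{si}\}$ being a function of the $\eta_s$-sequence alone gives $\mathcal{W}_1\perp\mathcal{W}_2$.

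To assemble $\xi_n$, I would condition on $\mathcal{A}:=\sigma(\eta_{2t}:t\in\mathbb{Z})$. Given $\mathcal{A}$, the term $c_n:=n^{-1/2}\sum_i\beta_2^{T}\pi_{2i}$ is $\mathcal{A}$-measurable, whereas $U_i:=\sum_{(j,m)\in\mathcal{J}}\alpha_{jm}\Phi_{jm}(\eta_i^{(m)})+\beta_1^{T}\pi_{1i}$ is $\sigma(\mathcal{A},\mathcal{F}_{1i})$-measurable and, using $E[\Phi_{jm}(\eta_{1i},\eta_{2,i+m})\mid\mathcal{A},\mathcal{F}_{1,i-1}]=0$ together with Assumption \ref{asm2.3}, satisfies $E[U_i\mid\mathcal{A},\mathcal{F}_{1,i-1}]=0$; thus $\{U_i\}_i$ is a martingale difference array. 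The conditional Lindeberg condition follows from $E[U_1^2\mathbf{1}(|U_1|>\varepsilon\sqrt{n})]\to0$ (finite second moments, ensured by Assumptions \ref{asm2.3} and \ref{asm2.5}), and $n^{-1}\sum_iE[U_i^2\mid\mathcal{A},\mathcal{F}_{1,i-1}]\to_p E[U_1^2]$ by strict stationarity and ergodicity (Assumption \ref{asm2.1}) via the ergodic theorem, the limit being a deterministic constant $\sigma_U^2$ equal to the corresponding block of $\overline{\mathcal{T}}$-quadratic form. A conditional/stable martingale CLT (e.g.\ Hall and Heyde, 1980) then gives $E[\exp(it\,n^{-1/2}\sum_iU_i)\mid\mathcal{A}]\to\exp(-t^2\sigma_U^2/2)$ in $L^1$, so that $E[e^{it\xi_n}]=E[e^{itc_n}E[\exp(it\,n^{-1/2}\sum_iU_i)\mid\mathcal{A}]]\to\exp(-t^2\sigma_U^2/2)\,\lim_nE[e^{itc_n}]$. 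Since $\{\pi_{2i}\}$ is a stationary ergodic martingale difference sequence with $E\|\pi_{2i}\|^2<\infty$, the standard martingale CLT gives $c_n\to_d N(0,\beta_2^{T}E(\pi_{21}\pi_{21}^{T})\beta_2)$, whence $E[e^{it\xi_n}]\to\exp(-t^2\sigma^2/2)$ with $\sigma^2$ the $\overline{\mathcal{T}}$-quadratic form; this simultaneously yields the mutual independence of $(\mathcal{Z}_{jm})$, $\mathcal{W}_1$ and $\mathcal{W}_2$.

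The step I expect to be the main obstacle is establishing that the limiting covariance is exactly $E(\mathcal{T}_1\mathcal{T}_1^{T})$ rather than a long-run variance carrying extra cross-lag terms. This is \emph{not} a consequence of the first-order degeneracy in Lemma \ref{lem3.2}(i) alone; it genuinely uses the product form of the HSIC kernel under $H_0$ and the resulting factorization of each eigenfunction into separately mean-zero factors in $\eta_{1i}$ and $\eta_{2,i+m}$. A secondary difficulty is that the $\Phi$-part and the $\pi_2$-part are adapted to incompatible filtrations --- the former looks $M$ steps into the future of $\{\eta_{2t}\}$, the latter into its past --- which is why the argument is routed through conditioning on the entire $\eta_2$-sequence rather than through a single martingale structure, and some care is needed to check that the conditional-moment and Lindeberg conditions survive this conditioning.
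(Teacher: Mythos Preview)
Your route is sound but substantially more elaborate than the paper's. The paper's proof is essentially three lines: set $\mathcal{F}_{i}=\sigma(\mathcal{F}_{1i},\mathcal{F}_{2i})$, verify $E(\mathcal{T}_{i}\mid\mathcal{F}_{i-1})=0$ (using Lemma~\ref{lem3.2}(i) for the $\Phi$-part and Assumption~\ref{asm2.3} for the $\pi$-part), check $E\|\mathcal{T}_{i}\|^{2}<\infty$, and invoke a martingale CLT (White, 2001, Corollary~5.26) to get $\mathcal{T}_{n}\to_d\mathcal{T}$ with covariance $\overline{\mathcal{T}}=\lim_n\mathrm{var}(\mathcal{T}_n)=E(\mathcal{T}_{1}\mathcal{T}_{1}^{T})$.

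What you do differently is to take seriously two points the paper leaves implicit. First, you observe that for $m\ge1$ the summand $\Phi_{jm}(\eta_i^{(m)})=\Phi_{jm}(\eta_{1i},\eta_{2,i+m})$ is \emph{not} $\mathcal{F}_i$-measurable, so $\{\mathcal{T}_i,\mathcal{F}_i\}$ is not a martingale-difference array in the standard sense, and enlarging the filtration to absorb $\eta_{2,i+M}$ destroys the martingale property of $\pi_{2i}$. Your remedy---condition on $\mathcal{A}=\sigma(\eta_{2t}:t\in\mathbb{Z})$ and run a stable/conditional martingale CLT in the $\eta_1$-filtration---is a correct and clean way to resolve this incompatibility. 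Second, you pinpoint the tensor-product structure of the HSIC kernel under $H_0$ (so that each $\Phi_{jm}$ factors as a product of separately mean-zero eigenfunctions in $\eta_{1i}$ and $\eta_{2,i+m}$) as the reason all cross-lag covariances vanish and the long-run variance collapses to $E(\mathcal{T}_1\mathcal{T}_1^T)$; the paper asserts this equality without justification. Both ingredients are genuinely needed---the first-order degeneracy in Lemma~\ref{lem3.2}(i) alone does not give them---and your proposal makes them explicit. The paper's statement is correct and its short argument can be completed along exactly the lines you sketch; what your approach buys is a rigorous treatment of the filtration mismatch and of why no long-run-variance correction appears.
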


\subsection{Asymptotics of test statistics}

Based on Lemmas \ref{lem3.1}-\ref{lem3.2}, this subsection studies the asymptotics of our test statistics. Let
\begin{flalign}\label{lambda_m23}
\Lambda_{m}^{(23)}:=E[h^{(23)}_{m}(\varsigma_{1}^{(m)},\varsigma_{2}^{(m)},\varsigma_{3}^{(m)},\varsigma_{4}^{(m)})].
\end{flalign}
First, we give the limiting null distributions of $S_{1n}(m)$ and $J_{1n}(M)$ as follows.

\begin{thm}\label{thm3.1}
Suppose Assumptions \ref{asm2.1}, \ref{asm2.2}(i) and \ref{asm2.3}-\ref{asm2.5} hold. Then, under $H_{0}$,
\begin{flalign*}
(i)\,\, & n[S_{1n}(m)]\to_{d}\chi_{m}\,\,\,\mbox{ for }0\leq m\leq M;\\
(ii)\,\, & n[J_{1n}(M)]\to_{d}\sum_{m=0}^{M}\chi_{m},
\end{flalign*}
as $n\to\infty$, where $\chi_{m}$ is a Gaussian process defined by
\begin{flalign*}
\chi_{m}=\sum_{j=1}^{\infty}\lambda_{jm}\mathcal{Z}_{jm}^{2}+
\mathcal{W}_{1}^{T}
\Lambda_{m}^{(23)}\mathcal{W}_{2}.
\end{flalign*}
Here,
$\lambda_{jm}$ is defined as in (\ref{3.8}), and $\mathcal{Z}_{jm}$ and $\mathcal{W}_{s}$ are defined as in Lemma \ref{lem3.3}.
\end{thm}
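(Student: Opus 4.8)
The plan is to start from the expansion of $S_{1n}(m)$ in Lemma~\ref{lem3.1}, multiply through by $n$, and identify the asymptotic contribution of each of its seven terms under $H_0$, using Lemma~\ref{lem3.2} for the degeneracy structure, Assumption~\ref{asm2.3} for the order of $\zeta_{sn}=\widehat{\theta}_{sn}-\theta_{s0}$, and finally Lemma~\ref{lem3.3} together with a truncation/continuous-mapping argument to pass to the limit. Throughout I use $N/n\to1$ freely.

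First I would handle the leading term $S_{1n}^{(0)}(m)$. Under $H_0$ the sequence $\{\eta_t^{(m)}\}$ is i.i.d. and Lemma~\ref{lem3.2}(i) makes $V_{1n}^{(0)}(m)$ a degenerate $V$-statistic of order one, so the standard $V$-statistic approximation quoted before Lemma~\ref{lem3.3} yields $n[S_{1n}^{(0)}(m)]=\sum_{j\ge1}\lambda_{jm}\left[\frac1{\sqrt N}\sum_{i=1}^N\Phi_{jm}(\eta_i^{(m)})\right]^2+o_p(1)$ as in (\ref{3.9}), where one also records that $\sum_j|\lambda_{jm}|<\infty$ because boundedness of $k,l$ in Assumption~\ref{asm2.5} makes $h_{2m}^{(0)}$ a bounded, hence trace-class, kernel. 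For the estimation-effect terms: when $a,b\in\{1,2\}$, Lemma~\ref{lem3.2}(ii) shows $S_{1n}^{(ab)}(m)$ is a degenerate $V$-statistic of order one, so $S_{1n}^{(ab)}(m)=O_p(1/N)$, and combined with $\zeta_{sn}=O_p(n^{-1/2})$ from Assumption~\ref{asm2.3}, each of $n\,\zeta_{1n}^TS_{1n}^{(11)}(m)$, $n\,\zeta_{2n}^TS_{1n}^{(12)}(m)$, $\tfrac n2\,\zeta_{1n}^TS_{1n}^{(21)}(m)\zeta_{1n}$, $\tfrac n2\,\zeta_{2n}^TS_{1n}^{(22)}(m)\zeta_{2n}$ is $o_p(1)$. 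For the cross term, Lemma~\ref{lem3.2}(iii) and the law of large numbers for $V$-statistics give $S_{1n}^{(23)}(m)\to_p\Lambda_m^{(23)}$, so $n\,\zeta_{1n}^TS_{1n}^{(23)}(m)\zeta_{2n}=(\sqrt n\,\zeta_{1n})^TS_{1n}^{(23)}(m)(\sqrt n\,\zeta_{2n})$, which by the linearization (\ref{2.9}) and Slutsky equals $\left(\frac1{\sqrt n}\sum_t\pi_{1t}\right)^T\Lambda_m^{(23)}\left(\frac1{\sqrt n}\sum_t\pi_{2t}\right)+o_p(1)$. It remains to show $n[R_{1n}(m)]=o_p(1)$: writing $R_{1n}(m)$ as the sum of a third-order Taylor term in $\zeta_{sn}$ (coefficients involving third derivatives of $g_{st}$, controlled in $L^2$ by Assumption~\ref{asm2.2}(i), and second derivatives of the kernels, bounded by Assumption~\ref{asm2.5}) and a part generated by replacing $g_{st}$ with $\widehat{g}_{st}$ (controlled by $\sum_t\sup_{\theta_s}\|\widehat{R}_{st}(\theta_s)\|^3=O_p(1)$ in Assumption~\ref{asm2.4}), a H\"older/moment bookkeeping gives $R_{1n}(m)=O_p(n^{-3/2})+o_p(n^{-1})$, hence $n[R_{1n}(m)]=o_p(1)$.

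Collecting these pieces gives $n[S_{1n}(m)]=\sum_{j\ge1}\lambda_{jm}\left[\frac1{\sqrt N}\sum_{i=1}^N\Phi_{jm}(\eta_i^{(m)})\right]^2+\left(\frac1{\sqrt n}\sum_t\pi_{1t}\right)^T\Lambda_m^{(23)}\left(\frac1{\sqrt n}\sum_t\pi_{2t}\right)+o_p(1)$, jointly over $0\le m\le M$. By Lemma~\ref{lem3.3} the vector $\left(\left(\frac1{\sqrt N}\sum_i\Phi_{jm}(\eta_i^{(m)})\right)_{j,m},\ \frac1{\sqrt n}\sum_t\pi_{1t},\ \frac1{\sqrt n}\sum_t\pi_{2t}\right)$ converges jointly to $((\mathcal{Z}_{jm}),\mathcal{W}_1,\mathcal{W}_2)$. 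Since $\sum_j\lambda_{jm}(\cdot)^2$ is an infinite-dimensional functional, one cannot apply the continuous mapping theorem directly; instead I would first truncate the series at $j\le J$, apply the continuous mapping theorem to obtain $\sum_{j=1}^J\lambda_{jm}\mathcal{Z}_{jm}^2+\mathcal{W}_1^T\Lambda_m^{(23)}\mathcal{W}_2$, and then let $J\to\infty$, checking that $\sum_{j>J}\lambda_{jm}\left[\frac1{\sqrt N}\sum_i\Phi_{jm}(\eta_i^{(m)})\right]^2$ is uniformly in $n$ negligible in probability, which follows from $E\left[\frac1{\sqrt N}\sum_i\Phi_{jm}(\eta_i^{(m)})\right]^2=1$ by orthonormality and $\sum_j|\lambda_{jm}|<\infty$. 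This proves (i). For (ii), since $J_{1n}(M)=\sum_{m=0}^MS_{1n}(m)$ and the expansion above is joint in $m$, the same truncated continuous-mapping argument applied to the sum yields $n[J_{1n}(M)]\to_d\sum_{m=0}^M\chi_m$, the $\chi_m$ being dependent across $m$ through the shared $\mathcal{W}_1,\mathcal{W}_2$ and the cross-correlated $\{\mathcal{Z}_{jm}\}$.

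The main obstacle I anticipate is the remainder bound $n[R_{1n}(m)]=o_p(1)$ — assembling the third-order Taylor control with the information-truncation error under Assumptions~\ref{asm2.2}(i), \ref{asm2.4} and \ref{asm2.5} — together with the uniform tail-truncation step needed to legitimize the continuous mapping theorem in infinite dimensions; everything else is bookkeeping once Lemmas~\ref{lem3.1}--\ref{lem3.3} are in hand.
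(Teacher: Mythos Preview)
Your proposal is essentially the paper's own argument: expand via Lemma~\ref{lem3.1}, kill the $(a,b)\in\{1,2\}^2$ terms by the degeneracy in Lemma~\ref{lem3.2}(ii), reduce $S_{1n}^{(23)}(m)$ to $\Lambda_m^{(23)}$ by the LLN for $V$-statistics, invoke (\ref{3.9}) for the leading term, and conclude by Lemma~\ref{lem3.3} and continuous mapping. Your explicit truncation step for the infinite series is a welcome addition that the paper leaves implicit.

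One point in your remainder sketch deserves sharpening. You describe $R_{1n}(m)$ as a ``third-order Taylor term in $\zeta_{sn}$'' plus information-truncation error, yielding $O_p(n^{-3/2})+o_p(n^{-1})$ by moment bookkeeping. But the expansion in Lemma~\ref{lem3.1} is only to second order in $\widehat\eta-\eta$, and part of its remainder---specifically the piece coming from $\overline{R}_{ijqr}^{(3)}=\big[\partial G(\theta^\dag)/\partial\theta-\partial G(\theta_0)/\partial\theta\big](\widehat\theta_n-\theta_0)$ paired with the first-order kernel gradient $W_{ijqr}$---is only $O(\|\zeta_n\|^2)$ times a $V$-statistic, so a naive bound gives $n\cdot O_p(n^{-1})\cdot O_p(1)=O_p(1)$, not $o_p(1)$. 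The paper (Lemma~B.1) handles this by taking one more mean-value step and observing that the resulting $V$-statistic $\Delta_{21}^{(2)\S}$ is itself degenerate of order~1 under $H_0$ (same mechanism as Lemma~\ref{lem3.2}(ii)), which upgrades it to $O_p(1/n)$ and closes the gap. So this piece is not pure H\"older/moment bookkeeping; you need to re-invoke the degeneracy structure once more inside the remainder control.
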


Theorem \ref{thm3.1} shows that $S_{1n}(m)$ and $J_{1n}(M)$ have convergence rate $n^{-1}$ under $H_{0}$.
Based on this theorem, we reject $H_{0}$
at the significance level $\alpha$, if
$$n[S_{1n}(m)]>c_{m\alpha}\,\,\,\mbox{ or }\,\,\,n[J_{1n}(M)]>c_{\alpha},$$
where $c_{m\alpha}$ and $c_{\alpha}$ are the
$\alpha$-th upper percentiles of $\chi_{m}$ and $\sum_{m=0}^{M}\chi_{m}$, respectively.
Since the distribution of
$\chi_{m}$ depends on $\{Y_{st}\}$ and $\{\pi_{st}\}$,
a residual bootstrap method is
proposed in Section 4 to obtain the values of $c_{m\alpha}$ and $c_{\alpha}$.

Second, we study the behavior of $S_{1n}(m)$ under the
following fixed alternative:
\begin{flalign*}
H_{1}^{(m)}: \,\,&\{\eta_{1t}\} \mbox{ and } \{\eta_{2t}\} \mbox{ are dependent such that }E[h_{2m}^{(0)}(x_{1},\eta_{2}^{(m)})]\not=0 \\
&\mbox{ for some }x_{1}\in\mathcal{R}^{d_{1}}\times\mathcal{R}^{d_{2}}.
\end{flalign*}
Under $H_{1}^{(m)}$, $h_{2m}^{(0)}$ is not a degenerate kernel of order 1. Hence, the V-statistic
$S_{1n}^{(0)}(m)$ can not have the convergence rate $n^{-1}$ as suggested by Lemma \ref{lem3.2}(i), leading to
the consistency of $S_{1n}(m)$ in detecting $H_{1}^{(m)}$. Similarly,
we can show the consistency of $J_{1n}(M)$ to detect the fixed alternative below:
\begin{flalign*}
H_{1}^{(M)}: H_{1}^{(m)} \mbox{ holds for some }m\in\{0, 1, \cdots, M\}.
\end{flalign*}

\begin{thm} \label{thm3.2}
Suppose Assumptions \ref{asm2.1}-\ref{asm2.5} hold. Then,
\begin{flalign*}
(i)\,\, &\lim_{n\to\infty}P\left(n[S_{1n}(m)]>c_{m\alpha}\right)=1\mbox{ under }H_{1}^{(m)};\\
(ii)\,\, &\lim_{n\to\infty}P\left(n[J_{1n}(M)]>c_{\alpha}\right)=1\mbox{ under }H_{1}^{(M)}.
\end{flalign*}
\end{thm}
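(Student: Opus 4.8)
The plan is to show that under the fixed alternative the test statistic diverges at rate $n$, so that comparing it against the fixed critical value $c_{m\alpha}$ (which is an $O(1)$ quantile of the limiting null law from Theorem \ref{thm3.1}) forces rejection with probability tending to one. The starting point is the decomposition in Lemma \ref{lem3.1}, which writes $S_{1n}(m)$ as the leading $V$-statistic $S_{1n}^{(0)}(m)$ plus terms carrying a factor $\zeta_{sn}=\widehat\theta_{sn}-\theta_{s0}=O_p(n^{-1/2})$ times $V$-statistics $S_{1n}^{(ab)}(m)$, plus a remainder $R_{1n}(m)$. Under $H_1^{(m)}$ the innovations $\{\eta_t^{(m)}\}$ are still strictly stationary and ergodic (by Assumption \ref{asm2.1}), so by the ergodic theorem / law of large numbers for $V$-statistics,
\begin{flalign*}
S_{1n}^{(0)}(m)\to_p \Pi(\eta_{1t},\eta_{2t+m})=:\Pi_m,
\end{flalign*}
and the auxiliary $V$-statistics $S_{1n}^{(ab)}(m)$ converge in probability to their means, hence are $O_p(1)$. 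The key point is that $H_1^{(m)}$ guarantees $\Pi_m>0$: indeed $E[h_{2m}^{(0)}(x_1,\eta_2^{(m)})]\not\equiv 0$ rules out the degeneracy that would be forced by $\Pi_m=0$, since $\Pi_m=0$ is equivalent to $P_{uv}=P_u\times P_v$ by the characteristic-kernel property recalled in Section 2.1, which in turn makes $h_{2m}^{(0)}$ a degenerate kernel of order one. Therefore $S_{1n}^{(0)}(m)$ has a strictly positive probability limit.

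Next I would control the remaining pieces. Each term $\zeta_{sn}^T S_{1n}^{(ab)}(m)$ or $\zeta_{1n}^T S_{1n}^{(23)}(m)\zeta_{2n}$ is $O_p(n^{-1/2})\cdot O_p(1)=o_p(1)$, and the remainder $R_{1n}(m)$ is $o_p(1)$ by the same bookkeeping used in the proof of Theorem \ref{thm3.1} under $H_0$ (Assumptions \ref{asm2.2}(i), \ref{asm2.4}, \ref{asm2.5} give the needed uniform bounds and Lipschitz estimates; the argument there does not actually use $H_0$ for bounding $R_{1n}(m)$, only for identifying the limit). Consequently
\begin{flalign*}
S_{1n}(m)=\Pi_m+o_p(1)\to_p\Pi_m>0,
\end{flalign*}
so $n[S_{1n}(m)]\to_p+\infty$. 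Since $c_{m\alpha}$ is the $\alpha$-th upper percentile of $\chi_m$, a tight random variable, it is a fixed finite constant; hence $P(n[S_{1n}(m)]>c_{m\alpha})\to 1$, which is part (i). For part (ii), $J_{1n}(M)=\sum_{m=0}^{M}S_{1n}(m)$ and each summand converges to $\Pi_m\ge 0$; under $H_1^{(M)}$ at least one index $m^*\in\{0,\dots,M\}$ satisfies $\Pi_{m^*}>0$, so $J_{1n}(M)\to_p\sum_m\Pi_m\ge\Pi_{m^*}>0$, giving $n[J_{1n}(M)]\to_p+\infty$ and $P(n[J_{1n}(M)]>c_\alpha)\to 1$.

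The main obstacle is the step asserting $\Pi_m>0$ under the stated form of $H_1^{(m)}$ and, relatedly, that the first-order projection kernel $h_{2m}^{(0)}$ being nonzero is exactly the condition that prevents the faster rate. This requires tying together three facts: that $\Pi_m=0$ iff $P_{uv}=P_u\times P_v$ (characteristic kernels, Sejdinovic et al. 2013), that independence is equivalent to $h_{2m}^{(0)}$ being degenerate of order one (the Hoeffding projection vanishes precisely under the null), and that nondegeneracy of $h_{2m}^{(0)}$ yields $\mathrm{Var}(N^{-1/2}\sum_i h_{1m}^{(0)}(\eta_i^{(m)}))$ bounded away from zero where $h_{1m}^{(0)}(x_1)=E[h_{2m}^{(0)}(x_1,\eta_2^{(m)})]$, so that $S_{1n}^{(0)}(m)-\Pi_m$ is $O_p(n^{-1/2})$ rather than $O_p(n^{-1})$. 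A secondary technical point is verifying that the $o_p(1)$ control of $R_{1n}(m)$ and of the estimation-effect terms genuinely does not rely on $H_0$; this just needs re-reading the relevant inequalities in the appendix, all of which are pathwise moment/Lipschitz bounds uniform over $\Theta_s$, valid under Assumptions \ref{asm2.1}--\ref{asm2.5} alone.
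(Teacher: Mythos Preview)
Your overall strategy is correct and is a slightly more elementary variant of the paper's argument: you argue $S_{1n}(m)\to_p\Pi_m>0$ via a law of large numbers, whereas the paper establishes the sharper statement $\sqrt{N}\bigl[S_{1n}(m)-\Lambda_m^{(0)}\bigr]=O_p(1)$ (with $\Lambda_m^{(0)}=\Pi_m$) using Denker and Keller (1983) for non-degenerate $V$-statistics under $\beta$-mixing. Either route yields $n[S_{1n}(m)]\to_p\infty$, so for consistency the extra precision buys nothing; your version is shorter.

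There is, however, one genuine inaccuracy in your justification of the remainder. You assert that the bound on $R_{1n}(m)$ in the appendix ``does not actually use $H_0$\ldots\ all of which are pathwise moment/Lipschitz bounds.'' That is not true. In the proof of Lemma~\ref{lemB1}, the term $\Delta_{21}^{(2)\S}$ in (\ref{A15}) is shown to be $O_p(1/n)$ precisely because, \emph{under $H_0$}, it is a degenerate $V$-statistic of order one (the argument mirrors Lemma~\ref{lem3.2}(ii)). Without $H_0$ this degeneracy fails, and the paper handles the alternative separately in Lemma~\ref{lemB2}: there one only obtains $\Delta_{21}^{(2)\S}=O_p(1)$, and this step requires Assumption~\ref{asm2.2}(ii) (the $\beta$-mixing condition) together with Denker and Keller (1983). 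The resulting bound is only $\sqrt{n}\|R_{1n}(m)\|=o_p(1)$, not $n\|R_{1n}(m)\|=o_p(1)$. This weaker bound is still ample for your argument (you only need $R_{1n}(m)=o_p(1)$), but you should cite Lemma~\ref{lemB2} and Assumption~\ref{asm2.2}(ii) rather than claim the $H_0$ argument carries over verbatim. The same remark applies to your claim that $S_{1n}^{(ab)}(m)=O_p(1)$ under the alternative: this is a LLN for $V$-statistics with dependent data and again leans on Assumption~\ref{asm2.2}(ii).
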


In the end, we highlight that similar results as in Theorems \ref{thm3.1}-\ref{thm3.2} hold for
$S_{2n}(m)$ and $J_{2n}(M)$, which can be implemented in the similar way as $S_{1n}(m)$ and $J_{1n}(M)$, respectively.



\section{Residual bootstrap approximations}
In this section, we introduce a residual bootstrap method to
approximate the limiting null distributions in Theorem \ref{thm3.1}.
The residual bootstrap method has been well used in the time series
literature; see, e.g., Berkowitz and Kilian (2000), Paparoditis and
Politis (2003), Politis (2003), and many
others. Our residual bootstrap procedure to obtain the approximation
of the critical values $c_{m\alpha}$ and $c_{\alpha}$ is as follows:

{\it Step 1}. Estimate the original model (\ref{2.8}) and obtain
the residuals $\{\widehat{\eta}_{st}\}_{t=1}^{n}$.

{\it Step 2}. Generate bootstrap innovations $\{\widehat{\eta}_{st}^{*}\}_{t=1}^{n}$ (after standardization) by resampling with replacement from
the empirical residuals $\{\widehat{\eta}_{st}\}_{t=1}^{n}$.

{\it Step 3}. Given $\widehat{\theta}_{sn}$ and
$\{\widehat{\eta}_{st}^{*}\}_{t=1}^{n}$, generate bootstrap data set
$\{Y_{st}^{*}\}_{t=1}^{n}$ according to
$$Y_{st}^{*}=f_{s}(\widehat{I}_{st-1}^{*},\widehat{\theta}_{sn},\widehat{\eta}_{st}^{*}),$$
where $\widehat{I}_{st}^{*}$ is the bootstrap observable information set up to time $t$, conditional on some assumed initial values.

{\it Step 4}. Based on $\{Y_{st}^{*}\}_{t=1}^{n}$, compute
$\widehat{\theta}_{sn}^{*}$ in the same way as for
$\widehat{\theta}_{sn}$, and then calculate the corresponding bootstrap
residuals $\{\widehat{\eta}_{st}^{**}\}_{t=1}^{n}$ with
$\widehat{\eta}_{st}^{**}:=g_{s}(Y_{st}^{*},\widehat{I}_{st-1}^{*},\widehat{\theta}_{sn}^{*})$.

{\it Step 5}. Calculate the bootstrap test statistic $S_{1n}^{**}(m)$ and $J_{1n}^{**}(M)$ in
the same way as for (\ref{2.10}) and (\ref{2.12}), respectively, with $\widehat{\eta}_{st}^{**}$
replacing $\widehat{\eta}_{st}$.

{\it Step 6}. Repeat steps 1-5 $B$ times to obtain
$\{n[S_{1nb}^{**}(m)]; b=1, 2, \cdots, B\}$ and $\{n[J_{1nb}^{**}(M)]; b=1, 2, \cdots, B\}$, then choose their $\alpha$-th
upper percentiles, denoted by
$c_{\alpha}^{*}$ and $c_{m\alpha}^{*}$, as the approximations of $c_{\alpha}$ and $c_{m\alpha}$, respectively.

In order to prove the validity of the bootstrap procedure in steps 1-6, we need some notations.
Let
\begin{flalign}
h_{2m}^{(0*)}(x_{1},x_{2})&=E^{*}[h^{(0)}_{m}(x_{1},x_{2},\widehat{\eta}_{3}^{(m*)},\widehat{\eta}_{4}^{(m*)})], \label{h2m0_star}\\
\Lambda^{(23*)}_{m}&=E^*[h^{(23)}_{m}(\widehat{\varsigma}_{1}^{(m*)},\widehat{\varsigma}_{2}^{(m*)},
\widehat{\varsigma}_{3}^{(m*)},\widehat{\varsigma}_{4}^{(m*)})],\label{lambda_m23_star}
\end{flalign}
where $\widehat{\eta}_{t}^{(m*)}=(\widehat{\eta}_{1t}^{*}, \widehat{\eta}_{2t+m}^{*})$ and
$
{\varsigma}_{t}^{(m*)}=\big(\widehat{\eta}_{1t}^{*},\frac{\p
g_{1t}(\widehat{\theta}_{1n})}{\p\theta_{1}},\widehat{\eta}_{2t+m}^{*},\frac{\p g_{2t+m}(\widehat{\theta}_{2n})}{\p\theta_{2}}\big).
$
Also, let $\zeta_{sn}^*=\widehat{\theta}_{sn}^{*}-\widehat{\theta}_{sn}$,
and $\varpi_{n}:=\{Y_{11},Y_{12},\cdots,Y_{1n},$ $Y_{21}, Y_{22},\cdots,Y_{2n}\}$
be the given sample. Denote by $E^{*}$
the expectation conditional on $\varpi_{n}$; by $o_{p}^{*}(1) (O_{p}^{*}(1))$
a sequence of random variables converging to zero (bounded) in probability conditional on $\varpi_{n}$.

Since $\{\widehat{\eta}_{st}^*\}_{t=1}^N$ is an i.i.d sequence conditional on $\varpi_{n}$, a similar argument as for
Lemma \ref{lem3.1} implies that
\begin{flalign}\label{4.1}
S_{1n}^{**}(m)&=S_{1n}^{(0*)}(m)+\zeta_{1n}^{*T}S_{1n}^{(11*)}(m)+\zeta_{2n}^{*T}S_{1n}^{(12*)}(m) \nonumber\\
&\quad+\frac{1}{2}\zeta_{1n}^{*T}S_{1n}^{(21*)}(m)\zeta_{1n}^*
+\frac{1}{2}\zeta_{2n}^{*T}S_{1n}^{(22*)}(m)\zeta_{2n}^*+\zeta_{1n}^{*T}S_{1n}^{(23*)}(m)\zeta_{2n}^*+R_{1n}^*(m),
\end{flalign}
where $S_{1n}^{(0*)}(m)$, $S_{1n}^{(ab*)}(m)$ and $R_{1n}^*(m)$ are defined in the same way as $S_{1n}^{(0)}(m)$, $S_{1n}^{(ab)}(m)$ and $R_{1n}(m)$, respectively, with $\eta_{t}^{(m)}$ and $\varsigma_{t}^{(m)}$ being replaced by $\widehat{\eta}_{t}^{(m*)}$ and
$\widehat{\varsigma}_{t}^{(m*)}$, respectively.
Moreover, by a similar argument as for Lemma \ref{lem3.1}(i), we can obtain
\begin{flalign}
N[S_{1n}^{(0*)}(m)]&=\sum_{j=1}^{\infty}\lambda_{jm}^*\left[\frac{1}{\sqrt{N}}\sum_{i=1}^N\Phi_{jm}^*(\widehat{\eta}_i^{(m*)})\right]
+o_{p}^{*}(1),\label{4.2}
\end{flalign}
where
$E^{*}\Phi_{jm}^{*}(\widehat{\eta}_{1}^{(m*)})=0$ for all $j\geq1$, and
$E^{*}[\Phi_{jm}^{*}(\widehat{\eta}_{1}^{(m*)})\Phi_{j'm}^{*}(\widehat{\eta}_{1}^{(m*)})]=1$ if $j=j'$, and 0 if
$j\not=j'$.

Next, we give two technical assumptions.
\begin{ass}\label{asm4.1}
The bootstrap estimator $\widehat{\theta}_{sn}^*$  satisfies that
\begin{flalign*}
\sqrt{n}(\widehat{\theta}_{sn}^*-\widehat{\theta}_{sn})
&=\frac{1}{\sqrt{n}}\sum\limits_{t=1}^n\pi_s(Y_{st}^*,\widehat{I}_{st-1},\widehat{\theta}_{sn})+o_{p}^{*}(1)\\
&=:\frac{1}{\sqrt{n}}\sum\limits_{t=1}^n\pi_{st}^*+o_{p}^{*}(1),
\end{flalign*}
where $\pi_s$ is defined as in Assumption \ref{asm2.3} and $E^*(\pi_{st}^*|\widehat{I}_{st-1}^*)=0$.
\end{ass}

\begin{ass}\label{asm4.2}
The following convergence results hold:
\begin{flalign*}
(i)\,\, & \frac{1}{n}\sum_{i=1}^nE^*[\pi_{si}^*\pi_{s'i}^{*T}]\rightarrow_p E\left[\pi_{s1} \pi_{s'1}^T\right];\\
(ii)\,\, &\frac{1}{N} \sum_{i=1}^NE^*[\Phi_{jm}^*(\widehat{\eta}_{i}^{(m*)})\pi_{si}^*]\rightarrow_p E[\Phi_{jm}({\eta}_1^{(m)})\pi_{s1}],
\end{flalign*}
as $n\to\infty$, for $s, s'=1,2$, $j\geq1$, and $m=0,1,\cdots,M$.
\end{ass}


\noindent
Assumptions \ref{asm4.1} and \ref{asm4.2} are standard to prove the validity of the bootstrap procedure, and they are similar to those in Assumption A7 of Escanciano (2006). For  the (quasi) MLE,
LSE and NLSE or, more generally, estimators resulting from a martingale estimating equation (see Heyde, 1997),
the function $\pi_s(\cdot)$ required in Assumption \ref{asm4.1} could be expressed as
$\pi_s(Y_{st},{I}_{st-1},{\theta}_{s})=\varrho_1(\eta_{st}(\theta_{s}))\times \varrho_2({I}_{st-1},\theta_{s})$
for some functions $\varrho_1(\cdot)$ and $\varrho_2(\cdot)$ with $E(\varrho_1(\eta_{st}(\theta_{s0})))=0$.
Then, in those cases, Assumptions \ref{asm4.1} and \ref{asm4.2} are satisfied under some mild conditions on the function $\varrho_2(\cdot)$.
Note that the calculation of the bootstrap estimator $\widehat{\theta}_{sn}^{*}$ in step 4 may be time-consuming for some times series models (e.g, multivariate ARCH-type models) when $n$ is large.
In view of Assumption \ref{asm4.1}, we suggest to generate $\widehat{\theta}_{sn}^{*}$ as
$$\widehat{\theta}_{sn}^{*}=\widehat{\theta}_{sn}+\frac{1}{n}\sum_{t}\pi_{s}(Y_{st}^{*},\widehat{I}_{st-1}^{*},\widehat{\theta}_{sn}).$$
This results in saving a lot of compute time. In Section 5, we will apply this method to
the conditional variance models, and find that it can
generate very precise critical values $c_{m\alpha}$ and $c_{\alpha}$ for the proposed HSIC-based tests.

The following theorem guarantees that when $B$
is large, our bootstrapped critical values $c_{m\alpha}$ and $c_{\alpha}$ from steps 1-6 are valid
under the null or the alternative hypothesis.

\begin{thm}\label{thm4.1}
Suppose Assumptions \ref{asm2.1}-\ref{asm2.5} and \ref{asm4.1}-\ref{asm4.2} hold. Then, conditional on $\varpi_{n}$,
(i) $n[S_{1n}^{**}(m)]=O_{p}^{*}(1)$ for $0\leq m\leq M$; (ii) $n[J_{1n}^{**}(M)]=O_{p}^{*}(1)$; moreover, under $H_{0}$,
\begin{flalign*}
(iii)\,\, & n[S_{1n}^{**}(m)]\to_{d}\chi_{m}\,\,\,\mbox{ for }0\leq m\leq M,\\
(iv)\,\, & n[J_{1n}^{**}(M)]\to_{d}\sum_{m=0}^{M}\chi_{m},
\end{flalign*}
in probability as $n\to\infty$, where $\chi_m$ is defined as in Theorem \ref{thm3.1}.
\end{thm}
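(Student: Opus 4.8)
The plan is to mirror the proof of Theorem~\ref{thm3.1} in the bootstrap world, exploiting two structural facts: conditional on $\varpi_{n}$ the resampled innovations $\{\widehat{\eta}_{st}^{*}\}$ are i.i.d., and because the resampling is carried out separately for $s=1$ and $s=2$ the two bootstrap innovation sequences are mutually independent, so the bootstrap data always obey $H_{0}$. The latter fact is what makes (i)--(ii) hold whether or not the original data satisfy $H_{0}$: once the weak limits in (iii)--(iv) are established — and these require no assumption on the original data — tightness follows at once. I would therefore spend the bulk of the argument on (iii), then read off (i), (ii), (iv).

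Starting from the expansion (\ref{4.1}), I would multiply through by $N$ (using $n/N\to1$, $m$ fixed) and treat the seven terms in turn. For the leading term, combine (\ref{4.2}) with the bootstrap CLT below: conditional on $\varpi_{n}$ and in probability, $\lambda_{jm}^{*}\to_{p}\lambda_{jm}$ and $N^{-1/2}\sum_{i=1}^{N}\Phi_{jm}^{*}(\widehat{\eta}_{i}^{(m*)})\to_{d}\mathcal{Z}_{jm}$ jointly over $j,m$, together with the trace-based tail bound $\sum_{j}|\lambda_{jm}^{*}|=O_{p}^{*}(1)$ that follows from Assumption~\ref{asm2.5}; this gives $N[S_{1n}^{(0*)}(m)]\to_{d}\sum_{j}\lambda_{jm}\mathcal{Z}_{jm}^{2}$. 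For the terms $\zeta_{sn}^{*T}S_{1n}^{(1b*)}(m)$ and $\zeta_{sn}^{*T}S_{1n}^{(2b*)}(m)\zeta_{sn}^{*}$, note that $\zeta_{sn}^{*}=O_{p}^{*}(n^{-1/2})$ by Assumption~\ref{asm4.1}, while a bootstrap analogue of Lemma~\ref{lem3.2}(ii) makes $S_{1n}^{(ab*)}(m)$ a degenerate $V$-statistic of order~$1$ for $a,b=1,2$, so $N[S_{1n}^{(ab*)}(m)]=O_{p}^{*}(1)$; hence these terms contribute $O_{p}^{*}(n^{-1/2})$ and $O_{p}^{*}(n^{-1})$, respectively, and are negligible. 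For the cross term, the bootstrap law of large numbers for $V$-statistics gives $S_{1n}^{(23*)}(m)=\Lambda_{m}^{(23*)}+o_{p}^{*}(1)$ with $\Lambda_{m}^{(23*)}\to_{p}\Lambda_{m}^{(23)}$ (continuity of the kernels and consistency of the residual empirical distribution), so writing $N\zeta_{1n}^{*T}S_{1n}^{(23*)}(m)\zeta_{2n}^{*}=(\sqrt{N}\zeta_{1n}^{*})^{T}S_{1n}^{(23*)}(m)(\sqrt{N}\zeta_{2n}^{*})$ and using $\sqrt{n}(\widehat{\theta}_{sn}^{*}-\widehat{\theta}_{sn})\to_{d}\mathcal{W}_{s}$ from the same CLT, Slutsky's theorem yields $N\zeta_{1n}^{*T}S_{1n}^{(23*)}(m)\zeta_{2n}^{*}\to_{d}\mathcal{W}_{1}^{T}\Lambda_{m}^{(23)}\mathcal{W}_{2}$. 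Finally $NR_{1n}^{*}(m)=o_{p}^{*}(1)$ by the bootstrap versions of Assumptions~\ref{asm2.2}(i) and~\ref{asm2.4}, exactly as in the derivation of (\ref{4.1}).

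The one genuinely new ingredient is the bootstrap central limit theorem — the analogue of Lemma~\ref{lem3.3} — which I would state and prove as: conditional on $\varpi_{n}$, in probability,
\begin{flalign*}
\left(\frac{1}{\sqrt{N}}\sum_{i=1}^{N}\mathcal{T}_{1i}^{*T},\ \frac{1}{\sqrt{n}}\sum_{i=1}^{n}\mathcal{T}_{2i}^{*T}\right)^{T}\to_{d}\mathcal{T},
\end{flalign*}
where $\mathcal{T}_{1i}^{*}=\big((\Phi_{jm}^{*}(\widehat{\eta}_{i}^{(m*)}))_{j\geq1,\,0\leq m\leq M}\big)^{T}$ is i.i.d.\ under $E^{*}$, and $\mathcal{T}_{2i}^{*}=(\pi_{1i}^{*T},\pi_{2i}^{*T})^{T}$ is a bootstrap martingale difference array by Assumption~\ref{asm4.1}. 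A Cram\'er--Wold reduction combined with a triangular-array martingale CLT applies, and the limiting covariance equals $\overline{\mathcal{T}}$ by Assumption~\ref{asm4.2}. Feeding this into the term-by-term analysis and invoking the continuous mapping theorem gives $n[S_{1n}^{**}(m)]\to_{d}\chi_{m}$ in probability, which is (iii); since the joint convergence of $(\mathcal{T}_{1i}^{*})$ over $m=0,\dots,M$ is already built into the CLT, summing over $m$ gives (iv); and since none of (iii)--(iv) used any assumption on the original data, (i) and (ii) follow as immediate tightness statements.

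The main obstacle is controlling the data-dependent eigenstructure inside this CLT: proving $\lambda_{jm}^{*}\to_{p}\lambda_{jm}$ and that $N^{-1/2}\sum_{i}\Phi_{jm}^{*}(\widehat{\eta}_{i}^{(m*)})$ is asymptotically standard normal jointly with the remaining coordinates. This calls for a perturbation argument for the eigenvalues and eigenfunctions of the integral operator with the random kernel $h_{2m}^{(0*)}$ — showing it converges, in Hilbert--Schmidt norm, to the operator with kernel $h_{2m}^{(0)}$ — together with a uniform tail estimate $\sum_{j>J}|\lambda_{jm}^{*}|\to0$ in probability as $J\to\infty$, which is needed to interchange the limit with the infinite sum in (\ref{4.2}).
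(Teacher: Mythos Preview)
Your proposal is essentially correct and follows the same architecture as the paper: expand $S_{1n}^{**}(m)$ via (\ref{4.1}), kill the $(ab)$-terms for $a,b=1,2$ by bootstrap degeneracy (the conditional analogue of Lemma~\ref{lem3.2}(ii)), handle the $(23)$-cross term via a bootstrap LLN together with $\Lambda_m^{(23*)}\to_p\Lambda_m^{(23)}$, control $R_{1n}^*(m)$ as in Lemma~\ref{lemB1}, and couple the leading $V$-statistic with $\sqrt{n}\zeta_{sn}^{*}$ through a conditional CLT for $\mathcal{T}_n^*$. Your identification of the main obstacle --- kernel/eigenvalue perturbation and tail control of the series in (\ref{4.2}) --- matches the paper exactly: it proves uniform convergence of $h_{2m}^{(0*)}$ to $h_{2m}^{(0)}$ on compacts (Lemma~\ref{lemB4}(i), via Lemma~\ref{lemB3}) and then invokes Dunford--Schwartz, Corollary~XI.9.4(a), to obtain $|\lambda_{jm}^*-\lambda_{jm}|=o_p(1)$; likewise $\Lambda_m^{(23*)}\to_p\Lambda_m^{(23)}$ is Lemma~\ref{lemB4}(ii).

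One point is imprecise. You write that (iii)--(iv) ``require no assumption on the original data'' and then read off (i)--(ii) as tightness. It is true that separate resampling forces the bootstrap innovations to satisfy the product structure, so the whole machinery --- degeneracy of the $(ab)$-kernels, the conditional CLT, existence of a conditional weak limit --- runs regardless of whether the sample obeys $H_0$; that is what gives (i)--(ii). But the \emph{identification} of the limit as $\chi_m$ does need $H_0$: the eigenvalues $\lambda_{jm}$, the matrix $\Lambda_m^{(23)}$, and the covariance $\overline{\mathcal{T}}$ entering $\chi_m$ are computed from the true joint law of $(\eta_{1t},\eta_{2t+m})$, and only under $H_0$ does this coincide with the product of marginals that the bootstrap mimics. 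The paper reflects this by stating Lemmas~\ref{lemB3}--\ref{lemB4} under $H_0$. Without $H_0$ one still has $\Lambda_m^{(23*)}=O_p(1)$ and a well-defined (but different) limiting spectrum, which suffices for (i)--(ii) but not for (iii)--(iv). So separate the two claims: ``the bootstrap statistic has \emph{some} conditional weak limit (always)'' versus ``that limit is $\chi_m$ (only under $H_0$)''. The paper's ordering --- prove (i) directly from the expansion and boundedness, then sharpen to (iii) under $H_0$ --- avoids this conflation.
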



\section{Simulation studies}

In this section, we compare the performance of our HSIC-based tests $S_{sn}(m)$ and $J_{sn}(M)$ $(s=1, 2 \mbox{ hereafter})$
with some well-known existing tests in finite samples.

\subsection{Conditional mean models}
We generate 1000 replications of sample size $n$ from the following two conditional mean models:
\begin{flalign}
&
\left\{\begin{array}{l}
Y_{1t}=\left(\begin{array}{cc}
0.4 & 0.1\\
-1 & 0.5
\end{array}
\right)Y_{1t-1}+\eta_{1t},\\
Y_{2t}=\left(\begin{array}{cc}
-1.5 & 1.2\\
-0.9 & 0.5
\end{array}
\right)Y_{2t-1}+\eta_{2t},\\
\end{array}
\right.
\label{5.1}
\end{flalign}
where $\{\eta_{1t}\}$ and $\{\eta_{2t}\}$ are two sequences of i.i.d. random vectors.
To generate $\{\eta_{1t}\}$ and $\{\eta_{2t}\}$, we need an
auxiliary sequence of i.i.d. multivariate normal random vectors $\{u_{t}\}$ with mean zero, where
$u_{t}=(u_{1t}, u_{2t}, u_{3t}', u_{4t}')'$ with $u_{1t}, u_{2t}\in\mathcal{R}$ and
$u_{3t}, u_{4t}\in\mathcal{R}^{2\times1}$, and its covariance matrix is given by
\begin{flalign*}
\Omega=\left(
\begin{array}{ccc}
\Omega_{1}   & 0_{2\times2} & 0_{2\times2}\\
0_{2\times2} & \Omega_{2}   & \Omega_{4}\\
0_{2\times2} & \Omega_{4}'  & \Omega_{3}
\end{array}
\right)
\end{flalign*}
with
\begin{flalign*}
\Omega_{\tau}=\left(
\begin{array}{cc}
1 & \rho_{\tau}\\
\rho_{\tau} & 1
\end{array}
\right)\mbox{ for }\tau=1,2,3,\mbox{ and }
\Omega_{4}=\left(
\begin{array}{cc}
\rho_{4} & \rho_{4}\\
\rho_{4} & \rho_{4}
\end{array}
\right).
\end{flalign*}
Here, we set $\rho_{2}=0.5$ and $\rho_{3}=0.75$ as in El Himdl and Roy (1997), which
have also considered model (\ref{5.1}) in their simulations.

Based on $\{u_{t}\}$, we consider six different error generating processes (EGPs):
\begin{flalign*}
 \mbox{EGP 1}:&\,\, \eta_{1t}=u_{3t}, \eta_{2t}=u_{4t} \mbox{ and } \rho_{4}=0;\\
 \mbox{EGP 2}:&\,\, \eta_{1t}=u_{3t}, \eta_{2t}=u_{4t} \mbox{ and } \rho_{4}=0.3;\\
 \mbox{EGP 3}:&\,\, \eta_{1t}=\f{u_{1t}^2+1}{\sqrt{6}}u_{3t},  \eta_{2t}=|u_{1t}|u_{4t} \mbox{ and } \rho_{4}=0;\\
 \mbox{EGP 4}:&\,\, \eta_{1t}=\f{u_{1t}^2+1}{\sqrt{6}}u_{3t},  \eta_{2t}=|u_{1t+3}|u_{4t} \mbox{ and } \rho_{4}=0;\\
 \mbox{EGP 5}:&\,\, \eta_{1t}=\f{u_{1t}^2+1}{\sqrt{6}}u_{3t},  \eta_{2t}=|u_{2t}|u_{4t}, \rho_{1}=0.8 \mbox{ and }\rho_{4}=0;\\
 \mbox{EGP 6}:&\,\, \eta_{1t}=u_{1t}u_{3t}, \eta_{2t}=u_{2t}u_{4t}, \rho_{1}=0.8 \mbox{ and }\rho_{4}=0.
\end{flalign*}

\noindent Clearly, each entry of $\eta_{1t}$ or $\eta_{2t}$ has mean zero and variance one. Let $\rho_{\eta_{1},\eta_{2}}(d)$ be the cross-correlation matrix between $\eta_{1t}$ and $\eta_{2t+d}$.
EGP 1 is designed for the null hypothesis, since $\rho_{\eta_{1},\eta_{2}}(d)=0_{2\times2}$ for all $d$ in this case.
EGPs 2-6 are set for the alternative hypotheses, since they pose a linear or non-linear dependence structure between
$\eta_{1t}$ and $\eta_{2t}$. Specifically, a linear dependence structure between $\eta_{1t}$ and $\eta_{2t}$ exists
in EGP 2, with $\rho_{\eta_{1},\eta_{2}}(d)=0.3I_{2}$ for $d=0$, and 0 otherwise; a non-linear
dependence structure between
$\eta_{1t}$ and $\eta_{2t}$ is induced by the co-factor $u_{1t}$ in EGP 3, the lagged co-factors $u_{1t}$ and $u_{1t+3}$ in EGP 4, and
two correlated co-factors $u_{1t}$ and $u_{2t}$ in EGPs 5 and 6. In EGPs 2-6,
$\eta_{1t}$ and $\eta_{2t}$ are dependent but un-correlated.

Now, we fit each replication by using the least squares estimation method for model
(\ref{5.1}). Denote by $\{\widehat{\eta}_{1t}\}$ and
$\{\widehat{\eta}_{2t}\}$ the residuals from the fitted models.
Based on $\{\widehat{\eta}_{1t}\}$ and
$\{\widehat{\eta}_{2t}\}$,
we compute
$S_{sn}(m)$ and $J_{sn}(M)$ ($S_{sn}$ and $J_{sn}$ in short), with $k$ and $l$ being the Gaussian kernels
and $\sigma=1$. The critical values of all HSIC-based tests are obtained by the
residual bootstrap method with $B=1000$ in Section 4.

Meanwhile, we also compute the test statistics $G_{sn}(M)$ ($G_{sn}$ in short) in El Himdl  and  Roy (1997) and the test statistics $W_{sn}(h)$ ($W_{sn}$ in short) in Bouhaddioui and Roy (2006),
where
\begin{flalign*}
&G_{1n}(M)=\sum_{m=-M}^{M}\widehat{Z}_{n}(m),\,\,\,G_{2n}(M)=\sum_{m=-M}^{M}[n/(n-|m|)]\widehat{Z}_{n}(m),\\
&W_{1n}(h)=\frac{\sum_{m=1-n}^{n-1}[\overline{K}(m/h)]^{2}\widetilde{Z}_{n}(m)-d_{1}d_{2}A_{1n}(h)}{\sqrt{2d_{1}d_{2}B_{1n}(h)}},\\
&W_{2n}(h)=\frac{\sum_{m=1-n}^{n-1}[\overline{K}(m/h)]^{2}\widetilde{Z}_{n}(m)-hd_{1}d_{2}A_{1}}{\sqrt{2hd_{1}d_{2}B_{1}}}.
\end{flalign*}
Here, $\widehat{Z}_{n}(m)=n[vec(R_{12}(m))]^{T}[R_{22}^{-1}(0)\otimes
R_{11}^{-1}(0)][vec(R_{12}(m))]$,
$R_{ij}(m)=D[(\widehat{r}_{ii}(0))^{-1/2}]$
$\widehat{r}_{ij}(m)D[(\widehat{r}_{jj}(0))^{-1/2}]$,
$\widehat{r}_{ij}(m)$ is the sample cross-covariance matrix between
$\{\widehat{\eta}_{it}\}$ and $\{\widehat{\eta}_{jt+m}\}$,
$\widetilde{Z}_{n}(m)$ is defined in the same way as $\widehat{Z}_{n}(m)$ with
$\widehat{\eta}_{st}$ being replaced by $\widetilde{\eta}_{st}$,
$\widetilde{\eta}_{st}$ is the residual from a fitted
VAR($p$) model for $Y_{st}$, $\overline{K}(\cdot)$ is a kernel function, $h$ stands for the bandwidth,
$A_{1}=\int_{-\infty}^{\infty}[\overline{K}(z)]^{2}dz$,
$B_{1}=\int_{-\infty}^{\infty}[\overline{K}(z)]^{4}dz$, and
\begin{flalign*}
A_{1n}(h)&=\sum_{m=1-n}^{n-1}(1-|m|/n)[\overline{K}(m/h)]^{2},\\
B_{1n}(h)&=\sum_{m=1-n}^{n-1}(1-|m|/n)(1-(|m|+1)/n)[\overline{K}(m/h)]^{4}.
\end{flalign*}
Note that $G_{1n}$ is for testing the cross-correlation between $\eta_{1t}$ and $\eta_{2t}$, and $G_{2n}$ is its modified version for small $n$; $W_{1n}$ is towards the same goal as $G_{1n}$ but with ability to detect the cross-correlation beyond lag $M$, and $W_{2n}$ is the modified version of $W_{1n}$.  Under certain
conditions, the limiting null distribution of $G_{1n}$ or $G_{2n}$
is $\chi^{2}_{(2M+1)d_{1}d_{2}}$, and that of $W_{1n}$ or $W_{2n}$
is $N(0,1)$.

In all simulation studies, we set $m=0$ and $3$ for the single HSIC-based tests $S_{sn}(m)$, and set $M=3$ and $6$ for the joint HSIC-based test $J_{sn}(M)$. Because $S_{1n}(0)=S_{2n}(0)$, the results of $S_{2n}(0)$ are absent.
For $G_{sn}(M)$, we choose $M=3, 6$ and $9$. For $W_{sn}(h)$, we follow Hong (1996) to
choose $p=3$ (or 6) when $n=100$ (or 200), and
use the kernel function $\overline{K}(z)=\sin(\pi
z)/(\pi z)$ (Daniel kernel) with the bandwidth $h=h_{1}, h_{2}$ or $h_{3}$, where  $h_{1}=[\log(n)], h_{2}=[3n^{0.2}]$, and $h_{3}=[3n^{0.3}]$. The
significance level $\alpha$ is set to be $1\%,5\%$ and 10\%.

\begin{table}
\caption{The sizes and power ($\times$100) of all tests for model (\ref{5.1}) at $\alpha=1\%,5\%$ and $10\%$}
\label{tab:1}       
\renewcommand{\arraystretch}{1.2}
\centering \scriptsize\addtolength{\tabcolsep}{-4.8pt}
\begin{tabular}{ c   ccc ccc   ccc ccc    ccc ccc  }
\hline
\multicolumn{1}{c}{}&\multicolumn{6}{c}{$\mbox{EGP 1}$}&\multicolumn{6}{c}{$\mbox{EGP 2}$}&\multicolumn{6}{c}{$\mbox{EGP 3}$}\\
\cmidrule(lr){2-7}\cmidrule(lr){8-13}\cmidrule(lr){14-19}
\multicolumn{1}{c}{}&\multicolumn{3}{c}{$n=100$}&\multicolumn{3}{c}{$n=200$}&\multicolumn{3}{c}{$n=100$}&\multicolumn{3}{c}{$n=200$}&\multicolumn{3}{c}{$n=100$}&\multicolumn{3}{c}{$n=200$}\\
\cmidrule(lr){2-7}\cmidrule(lr){8-13}\cmidrule(lr){14-19}
\multicolumn{1}{c}{Tests}&\multicolumn{1}{c}{1\%}&\multicolumn{1}{c}{5\%}&\multicolumn{1}{c}{10\%}&\multicolumn{1}{c}{1\%}&\multicolumn{1}{c}{5\%}&\multicolumn{1}{c}{10\%}&\multicolumn{1}{c}{1\%}&\multicolumn{1}{c}{5\%}&\multicolumn{1}{c}{10\%}&\multicolumn{1}{c}{1\%}&\multicolumn{1}{c}{5\%}&\multicolumn{1}{c}{10\%}&\multicolumn{1}{c}{1\%}&\multicolumn{1}{c}{5\%}&\multicolumn{1}{c}{10\%}&\multicolumn{1}{c}{1\%}&\multicolumn{1}{c}{5\%}&\multicolumn{1}{c}{10\%}\\
\cmidrule(lr){1-1}\cmidrule(lr){2-7}\cmidrule(lr){8-13}\cmidrule(lr){14-19}
$S_{1n}(0)$&0.7&5.1&11.7&1.6&5.2&11.7&47.1&69.1&79.9&85.5&95.2&97.4&80.2&94.5&97.9&99.3&100&100\\
$S_{1n}(3)$&0.6&5.4&11.4&0.7&4.3&10.9&1.1&5.5&13.0&0.6&4.9&9.9&0.8&5.1&10.6&1.1&5.9&10.0\\
$S_{2n}(3)$&1.2&5.6&12.1&1.3&4.6&9.9&1.0&5.1&11.4&1.5&5.3&9.9&1.0&5.5&11.2&0.8&4.1&9.1\\
$J_{1n}(3)$&0.7&5.3&12.3&1.2&5.2&11.5&19.4&44.5&58.4&55.1&78.4&85.4&30.7&64.4&79.9&88.0&96.8&98.8\\
$J_{1n}(6)$&0.9&6.2&14.6&1.1&6.1&13.6&12.5&32.4&48.2&40.3&66.1&76.8&11.6&37.0&55.7&66.4&89.0&95.1\\
$J_{2n}(3)$&1.4&7.1&12.5&1.8&6.7&13.9&19.3&42.2&57.4&54.8&78.3&87.0&31.9&61.7&77.6&86.7&96.8&98.3\\
$J_{2n}(6)$&1.1&6.8&13.2&1.7&6.5&12.1&13.2&32.9&47.3&38.3&62.7&76.6&10.4&36.9&56.0&66.0&87.5&94.1\\\\
$G_{1n}(3)$&0.5&3.6&7.6&0.7&5.0&10.1&17.3&41.5&57.1&69.1&88.4&93.0&10.9&23.9&33.4&14.7&29.3&39.4\\
$G_{1n}(6)$&0.4&2.8&7.8&0.6&4.2&9.6&17.3&41.5&57.1&43.5&70.9&83.5&5.3&14.6&24.9&8.5&21.6&32.8\\
$G_{1n}(9)$&0.4&1.5&4.9&0.2&3.3&6.8&8.1&25.0&39.1&29.4&55.1&69.3&2.9&10.0&16.6&6.3&17.0&25.2\\
$G_{2n}(3)$&0.9&4.2&8.6&0.7&5.5&10.5&18.3&43.3&59.4&69.5&89.0&93.6&11.9&25.2&35.5&15.2&29.9&40.7\\
$G_{2n}(6)$&0.6&4.6&10.4&1.0&5.4&10.9&12.5&30.3&45.0&45.8&72.8&84.4&6.6&18.4&29.6&10.2&24.4&34.8\\
$G_{2n}(9)$&0.7&4.1&9.1&0.6&4.5&9.5&7.9&25.4&36.6&34.1&60.2&74.7&5.0&15.7&23.8&8.3&19.9&28.8\\\\
$W_{1n}(h_{1})$&0.9&5.2&9.4&2.2&6.9&12.8&45.6&64.9&75.2&87.5&93.9&96.9&24.2&37.4&46.9&27.2&42.4&51.1\\
$W_{1n}(h_{2})$&0.8&4.3&8.4&1.7&6.3&12.4&30.3&53.0&65.7&78.3&89.4&93.4&18.8&30.3&39.4&21.4&36.9&46.0\\
$W_{1n}(h_{3})$&1.0&5.4&9.4&1.6&5.4&12.5&19.6&44.5&57.3&59.6&80.2&88.0&12.6&25.3&35.5&15.1&29.4&39.6\\
$W_{2n}(h_{1})$&0.6&4.2&7.6&2.1&6.2&11.7&41.1&62.4&72.9&86.1&93.2&96.5&21.6&35.6&44.3&25.7&40.9&50.0\\
$W_{2n}(h_{2})$&0.4&3.2&5.6&1.4&5.0&9.8&23.1&46.4&59.4&74.3&87.7&92.1&14.7&26.2&34.3&19.2&33.5&43.5\\
$W_{2n}(h_{3})$&0.3&1.7&4.9&0.9&3.3&6.8&11.0&28.5&43.3&49.5&73.8&83.0&8.2&17.9&24.9&10.3&22.8&31.7\\\\
\multicolumn{1}{c}{}&\multicolumn{6}{c}{$\mbox{EGP 4}$}&\multicolumn{6}{c}{$\mbox{EGP 5}$}&\multicolumn{6}{c}{$\mbox{EGP 6}$}\\
\cmidrule(lr){2-7}\cmidrule(lr){8-13}\cmidrule(lr){14-19}
\multicolumn{1}{c}{}&\multicolumn{3}{c}{$n=100$}&\multicolumn{3}{c}{$n=200$}&\multicolumn{3}{c}{$n=100$}&\multicolumn{3}{c}{$n=200$}&\multicolumn{3}{c}{$n=100$}&\multicolumn{3}{c}{$n=200$}\\
\cmidrule(lr){2-7}\cmidrule(lr){8-13}\cmidrule(lr){14-19}
\multicolumn{1}{c}{Tests}&\multicolumn{1}{c}{1\%}&\multicolumn{1}{c}{5\%}&\multicolumn{1}{c}{10\%}&\multicolumn{1}{c}{1\%}&\multicolumn{1}{c}{5\%}&\multicolumn{1}{c}{10\%}&\multicolumn{1}{c}{1\%}&\multicolumn{1}{c}{5\%}&\multicolumn{1}{c}{10\%}&\multicolumn{1}{c}{1\%}&\multicolumn{1}{c}{5\%}&\multicolumn{1}{c}{10\%}&\multicolumn{1}{c}{1\%}&\multicolumn{1}{c}{5\%}&\multicolumn{1}{c}{10\%}&\multicolumn{1}{c}{1\%}&\multicolumn{1}{c}{5\%}&\multicolumn{1}{c}{10\%}\\
\cmidrule(lr){1-1}\cmidrule(lr){2-7}\cmidrule(lr){8-13}\cmidrule(lr){14-19}
$S_{1n}(0)$&0.4&4.4&10.1&0.6&4.1&9.5&23.7&50.5&65.2&58.7&84.0&91.9&36.8&64.3&76.3&77.2&91.9&95.7\\
$S_{1n}(3)$ &0.4&3.7&7.9&0.4&3.9&9.5&0.5&4.2&9.2&0.7&4.3&9.4&0.5&3.1&7.8&0.8&4.7&9.8\\
$S_{2n}(3)$&75.5&92.0&96.3&99.2&99.9&100&0.7&1.0&3.5&3.0&4.5&9.1&0.4&3.0&7.6&0.6&4.4&8.4\\
$J_{1n}(3)$&0.3&2.6&6.5&0.4&2.7&7.8&4.5&23.6&34.4&20.7&46.3&60.4&7.6&25.3&41.9&35.8&63.8&75.7\\
$J_{1n}(6)$&0.3&1.7&5.2&0.2&2.1&5.3&1.3&9.5&19.3&9.0&28.8&45.4&1.7&12.4&25.5&17.9&40.5&57.5\\
$J_{2n}(3)$&28.4&57.2&76.2&86.7&96.5&98.5&4.7&21.5&32.4&19.3&45.7&59.7&5.6&23.6&38.8&35.4&63.0&75.9\\
$J_{2n}(6)$&9.7&34.3&53.7&64.4&88.1&94.6&1.9&8.5&19.4&8.8&27.5&45.9&1.8&10.3&23.4&11.3&22.9&31.9\\\\
$G_{1n}(3)$&10.4&21.4&31.9&12.8&27.1&38.4&5.5&14.7&23.7&8.1&19.6&28.0&3.9&12.7&20.3&4.9&14.2&24.8\\
$G_{1n}(6)$&4.6&13.7&21.4&8.4&19.8&30.2&2.0&9.6&16.7&3.9&14.2&24.6&2.8&8.8&15.2&2.9&10.6&16.3\\
$G_{1n}(9)$&2.9&8.3&15.4&5.4&15.6&24.5&1.4&5.3&12.3&2.7&10.6&17.5&1.7&6.9&11.2&2.1&7.9&13.9\\
$G_{2n}(3)$&12.3&24.7&35.5&13.8&28.6&39.7&6.1&15.9&25.3&8.3&20.2&29.4&4.2&13.7&22.9&5.0&14.6&25.5\\
$G_{2n}(6)$&7.0&17.8&26.8&9.0&22.9&32.6&3.2&12.8&21.3&4.6&16.5&26.1&3.7&11.6&19.3&3.3&11.6&19.0\\
$G_{2n}(9)$&4.8&14.6&25.8&7.0&19.6&27.9&2.6&11.1&19.5&4.5&13.0&22.5&3.1&10.4&18.7&2.7&9.8&17.6\\\\
$W_{1n}(h_{1})$&2.8&9.6&16.5&6.6&15.7&24.8&14.1&20.5&34.1&16.0&28.3&35.7&11.6&21.7&30.8&11.3&22.9&31.9\\
$W_{1n}(h_{2})$&7.9&16.9&25.1&10.9&23.6&34.1&10.5&19.2&29.4&12.9&23.5&34.2&8.1&17.4&27.0&8.8&18.3&27.6\\
$W_{1n}(h_{3})$&8.7&18.2&27.1&10.7&25.9&35.7&6.9&18.2&26.2&9.2&19.9&29.6&6.7&15.9&24.1&5.5&15.1&21.8\\
$W_{2n}(h_{1})$&2.3&8.2&14.1&6.3&14.8&23.4&13.2&19.9&32.1&15.5&26.9&34.2&10.0&19.7&22.6&10.5&21.9&30.2\\
$W_{2n}(h_{2})$&6.3&13.6&20.1&9.2&20.6&30.4&8.2&16.5&23.6&11.7&20.7&31.6&6.5&13.9&20.6&7.2&16.5&24.0\\
$W_{2n}(h_{3})$&5.6&11.8&17.5&8.3&18.2&29.1&4.0&10.8&17.5&6.5&15.1&21.3&3.2&9.3&15.4&3.6&10.3&16.9\\
\hline
\multicolumn{19}{l}{$\dagger\mbox{ For }W_{sn}, h_{1}=[\log(n)], h_{2}=[3n^{0.2}] \mbox{ and } h_{3}=[3n^{0.3}]$}
\end{tabular}
\end{table}

Table \ref{tab:1} reports the power of all tests for model
(\ref{5.1}), and the sizes of all tests are corresponding to those in EGP 1.
From this table, our findings are as follows:

(i) The sizes of all single HSIC-based tests $S_{sn}$  are close to
their nominal ones in most cases, while the sizes of other tests
are a little unsatisfactory. For instance, $J_{sn}$ are slightly oversized especially at $\alpha=5\%$ and $10\%$, while
 $W_{1n}$ (or $W_{2n}$) is slightly oversized (or undersized) when $n=200$ (or 100) at all levels. The size performance of
 $G_{sn}$ depends on $M$: a larger value of $M$ leads to a more undersized behavior especially at $\alpha=10\%$, although
 $G_{2n}$ in general has a better performance than $G_{1n}$.

(ii) In all examined cases, the single HSIC-based test $S_{1n}(0)$ is much more powerful than other tests in EGPs 2-3 and 5-6, and
the single HSIC-based test $S_{2n}(3)$ has a significant power advantage in EGP 4. These results are expected, since
$S_{1n}(0)$ and $S_{2n}(3)$ are tailored to examine the dependence at specific lags $0$ and $3$, respectively, which are
the set-ups of our EGPs.

(iii) For the linear dependence case (i.e., EGP 2), the joint HSIC-based tests $J_{sn}$ have a comparable power
performance as $G_{sn}$, and they are much less powerful than $W_{1n}(h_{1})$ but much more powerful than $W_{2n}(h_{3})$
when $n=100$. For the non-linear dependence case (i.e., EGPs 3-6), the joint HSIC-based tests $J_{sn}$ in general are much more powerful than
the tests $G_{sn}$ and $W_{sn}$ especially when $n=200$. The only exception is
$J_{1n}$ in EGP 4, since $J_{1n}$ can not detect the dependence between $\eta_{1t+m}$ and $\eta_{2t}$ at lag $m=3$.
In contrast, $J_{2n}$ performs very well here.

(iv) In all examined cases, the power of  $J_{sn}$ and $G_{sn}$ decreases as the value of $M$ increase, while
this tendency is vague for  $W_{sn}$.

Overall, our single HSIC-based tests are very powerful in detecting  dependence at  specific lags, and
our joint HSIC-based tests exhibit a significant power advantage in detecting  non-linear dependence, which can not be easily examined by other tests.

\subsection{Conditional variance models}
We generate 1000 replications of sample size $n$ from the following two conditional variance models:
\begin{flalign}
&\left\{\begin{array}{l}
Y_{1t}=V_{1t}^{1/2}\eta_{1t}\,\,\,\mbox{ and }\,\,\,V_{1t}=(v_{1t,ij})_{i,j=1,2},\\
Y_{2t}=V_{2t}^{1/2}\eta_{2t}\,\,\,\mbox{ and }\,\,\,V_{2t}=(v_{2t,ij})_{i,j=1,2},\\
\mbox{with }\\
\left(
\begin{array}{ccc}
v_{1t,11}\\
v_{1t,22}\\
v_{1t,12}\\
\end{array}
\right)=\left(
\begin{array}{cc}
0.2+0.5v_{1t-1,11}+0.1Y_{1 t-1,1}^2 \\
0.2+0.5v_{1t-1,22}+0.1Y_{1 t-1,2}^2 \\
0.5\sqrt{v_{1t-1,11}v_{1t-1,22}}
\end{array}
\right),\\\left(
\begin{array}{ccc}
v_{2t,11}\\
v_{2t,22}\\
v_{2t,12}\\
\end{array}
\right)=\left(
\begin{array}{cc}
0.3+0.4v_{2t-1,11}+0.2Y_{2t-1,1}^2 \\
0.3+0.4v_{2t-1,22}+0.2Y_{2t-1,2}^2 \\
0.6\sqrt{v_{2t-1,11}v_{2t-1,22}}
\end{array}
\right),\\
\end{array}
\right.\label{5.2}
\end{flalign}
where $\{\eta_{1t}\}$ and $\{\eta_{2t}\}$ are two sequences of i.i.d. random vectors generated as for model (\ref{5.1}).
Model (\ref{5.2}) contains two CC-MGARCH models studied in Tse (2002).
For each replication, we fit the above models by using the Gaussian-QMLE method. Denote by $\{\widehat{\eta}_{1t}\}$ and
$\{\widehat{\eta}_{2t}\}$ the residuals from the fitted models.
Based on $\{\widehat{\eta}_{1t}\}$ and
$\{\widehat{\eta}_{2t}\}$,
we compute
$S_{sn}(m)$ and $J_{sn}(M)$, and their critical values as for model (\ref{5.1}).

At the same time, we also compute the test statistics $L_{sn}(M)$ and $T_{sn}(M)$  ($L_{sn}$ and $T_{sn}$ in short) in Tchahou and Duchesne (2013), where
\begin{flalign*}
&L_{1n}(M)=\sum_{m=-M}^Mn\rho_{\widehat{q}_{1t},\widehat{q}_{2t}}^{2}(m),\ \  L_{2n}(M)=\sum_{m=-M}^M[n^2/(n-|m|)]\rho_{\widehat{q}_{1t},\widehat{q}_{2t}}^{2}(m),\\
&T_{1n}(M)=\sum_{m=-M}^Mn\cdot tr(C_{12}^{T}(m)C_{11}^{-1}(0)C_{12}(m)C_{22}^{-1}(0)),\\
&T_{2n}(M)=\sum_{m=-M}^M[n^{2}/(n-|m|)]\cdot tr(C_{12}^{T}(m)C_{11}^{-1}(0)C_{12}(m)C_{22}^{-1}(0)).
\end{flalign*}
Here, $\rho_{\widehat{q}_{1t},\widehat{q}_{2t}}(m)$ is the sample cross-correlation between $\{\widehat{q}_{1t}\}$  and $\{\widehat{q}_{2t+m}\}$,
$C_{ij}(m)$ is the sample cross-covariance matrix between $\{\widehat{\varphi}_{it}\}$ and $\{\widehat{\varphi}_{jt+m}\}$,
$\widehat{q}_{st}=\widehat{\eta}_{st}^{T}\widehat{\eta}_{st}$, and $\widehat{\varphi}_{st}=vech(\widehat{\eta}_{st}\widehat{\eta}_{st}^{T})$.
It is worth noting that $L_{1n}$ (or $T_{1n}$) is for testing the cross-correlation between two transformed (or original)
residuals, and $L_{2n}$ (or $T_{2n}$) is its modified version for small $n$.
Under certain
conditions, the limiting null distribution of $L_{1n}$ or $L_{2n}$
is $\chi^{2}_{(2M+1)}$, and that of $T_{1n}$ or $T_{2n}$
is $\chi^{2}_{(2M+1)d_{1}^{*}d_{2}^{*}}$, where $d_{s}^{*}=d_{s}(d_{s}+1)/2$ for $s=1, 2$.

In all simulation studies, we choose the values of $m$ and $M$ as for model (\ref{5.1}). The
significance level $\alpha$ is set to be $1\%,5\%$ and 10\%. Table \ref{tab:2} summarizes the
power results of all tests for model
(\ref{5.2}), and the sizes of all tests are corresponding to those in EGP 1.
From this table, our findings are as follows:

(i) The sizes of all tests are close to their nominal ones, although most of $T_{sn}$ are slightly oversized.

(ii) Similar to the results in model (\ref{5.1}), the single HSIC-based test $S_{1n}(0)$ or $S_{1n}(3)$ as expected is
the most powerful one among all tests.

(iii) For the linear dependence case (i.e., EGP 2), all joint HSIC-based tests $J_{sn}$ are much more powerful than
$L_{sn}$ and $T_{sn}$. For the non-linear dependence case (i.e., EGP 3-6), all $J_{sn}$ still have larger power than $L_{sn}$ and $T_{sn}$ in most cases, but this advantage is small especially for $J_{sn}(6)$. There are two exceptions that some $J_{sn}$ exhibit low power: first, $J_{1n}(3)$ and $J_{1n}(6)$ as argued for model (\ref{5.1}) have no power in EGP 4; second, $J_{2n}(6)$ is less powerful than most of $L_{sn}$ and $T_{sn}$ especially for $n=200$. Since the cross-correlation between $\eta_{1t}^2$ and $\eta_{2t}^2$ is high in EGPs 2-6, the relative good power performance of
$L_{sn}$ and $T_{sn}$ in some cases is not out of our expectation.

(iv) For the tests $J_{sn}$, $L_{sn}$ and $T_{sn}$, their power decreases as the value of $M$ increases in all examined cases.

\begin{table}
\begin{center}
\caption{The sizes and power ($\times$100) of all tests for model (\ref{5.2}) at $\alpha=1\%,5\%$ and $10\%$}
\label{tab:2}       
\renewcommand{\arraystretch}{1.2}
\centering \scriptsize\addtolength{\tabcolsep}{-4.8pt}
\begin{tabular}{ c   ccc ccc   ccc ccc    ccc ccc   }
\hline
\multicolumn{1}{c}{}&\multicolumn{6}{c}{$\mbox{EGP 1}$}&\multicolumn{6}{c}{$\mbox{EGP 2}$}&\multicolumn{6}{c}{$\mbox{EGP 3}$}\\
\cmidrule(lr){2-7}\cmidrule(lr){8-13}\cmidrule(lr){14-19}
\multicolumn{1}{c}{}&\multicolumn{3}{c}{$n=200$}&\multicolumn{3}{c}{$n=300$}&\multicolumn{3}{c}{$n=200$}&\multicolumn{3}{c}{$n=300$}&\multicolumn{3}{c}{$n=200$}&\multicolumn{3}{c}{$n=300$}\\
\cmidrule(lr){2-7}\cmidrule(lr){8-13}\cmidrule(lr){14-19}
\multicolumn{1}{c}{Tests}&\multicolumn{1}{c}{1\%}&\multicolumn{1}{c}{5\%}&\multicolumn{1}{c}{10\%}&\multicolumn{1}{c}{1\%}&\multicolumn{1}{c}{5\%}&\multicolumn{1}{c}{10\%}&\multicolumn{1}{c}{1\%}&\multicolumn{1}{c}{5\%}&\multicolumn{1}{c}{10\%}&\multicolumn{1}{c}{1\%}&\multicolumn{1}{c}{5\%}&\multicolumn{1}{c}{10\%}&\multicolumn{1}{c}{1\%}&\multicolumn{1}{c}{5\%}&\multicolumn{1}{c}{10\%}&\multicolumn{1}{c}{1\%}&\multicolumn{1}{c}{5\%}&\multicolumn{1}{c}{10\%}\\
\cmidrule(lr){1-1}\cmidrule(lr){2-7}\cmidrule(lr){8-13}\cmidrule(lr){14-19}
$S_{1n}(0)$&0.7&4.3&10.5&1.6&5.4&9.2&100&100&100&100&100&100&100&100&100&100&100&100\\
$S_{1n}(3)$&1.2&5.2&11.0&0.5&5.1&10.1&1.3&5.8&10.8&1.5&5.8&9.6&0.8&4.1&8.9&0.8&5.4&10.8\\
$S_{2n}(3)$&1.1&4.5&9.3&0.6&4.6&9.7&0.9&5.1&9.3&0.9&4.6&9.3&1.2&4.9&9.5&1.2&4.5&8.6\\
$J_{1n}(3)$&0.7&4.5&10.7&0.8&4.7&9.0&99.2&99.9&99.9&100&100&100&97.7&99.6&99.8&100&100&100\\
$J_{1n}(6)$&0.7&3.7&9.1&0.4&4.1&8.8&91.3&98.5&99.4&99.8&100&100&85.9&96.5&98.6&99.2&100&100\\
$J_{2n}(3)$&0.8&4.1&9.2&1.0&5.5&11.6&98.6&99.8&99.9&100&100&100&97.8&99.6&100&100&100&100\\
$J_{2n}(6)$&0.6&4.0&9.0&1.0&4.9&10.3&91.0&97.8&99.1&99.9&100&100&83.8&96.4&98.8&95.5&95.9&96.0\\\\

$L_{1n}(3)$&1.2&3.9&9.9&1.3&6.1&10.0&15.7&34.8&46.3&32.2&54.3&65.4&87.6&91.2&92.7&92.4&94.4&95.0\\
$L_{1n}(6)$&1.1&4.3&9.2&0.9&5.6&11.3&8.5&25.2&37.7&22.0&41.5&54.8&82.0&88.4&90.7&90.0&92.4&93.2\\
$L_{1n}(9)$&0.9&3.6&9.2&1.1&4.5&9.5&9.5&18.8&30.8&15.8&35.3&47.9&78.2&85.2&88.2&88.4&91.5&92.3\\
$L_{2n}(3)$&1.2&4.1&10.1&1.3&6.2&10.3&16.0&35.2&46.6&32.4&54.5&65.5&87.6&91.2&92.7&92.4&94.4&95.0\\
$L_{2n}(6)$&1.5&5.2&10.5&1.0&5.8&12.1&9.0&26.0&38.7&22.6&42.0&55.5&82.4&88.5&90.8&90.0&92.4&93.2\\
$L_{2n}(9)$&0.9&4.4&11.5&1.3&4.8&10.5&6.1&20.5&32.3&16.9&36.7&49.2&78.6&85.8&88.6&88.4&91.6&92.4\\\\

$T_{1n}(3)$&2.1&6.7&11.9&2.2&6.4&11.6&39.5&60.4&70.1&61.7&77.4&84.5&79.5&85.6&87.4&87.0&90.4&92.1\\
$T_{1n}(6)$&1.7&6.5&11.6&1.6&6.2&11.4&26.3&41.5&54.3&45.9&63.1&72.7&68.3&76.5&79.3&77.9&83.5&86.5\\
$T_{1n}(9)$&1.3&5.8&10.8&1.2&4.8&9.9&14.8&31.2&41.6&32.3&53.7&64.4&60.7&70.7&74.9&72.2&78.4&81.4\\
$T_{2n}(3)$&2.2&7.4&12.8&2.3&6.7&12.7&41.0&60.8&70.9&61.5&78.0&84.5&79.9&85.7&87.8&87.2&91.0&92.1\\
$T_{2n}(6)$&2.2&7.8&13.4&2.0&7.5&12.5&25.1&45.9&57.7&47.5&64.5&74.3&69.3&77.4&80.3&78.6&83.9&87.2\\
$T_{2n}(9)$&2.6&7.5&13.5&1.5&7.0&12.5&18.4&36.7&48.3&35.3&58.0&68.0&63.6&73.2&76.4&73.8&79.4&82.1\\\\
\multicolumn{1}{c}{}&\multicolumn{6}{c}{$\mbox{EGP 4}$}&\multicolumn{6}{c}{$\mbox{EGP 5}$} &\multicolumn{6}{c}{$\mbox{EGP 6}$}\\
\cmidrule(lr){2-7}\cmidrule(lr){8-13}\cmidrule(lr){14-19}
\multicolumn{1}{c}{}&\multicolumn{3}{c}{$n=200$}&\multicolumn{3}{c}{$n=300$}&\multicolumn{3}{c}{$n=200$}&\multicolumn{3}{c}{$n=300$}&\multicolumn{3}{c}{$n=200$}&\multicolumn{3}{c}{$n=300$}\\
\cmidrule(lr){2-7}\cmidrule(lr){8-13}\cmidrule(lr){14-19}
\multicolumn{1}{c}{Tests}&\multicolumn{1}{c}{1\%}&\multicolumn{1}{c}{5\%}&\multicolumn{1}{c}{10\%}&\multicolumn{1}{c}{1\%}&\multicolumn{1}{c}{5\%}&\multicolumn{1}{c}{10\%}&\multicolumn{1}{c}{1\%}&\multicolumn{1}{c}{5\%}&\multicolumn{1}{c}{10\%}&\multicolumn{1}{c}{1\%}&\multicolumn{1}{c}{5\%}&\multicolumn{1}{c}{10\%}&\multicolumn{1}{c}{1\%}&\multicolumn{1}{c}{5\%}&\multicolumn{1}{c}{10\%}&\multicolumn{1}{c}{1\%}&\multicolumn{1}{c}{5\%}&\multicolumn{1}{c}{10\%}\\
\cmidrule(lr){1-1}\cmidrule(lr){2-7}\cmidrule(lr){8-13}\cmidrule(lr){14-19}
$S_{1n}(0)$&0.5&3.7&7.7&0.5&4.4&9.7&76.3&89.4&94.4&92.1&98.5&99.3&92.4&97.8&99.1&98.8&99.8&99.8\\
$S_{1n}(3)$&1.0&4.3&8.9&1.0&4.1&10.1&0.6&3.9&9.0&0.7&4.9&9.1&0.8&4.5&10.3&1.0&4.5&10.1\\
$S_{2n}(3)$&100&100&100&100&100&100&0.7&4.7&9.2&0.6&5.2&9.2&0.7&3.5&7.8&0.6&4.6&9.5\\
$J_{1n}(3)$&0.3&2.5&6.5&0.7&3.9&8.6&33.9&61.2&73.5&61.8&82.0&88.9&56.4&80.2&88.0&86.3&95.3&97.9\\
$J_{1n}(6)$&0.3&1.3&4.1&0.3&3.4&7.0&13.6&40.2&56.6&38.1&64.0&76.6&30.5&57.8&72.2&66.8&85.3&93.0\\
$J_{2n}(3)$&97.1&99.4&99.8&100&100&100&30.1&61.3&74.7&62.0&81.2&89.0&56.6&78.8&87.5&85.9&95.1&98.1\\
$J_{2n}(6)$&83.1&97.0&98.4&99.8&100&100&12.8&38.2&55.3&36.7&63.5&77.1&27.8&57.8&71.7&64.7&84.6&91.9\\\\

$L_{1n}(3)$&86.6&91.2&92.1&93.2&94.4&95.1&51.9&61.1&70.2&66.7&76.4&80.9&49.6&64.8&73.4&68.1&79.5&85.3\\
$L_{1n}(6)$&80.7&87.2&89.4&90.7&93.2&94.3&42.7&57.3&64.3&57.3&69.5&75.6&41.0&57.1&64.1&58.4&72.9&79.0\\
$L_{1n}(9)$&75.1&84.1&86.1&87.9&91.8&92.8&37.6&52.2&59.1&51.6&63.8&70.0&31.8&51.8&59.1&52.7&67.8&74.9\\
$L_{2n}(3)$&87.0&91.4&92.3&93.2&94.4&95.1&52.0&61.2&71.3&66.7&76.5&81.5&49.7&65.0&73.5&68.1&79.6&85.5\\
$L_{2n}(6)$&81.3&87.4&89.7&90.7&93.2&94.3&43.3&58.3&65.0&57.6&69.7&75.8&41.6&57.1&64.5&58.5&73.0&79.1\\
$L_{2n}(9)$&76.6&84.8&87.4&88.0&91.9&93.0&38.1&52.9&60.3&52.0&64.1&70.7&33.1&53.1&60.5&53.4&68.5&75.5\\\\

$T_{1n}(3)$&80.5&85.6&88.1&88.1&90.5&92.2&51.7&59.8&64.4&58.1&67.5&72.2&43.8&55.1&61.2&56.2&65.5&70.1\\
$T_{1n}(6)$&67.2&75.6&79.3&79.8&85.4&87.8&43.2&52.3&57.1&48.2&60.1&65.3&34.7&45.8&52.7&44.5&55.7&61.8\\
$T_{1n}(9)$&60.4&69.0&72.6&71.7&78.5&82.1&37.7&46.7&52.1&41.7&51.8&57.3&29.3&40.4&46.2&40.1&50.4&55.8\\
$T_{2n}(3)$&86.6&91.2&92.1&88.1&90.7&92.3&52.0&59.2&65.2&58.9&67.7&72.6&44.4&55.1&62.8&56.7&65.7&70.4\\
$T_{2n}(6)$&68.7&77.2&81.2&81.0&86.3&88.2&44.9&53.3&57.8&49.5&60.9&66.4&36.9&47.4&54.3&45.7&57.3&62.5\\
$T_{2n}(9)$&63.6&70.8&76.0&73.3&79.9&82.9&40.1&49.0&55.3&43.5&53.8&58.9&32.2&43.7&49.6&42.0&52.5&59.0\\
\hline
\end{tabular}
\end{center}
\end{table}

Overall, our single HSIC-based tests as usual have  good power in detecting dependence at  specific lags, and
our joint HSIC-based tests could be more powerful than other tests in detecting either linear or non-linear dependence.

\section{A real example}
In this section, we study two bivariate time series. The first bivariate time series
consist of two index series from the Russian market and the Indian market: the Russia Trading System Index (RTSI) and the Bombay Stock Exchange Sensitive Index (BSESI).  The second bivariate time series include two Chinese indexes: the ShangHai Securities Composite index (SHSCI) and
 the ShenZhen Index (SZI).  The data are observed on a daily basis (from Monday to Friday), beginning on 8 October 2014, and ending on 29 September 2017.  In all there were 1088 days, missing data due to holidays are removed before the analysis, and hence the final data set include $n=672$ daily observations. The resulting four time series are denoted by   \{RTSI$_t$;  $t=1,\ldots,n\}$,  \{BSESI$_t$;  $t=1,\ldots,n\}$, \{SHSCI$_t$; $t=1,\ldots,n\}$ and \{SZI$_t$;  $t=1,\ldots,n\}$,  respectively.

As usual, we consider the log-return of each data set:
 \begin{flalign*}
&Y_{1t}= \left(
\begin{array}{c}
Y_{1t,1}\\
Y_{1t,2}
\end{array}
\right)=\left(
\begin{array}{ccc}
\log(\mbox{RTSI}_t)-\log(\mbox{RTSI}_{t-1})\\
\log(\mbox{BSESI}_t)-\log(\mbox{BSESI}_{t-1})\\
\end{array}
\right),\\
&Y_{2t}= \left(
\begin{array}{c}
Y_{2t,1}\\
Y_{2t,2}
\end{array}
\right)=\left(
\begin{array}{ccc}
\log(\mbox{SHSCI}_t)-\log(\mbox{SHSCI}_{t-1})\\
\log(\mbox{SZI}_t)-\log(\mbox{SZI}_{t-1})\\
\end{array}
\right).
\end{flalign*}
An investigation on the ACF and PACF of $Y_{1t,1}, Y_{1t,2},Y_{2t,1},Y_{2t,2}$ and their squares
indicates that they do not have a conditional mean structure but a conditional variance structure.
Motivated by this, we use the following BEKK model with Gaussian-QMLE method to fit $Y_{1t}$ and $Y_{2t}$:
\begin{flalign*}
 Y_{st}&=\Sigma_{st}^{1/2}\eta_{st},\\
 \Sigma_{st}&=A_s+B_{s1}^{T}Y_{1t-1}Y_{1t-1}^{T}B_{s1}+\cdots+B_{sp}^{T}Y_{1t-p}Y_{1t-p}^{T}B_{sp}\\
 &+C_{s1}^{T}\Sigma_{st-1}C_{s1}+\cdots+C_{sq}^{T}\Sigma_{st-q}C_{sq}
 \end{flalign*}
for $s=1, 2$, where $A_s=C_{s0}^{T}C_{s0}$ with $C_{s0}$ being a triangular $2\times 2$ matrix, and $B_{s1},\cdots,B_{sp}, C_{s1},\ldots,C_{sq}$ are all $2\times 2$ diagonal matrixes. Table \ref{tab:3} reports the estimates for both fitted models. The p-values of portmanteau tests
$Q(3)$, $Q(6)$ and $Q(9)$ in Ling and Li (1997) are $0.7698,0.5179, 0.5967$ for $Y_{1t}$ and $0.5048, 0.7328, 0.8746$ for $Y_{2t}$. This implies that both fitted BEKK models are adequate.
\begin{table}[!htbp]
\caption{Estimation results for both fitted BEKK models}
\label{tab:3}       
\begin{tabular}{ c c c c c c }
\hline
\multicolumn{1}{c}{$\mbox{Parameters}$}&\multicolumn{2}{c}{Estimates}&\multicolumn{1}{c}{$\mbox{Parameters}$}&\multicolumn{2}{c}{Estimates}\\
\cmidrule(lr){1-1}\cmidrule(lr){2-3}\cmidrule(lr){4-4}\cmidrule(lr){5-6}
$A_1$&$\hat{a}_{1,11}$&0.2832$\times 10^{-3}$          & $A_2$& $\hat{a}_{2,11}$ &0.2528$\times 10^{-5}$ \\
&$\hat{a}_{1,12}$&0.0050$\times 10^{-3}$                     &&$\hat{a}_{2,12}$&0.3856$\times 10^{-5}$\\
&$\hat{a}_{1,22}$&0.0022$\times 10^{-3}$                     &&$\hat{a}_{2,22}$&0.6714$\times 10^{-5}$\\\\

$B_{11}$&$\hat{b}_{11,11}$&0.4662    &$B_{21}$& $\hat{b}_{21,11}$&0.3098 \\
&$\hat{b}_{11,22}$&-0.0619                   &&$\hat{b}_{21,22}$&0.3195\\\\

$B_{12}$&$\hat{b}_{12,11}$&-0.1149                      &$B_{22}$&$\hat{b}_{22,11}$&-0.1264 \\
&$\hat{b}_{12,22}$&0.3357                                   &&$\hat{b}_{22,22}$&-0.0692\\\\

$C_{11}$&$\hat{c}_{11,11}$&0.3569                     &$C_{21}$ &$\hat{c}_{21,11}$&0.6808 \\
&$\hat{c}_{11,22}$&0.2222                                   &&$\hat{c}_{21,22}$&0.6783 \\\\

$C_{12}$&$\hat{c}_{12,11}$&0.5370  &$C_{22}$ &$\hat{c}_{22,11}$&0.6431 \\
&$\hat{c}_{12,22}$&0.9027  &&$\hat{c}_{22,22}$&0.6455 \\
\hline
\multicolumn{6}{l}{$\dagger$\mbox{ Note that} $A_{s}$ \mbox{is a symmetric matrix}, \mbox{and all }$B_{sj}$ \mbox{and} $C_{sj}$ \mbox{are \mbox{diagonal matrixes.}}}
\end{tabular}
\end{table}

Next, we apply our joint HSIC-based tests $J_{sn}(M)$ to
check whether $Y_{1t}$ and $Y_{2t}$ behave independently of each other. As a comparison, we
also consider the tests $L_{sn}(M)$ and $T_{sn}(M)$ for the testing purpose.
 Table \ref{tab:4} reports the $p$-value for  all six tests.
 From Table \ref{tab:4}, we find that except for $J_{2n}(M)$ with $M\geq7$, all examined
 joint HSIC-based tests  $J_{sn}(M)$ convey
 strong evidence that $Y_{1t}$ and $Y_{2t}$ are not independent.
 However, neither $L_{sn}(M)$ nor $T_{sn}(M)$ is able to do this for $M\geq 2$.

To get more information, we further plot the values of the single version of $J_{sn}$, $L_{1n}$ and $T_{1n}$ in Fig 1. That is, Fig 1
plots the values of $S_{sn}(m)$, $L_{1n, s}(m)$, and $T_{1n, s}(m)$ for $m\geq 0$, where
\begin{flalign*}
L_{1n,1}(m)&=n\rho_{\widehat{q}_{1t},\widehat{q}_{2t}}^{2}(m),\,\,\, L_{1n,2}(m)=n\rho_{\widehat{q}_{1t},\widehat{q}_{2t}}^{2}(-m),\\
T_{1n,1}(m)&=n\cdot tr(C_{12}^{T}(m)C_{11}^{-1}(0)C_{12}(m)C_{22}^{-1}(0)), \\
T_{1n,2}(m)&=n\cdot tr(C_{12}^{T}(-m)C_{11}^{-1}(0)C_{12}(-m)C_{22}^{-1}(0)),
\end{flalign*}
and all notations are inherited from Section 5.2. The limiting null distribution of $L_{1n,s}(m)$  is $\chi^2_1$, and that of $T_{1n,s}(m)$ is $\chi_9^{2}$. Similar to
$S_{sn}(m)$,
$L_{1n,s}(m)$ and $T_{1n,s}(m)$ capture the linear dependence between $\eta_{1t}$ and $\eta_{1t+m}$ at the specific lag $m$.
The corresponding single version results for $L_{2n}$ and $T_{2n}$ are similar to those for $L_{1n}$ and $T_{1n}$,
and hence they are not displayed here.

 \begin{table}[!htbp]
\caption{The p-value for all six joint  tests up to lag $M=0,1,\ldots,10$.}
\label{tab:4}       
\begin{tabular}{ c c c c c c c }
\hline
&\multicolumn{6}{c}{Tests}\\
\cmidrule(lr){2-7}
\multicolumn{1}{c}{M}    &  $J_{1n}$& $J_{2n}$ &$L_{1n}$ & $L_{2n}$  & $T_{1n}$& $T_{2n}$ \\
    \cmidrule(lr){1-1}\cmidrule(lr){2-2}\cmidrule(lr){3-3}\cmidrule(lr){4-4}\cmidrule(lr){5-5}\cmidrule(lr){6-6}\cmidrule(lr){7-7}
0  &0.0000    &0.0000    &0.0134   &0.0134   &0.0000   &0.0000\\
1  &0.0000    &0.0000    &0.0428   &0.0428   &0.0125   &0.0124\\
2  &0.0000    &0.0000    &\textbf{0.0881}   &\textbf{0.0879} &\textbf{0.1965}   &\textbf{0.1956}   \\
3  &0.0000    &0.0260    &\textbf{0.0610}   &\textbf{0.0605} &\textbf{0.1055}   &\textbf{0.1035}  \\
4  &0.0000    &0.0040    &\textbf{0.1137}   &\textbf{0.1128} &\textbf{0.2979}   &\textbf{0.2927}   \\
5  &0.0090    &0.0240    &\textbf{0.2111}   &\textbf{0.2095} &\textbf{0.4640}   &\textbf{0.4557}   \\
6  &0.0230    &0.0280    &\textbf{0.2762}   &\textbf{0.2739} &\textbf{0.5958}   &\textbf{0.5851}   \\
7  &0.0220    &\textbf{0.0720}    &\textbf{0.3315}   &\textbf{0.3282} &\textbf{0.7093}   &\textbf{0.6972}   \\
8  &0.0280    &\textbf{0.0730}    &\textbf{0.4079}  &\textbf{0.4037} &\textbf{0.6708}   &\textbf{0.6540}  \\
9  &0.0450    &\textbf{0.0830}    &\textbf{0.4491}   &\textbf{0.4437}  &\textbf{0.7645}   &\textbf{0.7475}   \\
10 &0.0230    &\textbf{0.1040}   &\textbf{0.5761}   &\textbf{0.5706} &\textbf{0.8359}   &\textbf{0.8199} \\
\hline
\multicolumn{7}{l}{$\dagger$ A p-value larger than 5\% is in boldface.}
\end{tabular}
\end{table}

From Fig 1, we first find that all single tests indicate a strong contemporaneously causal relationship between the Chinese market and the Russian and Indian (R\&I) market. Second, $S_{1n}(1)$ implies that the R\&I market has significant influence on the Chinese market one day later, while according to $S_{2n}(3)$ (or $S_{2n}(10)$), the impact of the Chinese market to the R\&I market appears after three (or ten) days.
These findings demonstrate an asymmetric causal relationship between two markets.
Since none of examined $L_{1n,s}(m)$ and $T_{1n,s}(m)$ can detect a causal relationship for $m\geq1$,
the contemporaneous causal relationship mainly results in the significance of $L_{sn}(1)$ and $T_{sn}(1)$ in Table 4, and
the lagged causal relationship is possible to be non-linear.
As the R\&I market has a higher degree of
globalization and marketization,  it could have a quicker impact to other economies. On the contrary,
the Chinese market is more localized, and its influence to other economies tends to be slower but can last for a
longer term. This long-term effect may be caused by ``the Belt and Road Initiatives'' program raised by Chinese government
since 2015. Hence, the asymmetric phenomenon between two markets seems reasonable, and
it may help the government to make more efficient policy and the investors to design more useful investment strategies.

\begin{figure}[!htbp]
\begin{center}
\includegraphics[width=28pc,height=16pc]{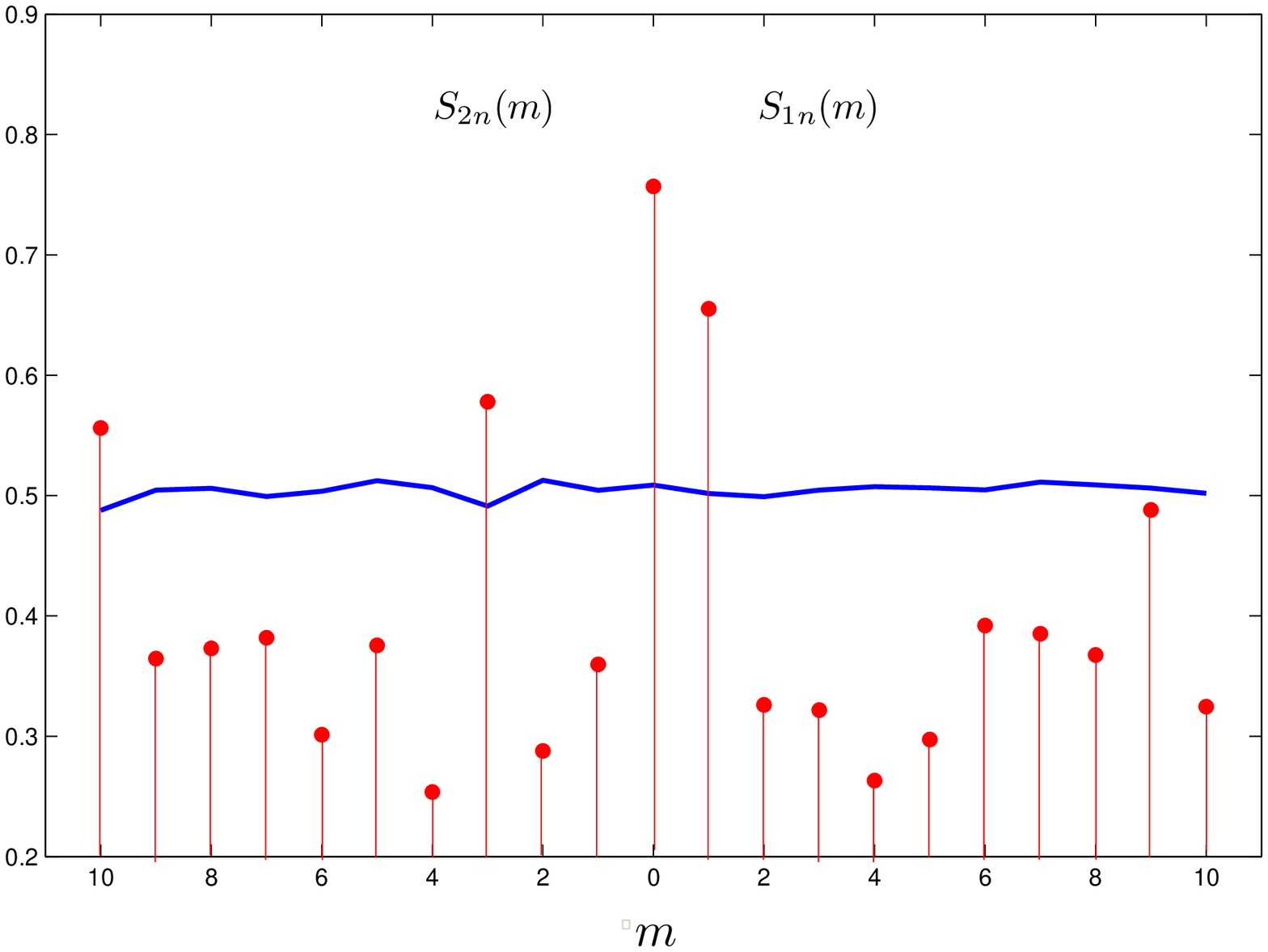}
\includegraphics[width=28pc,height=16pc]{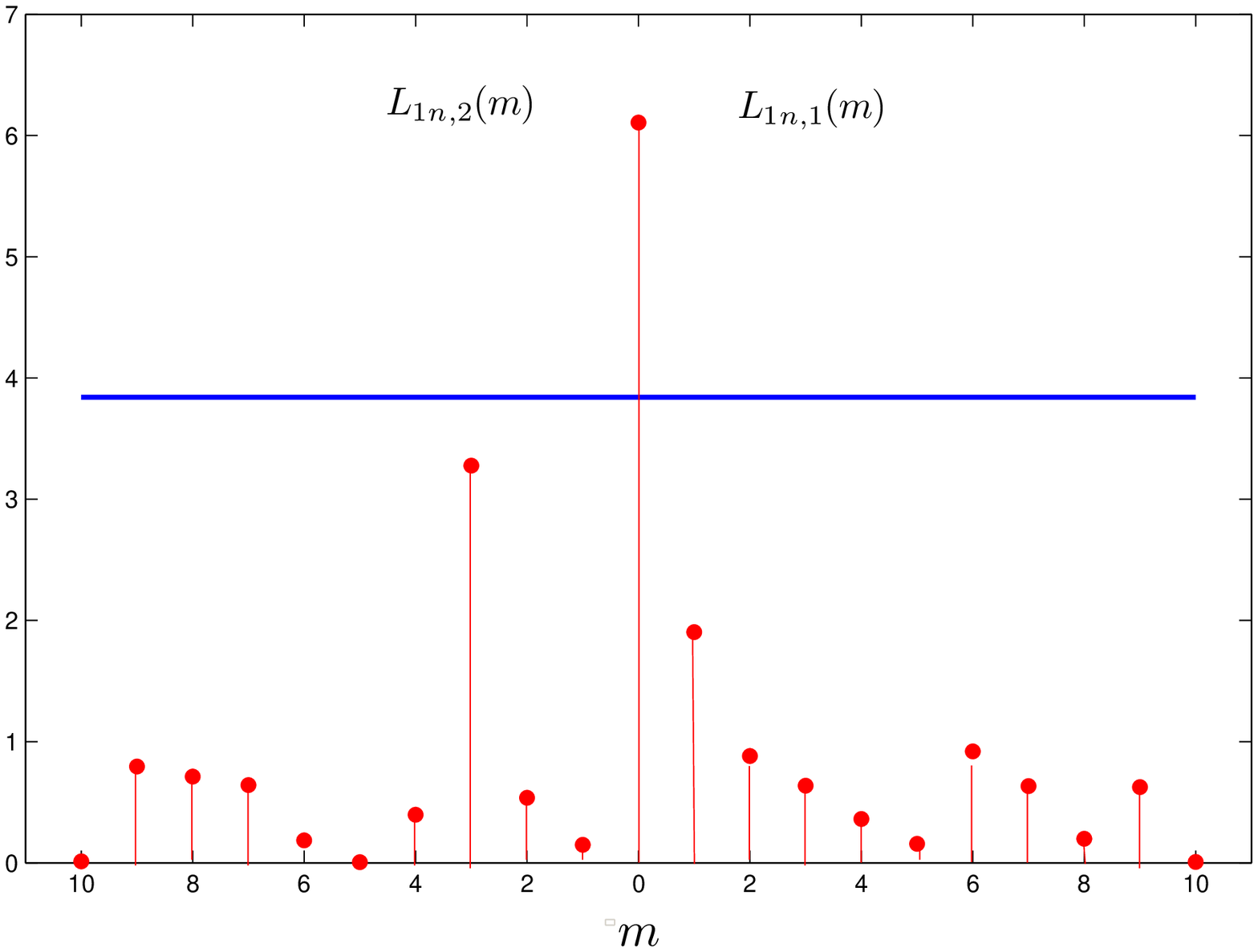}
\includegraphics[width=28pc,height=16pc]{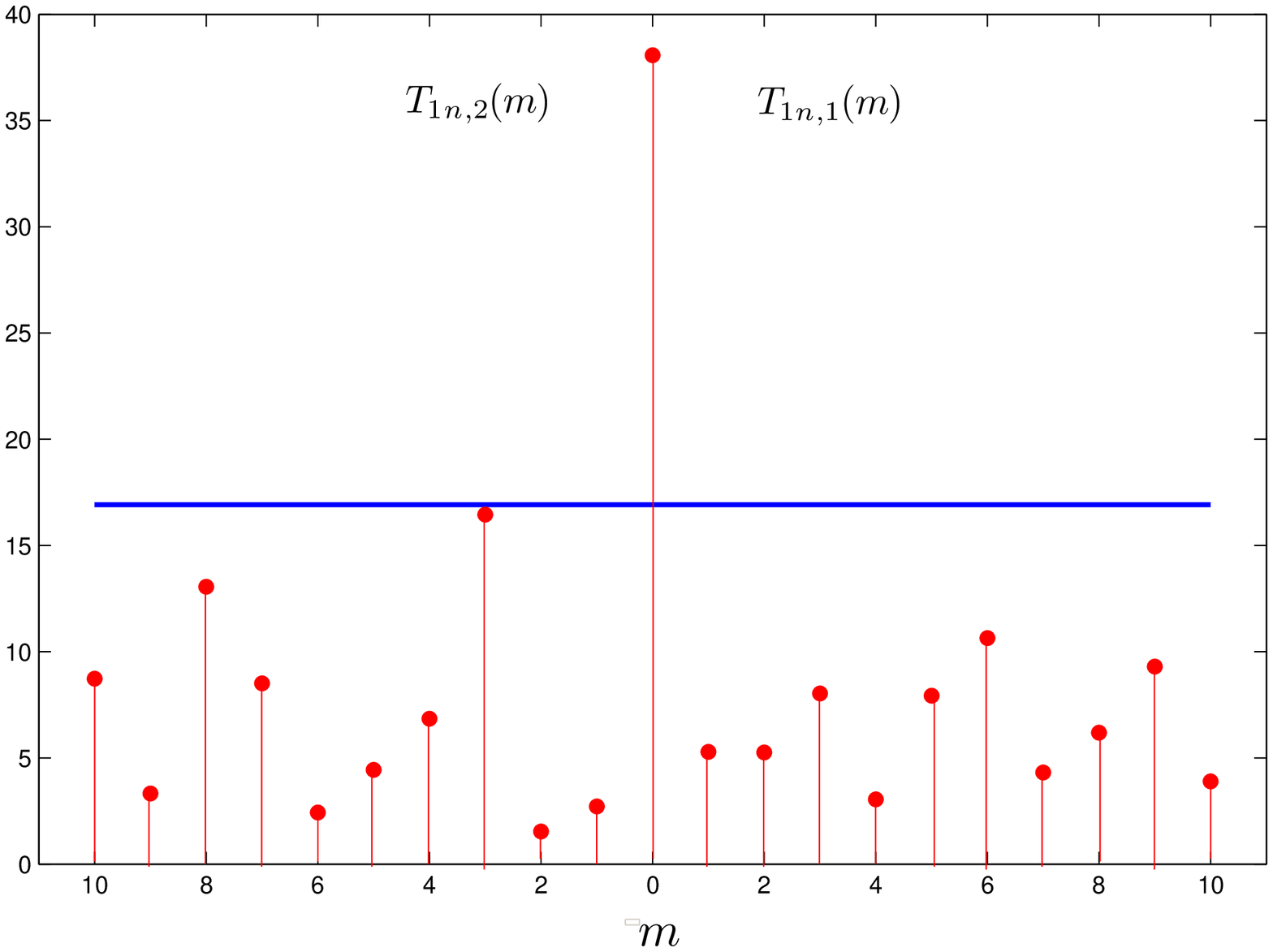}
\caption{\label{figure1}The values of single tests $S_{1n}(m)$, $L_{1n,1}(m)$ and $T_{1n,1}(m)$ (right panel)  across $m$, and the values of single tests $S_{2n}(m)$, $L_{1n,2}(m)$ and $T_{1n,2}(m)$ (left panel) across $m$.
The solid lines are  95\% one-sided confidence bounds of the tests.}
\end{center}
\end{figure}

%

\section{Concluding remarks}

In this paper, we apply the HSIC principle to derive some novel one-sided omnibus tests for
detecting  independence between two multivariate stationary time
series. The resulting HSIC-based tests have asymptotical Gaussian representation under the null hypothesis, and they are shown to be consistent.
A residual bootstrap method is used to obtain the critical values for our HSIC-based tests, and its validity is justified.
Unlike the existing cross-correlation-based tests for linear dependence, our HSIC-based tests look for the general dependence between two
un-observable innovation vectors, and hence they can give investigators more complete
information on the causal relationship between two time series.
The importance of our HSIC-based tests is illustrated by simulation results and real data analysis.
Due to the generality of the HSIC method, the methodology developed in this paper may be applied to many other
important testing problems such as testing for model adequacy (Davis et al. 2016),
testing for independence among multi-dynamic systems (Pfister et al. 2017), or testing for independence in high dimensional systems (Yao et al. 2017). We leave these interesting topics as potential future study.

\appendix
\section*{Appendix: Proofs}

This appendix provides the proofs of all lemmas and theorems. To facilitate it,
the results of V-statistics are needed below, and they can be found in Hoeffding (1948) and Lee (1990) for the
i.i.d. case and Yoshihara (1976) and Denker and Keller (1983) for the mixing case.

\vspace{4mm}

\textsc{Proof of Lemma \ref{lem3.1}.}\,\,Denote
$\widehat{z}_{ijqr}=\widehat{k}_{ij}\widehat{l}_{qr}$.  By Taylor's
expansion,
\begin{flalign}
\widehat{z}_{ijqr}&=z_{ijqr}^{(0)}+(\widehat{\eta}_{ijqr}-\eta_{ijqr})^{T}W_{ijqr}\nonumber\\
&\quad+\frac{1}{2}
(\widehat{\eta}_{ijqr}-\eta_{ijqr})^{T}H_{ijqr}^{\dag}(\widehat{\eta}_{ijqr}-\eta_{ijqr})\nonumber\\
&=z_{ijqr}^{(0)}+(\widehat{\eta}_{ijqr}-\eta_{ijqr})^{T}W_{ijqr}\nonumber\\
&\quad+\frac{1}{2}
(\widehat{\eta}_{ijqr}-\eta_{ijqr})^{T}H_{ijqr}(\widehat{\eta}_{ijqr}-\eta_{ijqr})+R_{ijqr}^{(1)},\label{A2}
\end{flalign}
where $z_{ijqr}^{(0)}=k_{ij}l_{qr}$,
$\widehat{\eta}_{ijqr}=(\widehat{\eta}_{1i}^{T},\widehat{\eta}_{1j}^{T},\widehat{\eta}_{2q+m}^{T},\widehat{\eta}_{2r+m}^{T})^{T}$,
$\eta_{ijqr}=(\eta_{1i}^{T},\eta_{1j}^{T},\eta_{2q+m}^{T},$ $\eta_{2r+m}^{T})^{T}$,
$W_{ijqr}=W(\eta_{ijqr})$, $H_{ijqr}=H(\eta_{ijqr})$,
$H_{ijqr}^{\dag}=H(\eta_{ijqr}^{\dag})$, $\eta_{ijqr}^{\dag}$
lies between $\eta_{ijqr}$ and $\widehat{\eta}_{ijqr}$, and
$$R_{ijqr}^{(1)}=(\widehat{\eta}_{ijqr}-\eta_{ijqr})^{T}\left(H_{ijqr}^{\dag}-H_{ijqr}\right)(\widehat{\eta}_{ijqr}-\eta_{ijqr}).$$
Here,
$W:\mathcal{R}^{d_{1}}\times\mathcal{R}^{d_{1}}\times\mathcal{R}^{d_{2}}\times\mathcal{R}^{d_{2}}\to
\mathcal{R}^{(2d_{1}+2d_{2})\times1}$ such that
\begin{flalign*}
&W(u,u',v,v')=\\
&\Big(k_{x}(u,u')^{T}l(v,v'),k_{y}(u,u')^{T}l(v,v'),k(u,u')l_{x}(v,v')^{T},k(u,u')l_{y}(v,v')^{T}\Big)^{T},
\end{flalign*}
and $H:
\mathcal{R}^{d_{1}}\times\mathcal{R}^{d_{1}}\times\mathcal{R}^{d_{2}}\times\mathcal{R}^{d_{2}}\to\mathcal{R}^{2d_{1}+2d_{2}}\times
\mathcal{R}^{2d_{1}+2d_{2}}$ such that
\begin{flalign*}
&H(u,u',v,v')=\\
&\left(\begin{array}{cccc}
k_{xx}(u,u')l(v,v') & k_{xy}(u,u')l(v,v') &
k_{x}(u,u')l_{x}(v,v')^{T} &
k_{x}(u,u')l_{y}(v,v')^{T}\\
* & k_{yy}(u,u')l(v,v') & k_{y}(u,u')l_{x}(v,v')^{T} &
k_{y}(u,u')l_{y}(v,v')^{T}\\
* & * & k(u,u')l_{xx}(v,v') & k(u,u')l_{xy}(v,v')\\
* & * &
* & k(u,u')l_{yy}(v,v')
\end{array}
\right)
\end{flalign*}
is a symmetric matrix.

Next, let $\theta=(\theta_{1}^{T},\theta_{2}^{T})^{T}$ and
$\widehat{\theta}_{n}=(\widehat{\theta}_{1n}^{T},\widehat{\theta}_{2n}^{T})^{T}$,
and denote
$$G_{ijqr}(\theta)=\Big(g_{1i}(\theta_{1})^{T},g_{1j}(\theta_{1})^{T},
g_{2q+m}(\theta_{2})^{T},g_{2r+m}(\theta_{2})^{T}\Big)^{T},$$
where $g_{st}(\theta_{s})$ is defined as in Assumption \ref{asm2.2}. By Taylor's expansion again, we have
\begin{flalign}
\widehat{\eta}_{ijqr}-\eta_{ijqr}&=\overline{R}_{ijqr}^{(2)}
+\frac{\p
G_{ijqr}(\theta^{\dag})}{\p\theta^{T}}(\widehat{\theta}_{n}-\theta_{0}),\label{A3}
\end{flalign}
where $\overline{R}_{ijqr}^{(2)}=
(\widehat{R}_{1i}(\widehat{\theta}_{1n})^{T},\widehat{R}_{1j}(\widehat{\theta}_{1n})^{T},
\widehat{R}_{2q+m}(\widehat{\theta}_{2n})^{T},\widehat{R}_{2r+m}(\widehat{\theta}_{2n})^{T})^{T}$,
$\widehat{R}_{st}(\theta_{s})$ is defined as in Assumption \ref{asm2.4}, and
$\theta^{\dag}$ lies between $\theta_{0}$ and $\widehat{\theta}_{n}$.
For the second term in (\ref{A3}), we rewrite it as
\begin{flalign}
\frac{\p
G_{ijqr}(\theta^{\dag})}{\p\theta^{T}}(\widehat{\theta}_{n}-\theta_{0})=
\overline{R}_{ijqr}^{(3)}+\frac{\p
G_{ijqr}(\theta_{0})}{\p\theta^{T}}(\widehat{\theta}_{n}-\theta_{0}), \label{A4}
\end{flalign}
where $\overline{R}_{ijqr}^{(3)}=\big[\frac{\p
G_{ijqr}(\theta^{\dag})}{\p\theta^{T}}-\frac{\p
G_{ijqr}(\theta_{0})}{\p\theta^{T}}\big](\widehat{\theta}_{n}-\theta_{0})$.

Now, by (\ref{A2})-(\ref{A4}), it follows that
\begin{flalign}
\widehat{z}_{ijqr}=z_{ijqr}^{(0)}+(\widehat{\theta}_{n}-\theta_{0})^{T}z_{ijqr}^{(1)}+\frac{1}{2}
(\widehat{\theta}_{n}-\theta_{0})^{T}z_{ijqr}^{(2)}(\widehat{\theta}_{n}-\theta_{0})+R_{ijqr},\label{A5}
\end{flalign}
where
$z_{ijqr}^{(1)}=\frac{\p
G_{ijqr}(\theta_{0})}{\p\theta}W_{ijqr}$, $z_{ijqr}^{(2)}=\frac{\p
G_{ijqr}(\theta_{0})}{\p\theta}H_{ijqr}\frac{\p
G_{ijqr}(\theta_{0})}{\p\theta^{T}}$, and $R_{ijqr}=R_{ijqr}^{(1)}+R_{ijqr}^{(2)}+R_{ijqr}^{(3)}+R_{ijqr}^{(4)}$
with
\begin{flalign*}
R_{ijqr}^{(2)}&=\left(\overline{R}_{ijqr}^{(2)}+\overline{R}_{ijqr}^{(3)}\right)^{T}W_{ijqr},\\
R_{ijqr}^{(3)}&=\frac{1}{2}\left(\overline{R}_{ijqr}^{(2)}+\overline{R}_{ijqr}^{(3)}\right)^{T}H_{ijqr}
\left(\overline{R}_{ijqr}^{(2)}+\overline{R}_{ijqr}^{(3)}\right),\\
R_{ijqr}^{(4)}&=(\widehat{\theta}_{n}-\theta_{0})^{T}\frac{\p
G_{ijqr}(\theta_{0})}{\p\theta}H_{ijqr}
\left(\overline{R}_{ijqr}^{(2)}+\overline{R}_{ijqr}^{(3)}\right).
\end{flalign*}
By (\ref{A5}), it entails that
\begin{flalign}
S_{1n}(m)&=S_{1n}^{(0)}(m)+(\widehat{\theta}_{n}-\theta_{0})^{T}S_{1n}^{(1)}(m)+
\frac{1}{2}(\widehat{\theta}_{n}-\theta_{0})^{T}S_{1n}^{(2)}(m)(\widehat{\theta}_{n}-\theta_{0})\nonumber\\
&\quad+R_{1n}(m),\label{A6}
\end{flalign}
where
$$S_{1n}^{(p)}(m)=\frac{1}{N^{2}}\sum_{i,j}z_{ijij}^{(p)}+\frac{1}{N^{4}}
\sum_{i,j,q,r}z_{ijqr}^{(p)}
-\frac{2}{N^{3}}\sum_{i,j,q}z_{ijiq}^{(p)}$$ for $p\in\{0,1,2\}$,
and
\begin{flalign}
R_{1n}(m)=\frac{1}{N^{2}}\sum_{i,j}R_{ijij}+\frac{1}{N^{4}}\sum_{i,j,q,r}R_{ijqr}
-\frac{2}{N^{3}}\sum_{i,j,q}R_{ijiq}\label{A7}
\end{flalign}
is the remainder term.

Furthermore, simple algebra shows that
\begin{flalign}
(\widehat{\theta}_{n}-\theta_{0})^{T}z_{ijqr}^{(1)}&=\zeta_{1n}^{T}\overline{k}_{ij}l_{qr}+
\zeta_{2n}^{T}k_{ij}\overline{l}_{qr},\label{A8}\\
(\widehat{\theta}_{n}-\theta_{0})^{T}z_{ijqr}^{(2)}(\widehat{\theta}_{n}-\theta_{0})&=
\zeta_{1n}^{T}\widecheck{k}_{ij}l_{qr}\zeta_{1n}
+
\zeta_{2n}^{T}k_{ij}\widecheck{l}_{qr}\zeta_{2n}+\zeta_{1n}^{T}\left(2\overline{k}_{ij}\overline{l}_{qr}^{T}\right)\zeta_{2n},\label{A9}
\end{flalign}
where  $\overline{k}_{ij}$, $\overline{l}_{ij}$,
$\widecheck{k}_{ij}$, and $\widecheck{l}_{ij}$ are defined  in
(3.1)-(3.4), respectively.
Finally, the conclusion holds by (\ref{A6}) and
(\ref{A8})-(\ref{A9}). This
completes the proof. $\hfill\square$


\vspace{4mm}

\textsc{Proof of Lemma \ref{lem3.2}.}\,\,Without loss of generality, we only prove the results for $m=0$, under which
$N=n$, and $\eta_{t}^{(0)}$ and $\varsigma_{t}^{(0)}$ are denoted by $\eta_{t}:=(\eta_{1t},\eta_{2t})$ and $\varsigma_{t}:=\left(\eta_{1t},\frac{\p
g_{1t}(\theta_{10})}{\p\theta_{1}},\eta_{2t},\frac{\p g_{2t}(\theta_{20})}{\p\theta_{2}}\right)$, respectively, for
notational ease.

(i) Denote $x_{1}=(x_{11},x_{21})$ for
$x_{11}\in\mathcal{R}^{d_{1}}$ and $x_{21}\in\mathcal{R}^{d_{2}}$.
Then, we rewrite
\begin{flalign*}
h^{(0)}_{0}(x_{1},\eta_{2},\eta_{3},\eta_{4})&=\frac{1}{4!}\left[\sum_{t=1,(u,v,w)}^{(2,3,4)}z_{1,uvw}^{(0)}(x_{1})+
\sum_{u=1,(t,v,w)}^{(2,3,4)}z_{2,tvw}^{(0)}(x_{1})\right.\\
&\quad\left.+\sum_{v=1,(t,u,w)}^{(2,3,4)}z_{3,tuw}^{(0)}(x_{1})+\sum_{w=1,(t,u,v)}^{(2,3,4)}
z_{4,tuv}^{(0)}(x_{1})\right]\\
&=:\frac{1}{4!}\left[\Delta_{1}^{(0)}+\Delta_{2}^{(0)}+\Delta_{3}^{(0)}+\Delta_{4}^{(0)}\right],
\end{flalign*}
where
\begin{flalign*}
z_{1,uvw}^{(0)}(x_{1})&=k(x_{11},\eta_{1u})\left[l(x_{21},\eta_{2u})+l(\eta_{2v},\eta_{2w})-2l(x_{21},\eta_{2v})\right],\\
z_{2,tvw}^{(0)}(x_{1})&=k(\eta_{1t},x_{11})\left[l(\eta_{2t},x_{21})+l(\eta_{2v},\eta_{2w})-2l(\eta_{2t},\eta_{2v})\right],\\
z_{3,tuw}^{(0)}(x_{1})&=k(\eta_{1t},\eta_{1u})\left[l(\eta_{2t},\eta_{2u})+l(x_{21},\eta_{2w})-2l(\eta_{2t},x_{21})\right],\\
z_{4,tuv}^{(0)}(x_{1})&=k(\eta_{1t},\eta_{1u})\left[l(\eta_{2t},\eta_{2u})+l(\eta_{2v},x_{21})-2l(\eta_{2t},\eta_{2v})\right].
\end{flalign*}
By the symmetry of $k$ and $l$, the stationarity of $\eta_{1t}$ and
$\eta_{2t}$, and the independence of $\{\eta_{1t}\}$ and $\{\eta_{2t}\}$
under $H_{0}$, simple algebra shows that
\begin{flalign*}
E\Delta_{1}^{(0)}&=6E\left[k(x_{11},\eta_{11})\right]\times
E\left[l(\eta_{21},\eta_{22})-l(x_{21},\eta_{21})\right],\\
E\Delta_{2}^{(0)}&=6E\left[k(x_{11},\eta_{11})\right]\times
E\left[l(x_{21},\eta_{21})-l(\eta_{21},\eta_{22})\right],\\
E\Delta_{3}^{(0)}&=6E\left[k(\eta_{11},\eta_{12})\right]\times
E\left[l(\eta_{21},\eta_{22})-l(x_{21},\eta_{21})\right],\\
E\Delta_{4}^{(0)}&=6E\left[k(\eta_{11},\eta_{12})\right]\times
E\left[l(x_{21},\eta_{21})-l(\eta_{21},\eta_{22})\right].
\end{flalign*}
Hence, it follows that under $H_{0}$, $E[h^{(0)}_{0}(x_{1},\eta_{2},\eta_{3},\eta_{4})]=0$ for all $x_{1}$. This completes the proof of (i).

(ii) We only consider the proof for the case that $a=b=1$, since the proofs of other cases are similar.
Denote $x_{1}=(x_{11},y_{11},x_{21},y_{21})$ for
$x_{11}\in\mathcal{R}^{d_{1}}$, $y_{11}\in\mathcal{R}^{p_{1}\times d_{1}}$, $x_{21}\in\mathcal{R}^{d_{2}}$, and $y_{21}\in
\mathcal{R}^{p_{2}\times d_{2}}$.
Then, we rewrite
\begin{flalign*}
h^{(11)}_{0}(x_{1},\varsigma_{2},\varsigma_{3},\varsigma_{4})&=\frac{1}{4!}\left[\sum_{t=1,(u,v,w)}^{(2,3,4)}z_{1,uvw}^{(11)}(x_{1})+
\sum_{u=1,(t,v,w)}^{(2,3,4)}z_{2,tvw}^{(11)}(x_{1})\right.\\
&\quad\left.+\sum_{v=1,(t,u,w)}^{(2,3,4)}z_{3,tuw}^{(11)}(x_{1})+\sum_{w=1,(t,u,v)}^{(2,3,4)}
z_{4,tuv}^{(11)}(x_{1})\right]\\
&=:\frac{1}{4!}\left[\Delta_{1}^{(11)}+\Delta_{2}^{(11)}+\Delta_{3}^{(11)}+\Delta_{4}^{(11)}\right],
\end{flalign*}
where
\begin{flalign*}
z_{1,uvw}^{(11)}(x_{1})&=\left[y_{11}k_{x}(x_{11},\eta_{1u})+\frac{\p g_{1u}(\theta_{10})}{\p\theta_{1}}k_{x}(\eta_{1u},x_{11})\right]\\
&\quad\quad\times\left[l(x_{21},\eta_{2u})+l(\eta_{2v},\eta_{2w})-2l(x_{21},\eta_{2v})\right],\\
z_{2,tvw}^{(11)}(x_{1})&=\left[\frac{\p g_{1t}(\theta_{10})}{\p\theta_{1}}k_{x}(\eta_{1t},x_{11})+y_{11}k_{x}(x_{11},\eta_{1t})\right]\\
&\quad\quad\times\left[l(\eta_{2t},x_{21})+l(\eta_{2v},\eta_{2w})-2l(\eta_{2t},\eta_{2v})\right],\\
z_{3,tuw}^{(11)}(x_{1})&=\left[\frac{\p g_{1t}(\theta_{10})}{\p\theta_{1}}k_{x}(\eta_{1t},\eta_{1u})+\frac{\p g_{1u}(\theta_{10})}{\p\theta_{1}}k_{x}(\eta_{1u},\eta_{1t})\right]\\
&\quad\quad\times\left[l(\eta_{2t},\eta_{2u})+l(x_{21},\eta_{2w})-2l(\eta_{2t},x_{21})\right],\\
z_{4,tuv}^{(11)}(x_{1})&=\left[\frac{\p g_{1t}(\theta_{10})}{\p\theta_{1}}k_{x}(\eta_{1t},\eta_{1u})+\frac{\p g_{1u}(\theta_{10})}{\p\theta_{1}}k_{x}(\eta_{1u},\eta_{1t})\right]\\
&\quad\quad\times\left[l(\eta_{2t},\eta_{2u})+l(\eta_{2v},x_{21})-2l(\eta_{2t},\eta_{2v})\right].
\end{flalign*}
Here, we have used the fact that
$k_{y}(c,d)=k_{x}(d,c)$ by the
symmetry of $k$. By the stationarity of $\eta_{1t}$ and $\eta_{2t}$, and the
independence of $\{\eta_{1t}\}$ and $\{\eta_{2t}\}$ under $H_{0}$, simple
algebra shows that
\begin{flalign*}
E\Delta_{1}^{(11)}&=-E\Delta_{2}^{(11)}\\
&=\left\{y_{11}Ek_{x}(x_{11},\eta_{11})+E\left[\frac{\p g_{11}(\theta_{10})}{\p\theta_{1}}k_{x}(\eta_{11},x_{11})\right]\right\}\\
&\quad\quad\times\left[4El(\eta_{21},\eta_{22})+2El(\eta_{21},\eta_{23})-6El(x_{21},\eta_{21})\right],\\
E\Delta_{3}^{(11)}&=-E\Delta_{4}^{(11)}\\
&=4E\left[\frac{\p g_{11}(\theta_{10})}{\p\theta_{1}}k_{x}(\eta_{11},\eta_{12})
+\frac{\p g_{12}(\theta_{10})}{\p\theta_{1}}k_{x}(\eta_{12},\eta_{11})\right]\\
&\quad\quad\times\left[El(\eta_{21},\eta_{22})-El(x_{21},\eta_{21})\right]\\
&\quad+2E\left[\frac{\p g_{11}(\theta_{10})}{\p\theta_{1}}k_{x}(\eta_{11},\eta_{13})+\frac{\p g_{13}(\theta_{10})}{\p\theta_{1}}k_{x}(\eta_{13},\eta_{11})\right]\\
&\quad\quad\times\left[El(\eta_{21},\eta_{23})-El(x_{21},\eta_{21})\right].
\end{flalign*}
Hence, it follows that under $H_{0}$,
$E[h^{(11)}_{0}(x_{1},\varsigma_{2},\varsigma_{3},\varsigma_{4})]=0$ for all $x_{1}$. This completes the proof of (ii).

(iii) Denote $x_{1}=(x_{11},y_{11},x_{21},y_{21})$ for
$x_{11}\in\mathcal{R}^{d_{1}}$, $y_{11}\in\mathcal{R}^{p_{1}\times d_{1}}$, $x_{21}\in\mathcal{R}^{d_{2}}$, and $y_{21}\in
\mathcal{R}^{p_{2}\times d_{2}}$.
Then, we rewrite
\begin{flalign*}
h^{(23)}_{0}(x_{1},\varsigma_{2},\varsigma_{3},\varsigma_{4})&=\frac{1}{4!}\left[\sum_{t=1,(u,v,w)}^{(2,3,4)}z_{1,uvw}^{(23)}(x_{1})+
\sum_{u=1,(t,v,w)}^{(2,3,4)}z_{2,tvw}^{(23)}(x_{1})\right.\\
&\quad\left.+\sum_{v=1,(t,u,w)}^{(2,3,4)}z_{3,tuw}^{(23)}(x_{1})+\sum_{w=1,(t,u,v)}^{(2,3,4)}
z_{4,tuv}^{(23)}(x_{1})\right]\\
&=:\frac{1}{4!}\left[\Delta_{1}^{(23)}+\Delta_{2}^{(23)}+\Delta_{3}^{(23)}+\Delta_{4}^{(23)}\right],
\end{flalign*}
where
\begin{flalign*}
z_{1,uvw}^{(23)}(x_{1})&=\left[y_{11}k_{x}(x_{11},\eta_{1u})
+\frac{\p g_{1u}(\theta_{10})}{\p\theta_{1}}k_{x}(\eta_{1u},x_{11})\right]\\
&\quad\quad\times\left[y_{21}l_{x}(x_{21},\eta_{2u})
+\frac{\p g_{2u}(\theta_{20})}{\p\theta_{2}}l_{x}(\eta_{2u},x_{21})
+\frac{\p g_{2v}(\theta_{20})}{\p\theta_{2}}l_{x}(\eta_{2v},\eta_{2w})\right.\\
&\quad\quad\quad\left.
+\frac{\p g_{2w}(\theta_{20})}{\p\theta_{2}}l_{x}(\eta_{2w},\eta_{2v})
-2y_{21}l_{x}(x_{21},\eta_{2v})
-2\frac{\p g_{2v}(\theta_{20})}{\p\theta_{2}}l_{x}(\eta_{2v},x_{21})\right],\\
z_{2,tvw}^{(23)}(x_{1})&=\left[y_{11}k_{x}(x_{11},\eta_{1t})
+\frac{\p g_{1t}(\theta_{10})}{\p\theta_{1}}k_{x}(\eta_{1t},x_{11})\right]\\
&\quad\quad\times\left[y_{21}l_{x}(x_{21},\eta_{2t})
+\frac{\p g_{2t}(\theta_{20})}{\p\theta_{2}}l_{x}(\eta_{2t},x_{21})
+\frac{\p g_{2v}(\theta_{20})}{\p\theta_{2}}l_{x}(\eta_{2v},\eta_{2w})\right.\\
&\quad\quad\quad\left.
+\frac{\p g_{2w}(\theta_{20})}{\p\theta_{2}}l_{x}(\eta_{2w},\eta_{2v})
-2\frac{\p g_{2t}(\theta_{20})}{\p\theta_{2}}l_{x}(\eta_{2t},\eta_{2v})
-2\frac{\p g_{2v}(\theta_{20})}{\p\theta_{2}}l_{x}(\eta_{2v},\eta_{2t})\right],
\end{flalign*}
\begin{flalign*}
z_{3,tuw}^{(23)}(x_{1})&=\left[\frac{\p g_{1t}(\theta_{10})}{\p\theta_{1}}k_{x}(\eta_{1t},\eta_{1u})+\frac{\p g_{1u}(\theta_{10})}{\p\theta_{1}}k_{x}(\eta_{1u},\eta_{1t})\right]\\
&\quad\quad\times\left[y_{21}l_{x}(x_{21},\eta_{2w})
+\frac{\p g_{2t}(\theta_{20})}{\p\theta_{2}}l_{x}(\eta_{2t},\eta_{2u})
+\frac{\p g_{2u}(\theta_{20})}{\p\theta_{2}}l_{x}(\eta_{2u},\eta_{2t})\right.\\
&\quad\quad\quad\left.
+\frac{\p g_{2w}(\theta_{20})}{\p\theta_{2}}l_{x}(\eta_{2w}, x_{21})
-2y_{21}l_{x}(x_{21}, \eta_{2t})
-2\frac{\p g_{2t}(\theta_{20})}{\p\theta_{2}}l_{x}(\eta_{2t}, x_{21})\right],\\
z_{4,tuv}^{(23)}(x_{1})&=\left[\frac{\p g_{1t}(\theta_{10})}{\p\theta_{1}}k_{x}(\eta_{1t},\eta_{1u})+\frac{\p g_{1u}(\theta_{10})}{\p\theta_{1}}k_{x}(\eta_{1u},\eta_{1t})\right]\\
&\quad\quad\times\left[y_{21}l_{x}(x_{21},\eta_{2v})
+\frac{\p g_{2t}(\theta_{20})}{\p\theta_{2}}l_{x}(\eta_{2t},\eta_{2u})
+\frac{\p g_{2u}(\theta_{20})}{\p\theta_{2}}l_{x}(\eta_{2u},\eta_{2t})\right.\\
&\quad\quad\quad\left.
+\frac{\p g_{2v}(\theta_{20})}{\p\theta_{2}}l_{x}(\eta_{2v}, x_{21})
-2\frac{\p g_{2t}(\theta_{20})}{\p\theta_{2}}l_{x}(\eta_{2t},\eta_{2v})
-2\frac{\p g_{2v}(\theta_{20})}{\p\theta_{2}}l_{x}(\eta_{2v},\eta_{2t})\right].
\end{flalign*}
By the stationarity of $\eta_{1t}$ and $\eta_{2t}$, and the
independence of $\{\eta_{1t}\}$ and $\{\eta_{2t}\}$ under $H_{0}$, simple
algebra shows that
\begin{flalign*}
E\Delta_{1}^{(23)}&=-E\Delta_{2}^{(23)}\\
&=\left\{y_{11}Ek_{x}(x_{11},\eta_{11})+E\left[\frac{\p g_{11}(\theta_{10})}{\p\theta_{1}}k_{x}(\eta_{11},x_{11})\right]\right\}\\
&\quad\quad\times\left\{-6y_{21}El_{x}(x_{21},\eta_{21})
-6E\left[\frac{\p g_{21}(\theta_{20})}{\p\theta_{2}}l_{x}(\eta_{21},x_{21})\right]\right.\\
&\quad\quad\quad\quad\left.+4E\left[\frac{\p g_{21}(\theta_{20})}{\p\theta_{2}}l_{x}(\eta_{21},\eta_{22})\right]
+2E\left[\frac{\p g_{21}(\theta_{20})}{\p\theta_{2}}l_{x}(\eta_{21},\eta_{23})\right]\right.\\
&\quad\quad\quad\quad\left.+4E\left[\frac{\p g_{22}(\theta_{20})}{\p\theta_{2}}l_{x}(\eta_{22},\eta_{21})\right]
+2E\left[\frac{\p g_{23}(\theta_{20})}{\p\theta_{2}}l_{x}(\eta_{23},\eta_{21})\right]\right\},\\
E\Delta_{3}^{(11)}&=-E\Delta_{4}^{(11)}+\Upsilon\\
&=4E\left[\frac{\p g_{11}(\theta_{10})}{\p\theta_{1}}k_{x}(\eta_{11},\eta_{12})+\frac{\p g_{12}(\theta_{10})}{\p\theta_{1}}k_{x}(\eta_{12},\eta_{11})\right]\\
&\quad\quad\times\left\{E\left[\frac{\p g_{21}(\theta_{20})}{\p\theta_{2}}l_{x}(\eta_{21},\eta_{22})\right]
-E\left[\frac{\p g_{21}(\theta_{20})}{\p\theta_{2}}l_{x}(\eta_{21},x_{21})\right]\right.\\
&\quad\quad\quad\quad\left.+E\left[\frac{\p g_{22}(\theta_{20})}{\p\theta_{2}}l_{x}(\eta_{22},\eta_{21})\right]
-y_{21}El_{x}(x_{21},\eta_{21})\right\}\\
&\quad+2E\left[\frac{\p g_{11}(\theta_{10})}{\p\theta_{1}}k_{x}(\eta_{11},\eta_{13})+\frac{\p g_{13}(\theta_{10})}{\p\theta_{1}}k_{x}(\eta_{13},\eta_{11})\right]\\
&\quad\quad\times\left\{E\left[\frac{\p g_{21}(\theta_{20})}{\p\theta_{2}}l_{x}(\eta_{21},\eta_{23})\right]
-E\left[\frac{\p g_{21}(\theta_{20})}{\p\theta_{2}}l_{x}(\eta_{21},x_{21})\right]\right.\\
&\quad\quad\quad\quad\left.+E\left[\frac{\p g_{23}(\theta_{20})}{\p\theta_{2}}l_{x}(\eta_{23},\eta_{21})\right]
-y_{21}El_{x}(x_{21},\eta_{21})\right\}.
\end{flalign*}
Hence, it follows that under $H_{0}$,
$E[h^{(23)}_{0}(x_{1},\varsigma_{2},\varsigma_{3},\varsigma_{4})]=\Upsilon$ for all $x_{1}$.
This completes the proof of (iii). $\hfill\square$



\vspace{4mm}

\textsc{Proof of Lemma \ref{lem3.3}.}\,\,
Let
$\mathcal{F}_{i}=\sigma(\mathcal{F}_{1i},\mathcal{F}_{2i})$.
Under $H_{0}$, it is not hard to see that $E(\mathcal{T}_{1i}|\mathcal{F}_{i-1})=E(\mathcal{T}_{1i})=0$
by Lemma \ref{lem3.2}(i). Since $E(\mathcal{T}_{2i}|\mathcal{F}_{i-1})=0$ by Assumption \ref{asm2.3}, it follows that
$E(\mathcal{T}_{i}|\mathcal{F}_{i-1})=0$. Moreover,
by Assumptions \ref{asm2.3} and \ref{asm2.5},
it is straightforward to see that  $E\|\mathcal{T}_{i}\|^{2}<\infty$.
By the central limit theorem for martingale difference sequence (see Corollary 5.26 in White (2001)), it follows that
$\mathcal{T}_{n}\to_{d} \mathcal{T}$
as $n\to\infty$, where $\mathcal{T}$ is a multivariate normal distribution with covariance
matrix $\overline{\mathcal{T}}=\lim_{n\to\infty}var(\mathcal{T}_{n})=E(\mathcal{T}_{1}\mathcal{T}_{1}^{T})$.
$\hfill\square$

\vspace{4mm}


Moreover, we introduce two lemmas below to deal with the remainder term $R_{1n}(m)$ in Lemma \ref{lem3.1}.

\begin{lem} \label{lemB1}
Suppose Assumptions \ref{asm2.1}, \ref{asm2.2}(i) and \ref{asm2.3}-\ref{asm2.5} hold. Then, under
$H_{0}$, $n\|R_{1n}(m)\|=o_{p}(1)$, where $R_{1n}(m)$ is defined as in
(\ref{A7}).
\end{lem}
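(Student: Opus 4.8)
The plan is to start from the explicit form of $R_{1n}(m)$ obtained in the proof of Lemma \ref{lem3.1}, namely $R_{1n}(m)=\frac{1}{N^{2}}\sum_{i,j}R_{ijij}+\frac{1}{N^{4}}\sum_{i,j,q,r}R_{ijqr}-\frac{2}{N^{3}}\sum_{i,j,q}R_{ijiq}$ with $R_{ijqr}=R_{ijqr}^{(1)}+R_{ijqr}^{(2)}+R_{ijqr}^{(3)}+R_{ijqr}^{(4)}$, and to bound the contribution of each of the four remainder pieces separately. The elementary inputs are: $\|\widehat{\theta}_{sn}-\theta_{s0}\|=O_{p}(n^{-1/2})$ by Assumption \ref{asm2.3}; the kernels and their first two partial derivatives are uniformly bounded and Lipschitz by Assumption \ref{asm2.5}, so $\|W_{ijqr}\|\le C$, $\|H_{ijqr}\|\le C$, and $\|H_{ijqr}^{\dag}-H_{ijqr}\|\le C\|\eta_{ijqr}^{\dag}-\eta_{ijqr}\|\le C\|\widehat{\eta}_{ijqr}-\eta_{ijqr}\|$; the truncation errors satisfy $\sum_{t}\sup_{\theta_{s}}\|\widehat{R}_{st}(\theta_{s})\|^{3}=O_{p}(1)$ by Assumption \ref{asm2.4}; and, by Assumption \ref{asm2.2}(i) with the ergodic theorem, the index-averages of $\sup_{\theta_{s}}\|\partial g_{st}(\theta_{s})/\partial\theta_{s}\|^{2}$ and of its second-order analogue are $O_{p}(1)$, whence $\max_{1\le t\le N}\sup_{\theta_{s}}\|\partial g_{st}(\theta_{s})/\partial\theta_{s}\|=o_{p}(n^{1/2})$ and $N^{-1}\sum_{t}\sup_{\theta_{s}}\|\partial g_{st}(\theta_{s})/\partial\theta_{s}\|^{3}=o_{p}(n^{1/2})$.

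Next I would treat the pieces quadratic in $\widehat{\eta}_{ijqr}-\eta_{ijqr}$. For $R_{ijqr}^{(1)}=(\widehat{\eta}_{ijqr}-\eta_{ijqr})^{T}(H_{ijqr}^{\dag}-H_{ijqr})(\widehat{\eta}_{ijqr}-\eta_{ijqr})$ the Lipschitz bound above gives $\|R_{ijqr}^{(1)}\|\le C\|\widehat{\eta}_{ijqr}-\eta_{ijqr}\|^{3}$; inserting the expansion $\widehat{\eta}_{ijqr}-\eta_{ijqr}=\overline{R}_{ijqr}^{(2)}+\frac{\partial G_{ijqr}(\theta^{\dag})}{\partial\theta^{T}}(\widehat{\theta}_{n}-\theta_{0})$ and expanding the cube, the part purely cubic in $\widehat{\theta}_{n}-\theta_{0}$ is controlled by $\|\widehat{\theta}_{n}-\theta_{0}\|^{3}=O_{p}(n^{-3/2})$ times the index-average of $\sup\|\partial G_{ijqr}(\cdot)\|^{3}$, which is $o_{p}(n^{1/2})$, while the pure-truncation and mixed parts are handled by Assumption \ref{asm2.4} and H\"older's inequality. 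The same reasoning, together with the fact that $\overline{R}_{ijqr}^{(3)}=[\frac{\partial G_{ijqr}(\theta^{\dag})}{\partial\theta^{T}}-\frac{\partial G_{ijqr}(\theta_{0})}{\partial\theta^{T}}](\widehat{\theta}_{n}-\theta_{0})$ is $O_{p}(\|\widehat{\theta}_{n}-\theta_{0}\|^{2})$ uniformly in the indices by the second-derivative moment bound in Assumption \ref{asm2.2}(i), disposes of $R_{ijqr}^{(3)}$, which is quadratic in $\overline{R}_{ijqr}^{(2)}+\overline{R}_{ijqr}^{(3)}$, and of $R_{ijqr}^{(4)}=(\widehat{\theta}_{n}-\theta_{0})^{T}\frac{\partial G_{ijqr}(\theta_{0})}{\partial\theta}H_{ijqr}(\overline{R}_{ijqr}^{(2)}+\overline{R}_{ijqr}^{(3)})$, which carries an extra factor $\widehat{\theta}_{n}-\theta_{0}$.

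The remaining piece, $R_{ijqr}^{(2)}=(\overline{R}_{ijqr}^{(2)}+\overline{R}_{ijqr}^{(3)})^{T}W_{ijqr}$, is the \emph{delicate} one: its $\overline{R}_{ijqr}^{(3)}$ part is again absorbed by the $\|\widehat{\theta}_{n}-\theta_{0}\|^{2}$ bound, but $\overline{R}_{ijqr}^{(2)}$ enters only linearly while the truncation errors are controlled merely through their cube-sum. Here I would expand $(\overline{R}_{ijqr}^{(2)})^{T}W_{ijqr}$ componentwise --- recalling that $\overline{R}_{ijqr}^{(2)}$ stacks $\widehat{R}_{1i}(\widehat{\theta}_{1n})$, $\widehat{R}_{1j}(\widehat{\theta}_{1n})$, $\widehat{R}_{2,q+m}(\widehat{\theta}_{2n})$ and $\widehat{R}_{2,r+m}(\widehat{\theta}_{2n})$ --- insert each component into the $V$-statistic combination $\frac{1}{N^{2}}\sum_{i,j}+\frac{1}{N^{4}}\sum_{i,j,q,r}-\frac{2}{N^{3}}\sum_{i,j,q}$, and bound the resulting averaged sums using the boundedness of $W$, the cube-summability of the truncation errors, and $\|\widehat{\theta}_{n}-\theta_{0}\|=O_{p}(n^{-1/2})$, arguing that each such sum is $o_{p}(n^{-1})$.

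The hard part, accordingly, is this last bookkeeping for the terms carrying the truncation remainders $\widehat{R}_{st}$ --- above all the first-order truncation term in $R^{(2)}$ and its counterpart in $R^{(4)}$ --- where one must extract the rate $o_{p}(n^{-1})$ from the interplay of Assumption \ref{asm2.4}, the $\sqrt{n}$-consistency of $\widehat{\theta}_{sn}$, and the uniform boundedness of the kernels and their derivatives, rather than settling for the cruder $O_{p}(n^{-1})$ that a completely naive estimate produces. Collecting the bounds for $R^{(1)},\dots,R^{(4)}$ then yields $\|R_{1n}(m)\|=o_{p}(n^{-1})$, i.e. $n\|R_{1n}(m)\|=o_{p}(1)$.
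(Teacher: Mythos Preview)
Your overall decomposition and the treatment of $R^{(1)}$, $R^{(3)}$ and $R^{(4)}$ are along the right lines, but you have misidentified the delicate term in $R^{(2)}$ and your stated argument for it does not close.

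The gap is in your sentence ``its $\overline{R}_{ijqr}^{(3)}$ part is again absorbed by the $\|\widehat{\theta}_{n}-\theta_{0}\|^{2}$ bound''. Using only Assumption~\ref{asm2.2}(i), the crude estimate
\[
\big\|[\overline{R}_{ijqr}^{(3)}]^{T}W_{ijqr}\big\|
\le C\,\|\widehat{\theta}_{n}-\theta_{0}\|^{2}
\left(\sup_{\theta}\Big\|\tfrac{\partial^{2} g_{1i}}{\partial\theta_{1}^{2}}\Big\|+\cdots\right)
\]
gives, after averaging over $i,j,q,r$, a bound of order $O_{p}(n^{-1})$, hence $n\|\Delta_{2}^{(2)}\|=O_{p}(1)$, \emph{not} $o_{p}(1)$. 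The extra factor of $n^{-1}$ is not available from moment bounds alone. The paper obtains it by a different mechanism: writing
$[\overline{R}_{ijqr}^{(3)}]^{T}W_{ijqr}
=(\widehat{\theta}_{1n}-\theta_{10})^{T}k_{ij}^{\dag}l_{qr}
+(\widehat{\theta}_{2n}-\theta_{20})^{T}k_{ij}l_{qr}^{\dag}$,
Taylor-expanding $k_{ij}^{\dag}=(\theta_{1}^{\dag}-\theta_{10})^{T}k_{ij}^{\S}$, and then observing that the resulting $V$-statistic
\[
\Delta_{21}^{(2)\S}=\frac{1}{n^{2}}\sum_{i,j}k_{ij}^{\S}l_{ij}
+\frac{1}{n^{4}}\sum_{i,j,q,r}k_{ij}^{\S}l_{qr}
-\frac{2}{n^{3}}\sum_{i,j,q}k_{ij}^{\S}l_{iq}
\]
is, under $H_{0}$, \emph{degenerate of order~1} (by exactly the same computation as in Lemma~\ref{lem3.2}(ii)), so that $\Delta_{21}^{(2)\S}=O_{p}(n^{-1})$. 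Combined with $\|\theta_{1}^{\dag}-\theta_{10}\|=o_{p}(1)$ and $\|\widehat{\theta}_{1n}-\theta_{10}\|=O_{p}(n^{-1/2})$, this yields $n\|\Delta_{2}^{(2)}\|=o_{p}(1)$. This degeneracy argument is precisely where $H_{0}$ enters and is the real heart of the lemma; without it one only gets $O_{p}(1)$, which is why Lemma~\ref{lemB2} (no $H_{0}$) can only claim $\sqrt{n}\|R_{1n}(m)\|=o_{p}(1)$.

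By contrast, the truncation part $[\overline{R}_{ijqr}^{(2)}]^{T}W_{ijqr}$ that you flag as ``the hard part'' is handled in the paper by a direct bound using Assumption~\ref{asm2.4}, with no appeal to the $V$-statistic cancellation. So your diagnosis of where the difficulty lies is inverted: the degeneracy under $H_{0}$, not the bookkeeping of the $\widehat{R}_{st}$, is what drives the $o_{p}(n^{-1})$ rate.
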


\begin{proof}\,\, As for the proof of Lemma \ref{lem3.2}, we only prove the result for $m=0$. Rewrite
$
R_{1n}(0)=R_{n}^{(1)}+R_{n}^{(2)}+R_{n}^{(3)}+R_{n}^{(4)}
$,
where
$$R_{n}^{(d)}=\frac{1}{n^{2}}\sum_{i,j}R_{ijij}^{(d)}+\frac{1}{n^{4}}\sum_{i,j,q,r}R_{ijqr}^{(d)}
-\frac{2}{n^{3}}\sum_{i,j,q}R_{ijiq}^{(d)}$$
for $d=1,2,3,4$, and $R_{ijqr}^{(d)}$ is defined as in (\ref{A5}).

We first consider $R_{n}^{(1)}$. By (\ref{A3})-(\ref{A4}), we can rewrite $R_{ijqr}^{(1)}$ as
\begin{flalign}
R_{ijqr}^{(1)}&=[\overline{R}_{ijqr}^{(2)}]^{T}(H_{ijqr}^{\dag}-H_{ijqr})
\overline{R}_{ijqr}^{(2)}\nonumber\\
&\quad+[\overline{R}_{ijqr}^{(3)}]^{T}(H_{ijqr}^{\dag}-H_{ijqr})
\overline{R}_{ijqr}^{(3)}\nonumber\\
&\quad+\left[(\widehat{\theta}_{n}-\theta_{0})^{T}\frac{\p G_{ijqr}(\theta_{0})}{\p\theta}\right](H_{ijqr}^{\dag}-H_{ijqr})
\left[\frac{\p G_{ijqr}(\theta_{0})}{\p\theta^{T}}(\widehat{\theta}_{n}-\theta_{0})\right]\nonumber\\
&\quad+2[\overline{R}_{ijqr}^{(2)}]^{T}(H_{ijqr}^{\dag}-H_{ijqr})
\overline{R}_{ijqr}^{(3)}\nonumber\\
&\quad+
2[\overline{R}_{ijqr}^{(2)}]^{T}(H_{ijqr}^{\dag}-H_{ijqr})
\left[\frac{\p G_{ijqr}(\theta_{0})}{\p\theta^{T}}(\widehat{\theta}_{n}-\theta_{0})\right]\nonumber\\
&\quad+2[\overline{R}_{ijqr}^{(3)}]^{T}(H_{ijqr}^{\dag}-H_{ijqr})
\left[\frac{\p G_{ijqr}(\theta_{0})}{\p\theta^{T}}(\widehat{\theta}_{n}-\theta_{0})\right]\nonumber\\
&=: r_{1,ijqr}^{(1)}+r_{2,ijqr}^{(1)}+r_{3,ijqr}^{(1)}+r_{4,ijqr}^{(1)}+r_{5,ijqr}^{(1)}
+r_{6,ijqr}^{(1)}.\label{A10}
\end{flalign}
Then, by (\ref{A10}), we have $R_{n}^{(1)}=\sum_{d=1}^{6}\Delta_{d}^{(1)}$, where
$$\Delta_{d}^{(1)}=\frac{1}{n^{2}}\sum_{i,j}r_{d,ijij}^{(1)}+\frac{1}{n^{4}}\sum_{i,j,q,r}r_{d,ijqr}^{(1)}
-\frac{2}{n^{3}}\sum_{i,j,q}r_{d,ijiq}^{(1)}.$$

For the first entry of
$[H_{ijqr}^{\dag}-H_{ijqr}]$, we have
$\big\|k_{xx}(\widehat{\eta}_{1i}^{\dag},\widehat{\eta}_{1j}^{\dag})l(\widehat{\eta}_{2q}^{\dag},\widehat{\eta}_{2r}^{\dag})-
k_{xx}(\eta_{1i},\eta_{1j})$ $l(\eta_{2q},\eta_{2r})\big\|
\leq C
\big[\|\widehat{\eta}_{1i}^{\dag}-\eta_{1i}\|
+\|\widehat{\eta}_{1j}^{\dag}-\eta_{1j}\|
+\|\widehat{\eta}_{2q}^{\dag}-\eta_{2q}\|
+\|\widehat{\eta}_{2r}^{\dag}-\eta_{2r}\|\big]$
by Triangle's inequality and Assumption \ref{asm2.5}.
Meanwhile, by Taylor's expansion and Assumptions \ref{asm2.2}(i) and \ref{asm2.3}, we can show that
$\|\widehat{\eta}_{st}^{\dag}-\eta_{1i}\|\leq \|\widehat{R}_{st}(\widehat{\theta}_{sn})\|+\|\widehat{\theta}_{sn}-\theta_{s0}\|\sup_{\theta_{s}}
\big\|\frac{\p g_{st}(\theta_{s})}{\p\theta_{s}}\big\|=\|\widehat{R}_{st}(\widehat{\theta}_{sn})\|+o_{p}(1)$, where
$\widehat{R}_{st}(\theta_{s})$ is defined as in Assumption \ref{asm2.4},
$o_{p}(1)$ holds uniformly in $t$ due to the fact that $\sqrt{n}\|\widehat{\theta}_{sn}-\theta_{s0}\|=O_{p}(1)$ and
\begin{flalign}
\frac{1}{\sqrt{n}}\max_{1\leq t\leq n}\sup_{\theta_{s}}\left\|\frac{\p g_{st}(\theta_{s})}{\p\theta_{s}}\right\|=o_{p}(1)\label{A11}
\end{flalign}
by Assumption \ref{asm2.2}(i). Hence, it follows that
\begin{flalign}
&\big\|k_{xx}(\widehat{\eta}_{1i}^{\dag},\widehat{\eta}_{1j}^{\dag})l(\widehat{\eta}_{2q}^{\dag},\widehat{\eta}_{2r}^{\dag})-
k_{xx}(\eta_{1i},\eta_{1j})l(\eta_{2q},\eta_{2r})\big\|\nonumber\\
&\quad\leq C\left[\|\widehat{R}_{1i}(\widehat{\theta}_{1n})\|
+\|\widehat{R}_{1j}(\widehat{\theta}_{1n})\|+\|\widehat{R}_{2q}(\widehat{\theta}_{2n})\|+\|\widehat{R}_{2r}(\widehat{\theta}_{2n})\|\right]\nonumber\\
&\quad\quad+o_{p}(1), \label{A12}
\end{flalign}
where $o_{p}(1)$ holds uniformly in $i,j,q,r$. Similarly, (\ref{A12}) holds for other entries of
$[H_{ijqr}^{\dag}-H_{ijqr}]$. Note that
\begin{flalign}
\|\overline{R}_{ijqr}^{(2)}\|\leq\|\widehat{R}_{1i}(\widehat{\theta}_{1n})\|
+\|\widehat{R}_{1j}(\widehat{\theta}_{1n})\|+\|\widehat{R}_{2q}(\widehat{\theta}_{2n})\|+\|\widehat{R}_{2r}(\widehat{\theta}_{2n})\|.\label{A13}
\end{flalign}
Using the inequality $(\sum_{d=1}^{4}|a_{d}|)^{3}\leq
C\sum_{d=1}^{4}|a_{d}|^{3}$, by Assumption \ref{asm2.4} and (\ref{A12})-(\ref{A13}),  it is not hard to show that
\begin{flalign}
n\|\Delta_{1}^{(1)}\|=O_{p}(1/n).\label{A14}
\end{flalign}
Furthermore, by Taylor's expansion, Assumptions \ref{asm2.2}(i) and \ref{asm2.3}, and a similar argument as for (\ref{A11}), it is straightforward to see that
\begin{flalign*}
\|\overline{R}_{ijqr}^{(3)}\|&\leq\left\|\frac{\p G_{ijqr}(\theta^{\dag})}{\p\theta^{T}}-\frac{\p G_{ijqr}(\theta_{0})}{\p\theta^{T}}\right\|\times\|\widehat{\theta}_{n}-\theta_{0}\|\\
&\leq\left[2\max_{1\leq t\leq n}\sup_{\theta_{1}}\left\|\frac{\p^{2}g_{1t}(\theta_{1})}{\p\theta_{1}^{2}}\right\|
+2\max_{1\leq t\leq n}\sup_{\theta_{2}}\left\|\frac{\p^{2}g_{2t}(\theta_{2})}{\p\theta_{2}^{2}}\right\|
\right]\times\|\widehat{\theta}_{n}-\theta_{0}\|^{2}\\
&=o_{p}(1/\sqrt{n}),
\end{flalign*}
where $o_{p}(1)$ holds uniformly in $i,j,q,r$.
As for (\ref{A14}), it entails that
$
n\|\Delta_{2}^{(1)}\|=o_{p}(1).
$
Similarly, we can show that
$
n\|\Delta_{d}^{(1)}\|=o_{p}(1)
$
for $d=3,4,5,6$. Therefore, it follows that $n\|R_{n}^{(1)}\|=o_{p}(1)$.
By the analogous arguments, we can also show that $n\|R_{n}^{(d)}\|=o_{p}(1)$ for $d=3,4$.

Next, we consider the remaining term $R_{n}^{(2)}$.
Denote $r_{1,ijqr}^{(2)}:=[\overline{R}_{ijqr}^{(2)}]^{T}W_{ijqr}$ and $r_{2,ijqr}^{(2)}:=[\overline{R}_{ijqr}^{(3)}]^{T}W_{ijqr}$.
Then, we can rewrite
$
R_{n}^{(2)}=\Delta_{1}^{(2)}+\Delta_{2}^{(2)}
$,
where
\begin{flalign*}
\Delta_{d}^{(2)}=\frac{1}{n^{2}}\sum_{i,j}r_{d,ijij}^{(2)}+\frac{1}{n^{4}}\sum_{i,j,q,r}r_{d,ijqr}^{(2)}
-\frac{2}{n^{3}}\sum_{i,j,q}r_{d,ijiq}^{(2)}
\end{flalign*}
for $d=1, 2$. By Assumptions \ref{asm2.2}(i) and \ref{asm2.3}-\ref{asm2.5} and (\ref{A13}), we have
$
n\|\Delta_{1}^{(2)}\|=O_{p}(1/n).
$
Rewrite
$
\Delta_{2}^{(2)}=(\widehat{\theta}_{1n}-\theta_{10})^{T}\Delta_{21}^{(2)}+(\widehat{\theta}_{2n}-\theta_{20})^{T}\Delta_{22}^{(2)},
$
where
\begin{flalign*}
\Delta_{2d}^{(2)}=\frac{1}{n^{2}}\sum_{i,j}r_{2d,ijij}^{(2)}+\frac{1}{n^{4}}\sum_{i,j,q,r}r_{2d,ijqr}^{(2)}
-\frac{2}{n^{3}}\sum_{i,j,q}r_{2d,ijiq}^{(2)}
\end{flalign*}
for $d=1,2$, with $r_{21,ijqr}^{(2)}=k_{ij}^{\dag}l_{qr}$ and $r_{22,ijqr}^{(2)}=k_{ij}l_{qr}^{\dag}$. Here,
\begin{flalign*}
k_{ij}^{\dag}&=\left[\frac{\p
g_{1i}(\theta_{1}^{\dag})}{\p\theta_{1}}-\frac{\p
g_{1i}(\theta_{10})}{\p\theta_{1}}\right]k_{x}(\eta_{1i},\eta_{1j})+
\left[\frac{\p
g_{1j}(\theta_{1}^{\dag})}{\p\theta_{1}}-\frac{\p
g_{1j}(\theta_{10})}{\p\theta_{1}}\right]k_{y}(\eta_{1i},\eta_{1j}),\\
l_{qr}^{\dag}&=\left[\frac{\p
g_{2q}(\theta_{2}^{\dag})}{\p\theta_{2}}-\frac{\p
g_{2q}(\theta_{20})}{\p\theta_{2}}\right]l_{x}(\eta_{2q},\eta_{2r})+
\left[\frac{\p
g_{2r}(\theta_{2}^{\dag})}{\p\theta_{2}}-\frac{\p
g_{2r}(\theta_{20})}{\p\theta_{2}}\right]l_{y}(\eta_{2q},\eta_{2r}).
\end{flalign*}

By the mean value theorem, $k_{ij}^{\dag}=(\theta^{\dag}_{1}-\theta_{10})^{T}k_{ij}^{\S}$, where
$k_{ij}^{\S}$ is defined explicitly, and
it satisfies that
\begin{flalign}
\Delta_{21}^{(2)\S}&:=\frac{1}{n^{2}}\sum_{i,j}k_{ij}^{\S}l_{ij}+\frac{1}{n^{4}}\sum_{i,j,q,r}k_{ij}^{\S}l_{qr}
-\frac{2}{n^{3}}\sum_{i,j,q}k_{ij}^{\S}l_{iq}=O_{p}(1/n).\label{A15}
\end{flalign}
Here, (\ref{A15}) holds, since $\Delta_{21}^{(2)\S}$ under $H_{0}$ is a
 degenerate $V$-statistic by Assumptions \ref{asm2.1} and \ref{asm2.5} and a similar argument as for Lemma \ref{lem3.2}(ii). Note that $\Delta_{21}^{(2)}=(\theta^{\dag}_{1}-\theta_{10})^{T}\Delta_{21}^{(2)\S}$ and
 $\|\theta^{\dag}_{1}-\theta_{10}\|\leq \|\widehat{\theta}_{1n}-\theta_{10}\|=o_{p}(1)$.
 Therefore, it follows that $\sqrt{n}\|\Delta_{21}^{(2)}\|=o_{p}(1)$.
 Similarly, we can show that
 $\sqrt{n}\|\Delta_{22}^{(2)}\|=o_{p}(1)$, and it follows that $n\|R_{n}^{(2)}\|=o_{p}(1)$.
 This completes the proof.
\end{proof}

\begin{lem}\label{lemB2}
Suppose  Assumptions \ref{asm2.1}-\ref{asm2.5} hold. Then, $\sqrt{n}\|R_{n}(m)\|=o_{p}(1)$, where $R_{n}(m)$ is defined as in (\ref{A7}).
\end{lem}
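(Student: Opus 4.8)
The plan is to build directly on the proof of Lemma~\ref{lemB1}, exploiting the observation that nearly every estimate there is valid \emph{without} $H_0$ and in fact already delivers the stronger rate $n\|\cdot\|=o_p(1)$; only one sub-term actually carries the hypothesis $H_0$ and must be revisited, at the cost of a single factor of $\sqrt{n}$. (Here $R_n(m)$ in the statement is the remainder term $R_{1n}(m)$ of (\ref{A7}).) As in that proof, I would write $R_{1n}(m)=R_n^{(1)}+R_n^{(2)}+R_n^{(3)}+R_n^{(4)}$. For $d=1,3,4$, the bounds $n\|R_n^{(d)}\|=o_p(1)$ established in the proof of Lemma~\ref{lemB1} rely only on Taylor's expansion, the truncation bound in Assumption~\ref{asm2.4}, the $\sqrt{n}$-consistency in Assumption~\ref{asm2.3}, the uniform rate (\ref{A11}) coming from Assumption~\ref{asm2.2}(i), and the boundedness and Lipschitz properties in Assumption~\ref{asm2.5}; none of these invoke $H_0$, so they transfer verbatim, and since $\|R_n^{(d)}\|\ge0$ we get $\sqrt{n}\|R_n^{(d)}\|\le n\|R_n^{(d)}\|=o_p(1)$ for $d=1,3,4$.

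For $R_n^{(2)}=\Delta_1^{(2)}+\Delta_2^{(2)}$, the estimate $n\|\Delta_1^{(2)}\|=O_p(1/n)$ is again $H_0$-free (it uses only Assumptions~\ref{asm2.2}(i) and \ref{asm2.3}--\ref{asm2.5} together with (\ref{A13})), so $\sqrt{n}\|\Delta_1^{(2)}\|=o_p(1)$. The sole place where $H_0$ was used in Lemma~\ref{lemB1} is the term $\Delta_2^{(2)}=(\widehat{\theta}_{1n}-\theta_{10})^{T}\Delta_{21}^{(2)}+(\widehat{\theta}_{2n}-\theta_{20})^{T}\Delta_{22}^{(2)}$: there one writes $\Delta_{21}^{(2)}=(\theta_{1}^{\dag}-\theta_{10})^{T}\Delta_{21}^{(2)\S}$ and argues that, \emph{under $H_0$}, the $V$-statistic $\Delta_{21}^{(2)\S}$ is degenerate of order $1$, hence $\Delta_{21}^{(2)\S}=O_p(1/n)$. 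Without $H_0$ this degeneracy is lost, but $\Delta_{21}^{(2)\S}$ remains a $V$-statistic whose kernel (built from second-order derivatives of $k,l$ and from the scores $\partial g_{st}/\partial\theta_{s}$) is integrable -- indeed square-integrable -- by Assumptions~\ref{asm2.1}, \ref{asm2.2}(i) and \ref{asm2.5}. Hence, by the law of large numbers for $V$-statistics in the $\beta$-mixing case (Assumption~\ref{asm2.2}(ii); see Yoshihara (1976) and Denker and Keller (1983)), $\Delta_{21}^{(2)\S}=O_p(1)$, and likewise $\Delta_{22}^{(2)\S}=O_p(1)$.

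It then remains to propagate this single loss of rate. Since $\|\theta_{1}^{\dag}-\theta_{10}\|\le\|\widehat{\theta}_{1n}-\theta_{10}\|=O_p(n^{-1/2})$ by Assumption~\ref{asm2.3}, we obtain $\|\Delta_{21}^{(2)}\|=O_p(n^{-1/2})$ and, similarly, $\|\Delta_{22}^{(2)}\|=O_p(n^{-1/2})$; multiplying by the additional factor $\|\widehat{\theta}_{sn}-\theta_{s0}\|=O_p(n^{-1/2})$ gives $\|\Delta_2^{(2)}\|=O_p(n^{-1})$, so $\sqrt{n}\|\Delta_2^{(2)}\|=O_p(n^{-1/2})=o_p(1)$ and therefore $\sqrt{n}\|R_n^{(2)}\|=o_p(1)$. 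Adding the four pieces yields $\sqrt{n}\|R_n(m)\|\le\sum_{d=1}^{4}\sqrt{n}\|R_n^{(d)}\|=o_p(1)$, which is the claim. I do not anticipate a genuine obstacle: the only step that is not pure bookkeeping is verifying that the kernel underlying $\Delta_{21}^{(2)\S}$ meets the integrability hypothesis of the $V$-statistic law of large numbers, and this is immediate from the uniform boundedness of the kernel derivatives in Assumption~\ref{asm2.5} and the moment condition in Assumption~\ref{asm2.2}(i); everything else amounts to retracing which estimates in the proof of Lemma~\ref{lemB1} never used $H_0$, and noting that dropping $H_0$ degrades the conclusion precisely from the $n$-rate of Lemma~\ref{lemB1} to the $\sqrt{n}$-rate asserted here.
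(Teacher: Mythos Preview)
Your proposal is correct and follows essentially the same approach as the paper: both proofs refer back to Lemma~\ref{lemB1}, observe that the only place $H_0$ enters is in establishing the degenerate rate $\Delta_{21}^{(2)\S}=O_p(1/n)$ in~(\ref{A15}), and replace this by $\Delta_{21}^{(2)\S}=O_p(1)$ via $V$-statistic asymptotics for $\beta$-mixing sequences (Assumption~\ref{asm2.2}(ii) and Denker--Keller (1983)), which downgrades the overall conclusion from the $n$-rate to the $\sqrt{n}$-rate. Your write-up is more explicit about which sub-terms are $H_0$-free and how the rate propagates, but the argument is the same.
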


\begin{proof}\,\,The proof is the same as the one for Lemma \ref{lemB1}, except that  when $H_{0}$ does not hold,
we can only have $\Delta_{21}^{(2)\S}=O_{p}(1)$ in (\ref{A15}) by Assumption \ref{asm2.2}(ii) and part (c) of Theorem 1 in Denker and Keller (1983).
\end{proof}

\textsc{Proof of Theorem \ref{thm3.1}.}\,\,(i) By Lemmas \ref{lem3.1} and \ref{lemB1},
$$N[S_{1n}(m)]=Z_{1n}(m)+o_{p}(1),$$
 where
\begin{flalign*}
Z_{1n}(m)&:=N[S_{1n}^{(0)}(m)]+\zeta_{1n}^{T}[NS_{1n}^{(11)}(m)]+\zeta_{2n}^{T}[NS_{1n}^{(12)}(m)]\\
&\quad+
\frac{1}{2}\zeta_{1n}^{T}[NS_{1n}^{(21)}(m)]\zeta_{1n}+\frac{1}{2}\zeta_{2n}^{T}[NS_{1n}^{(22)}(m)]\zeta_{2n}\\
&\quad+
[\sqrt{N}\zeta_{1n}]^{T}S_{1n}^{(23)}(m)[\sqrt{N}\zeta_{2n}].
\end{flalign*}
For $a, b=1, 2$, $S_{1n}^{(ab)}(m)$ is a degenerate V-statistic of order 1 by Lemma \ref{lem3.2}(ii),  and hence
$NS_{1n}^{(ab)}(m)=O_{p}(1)$. By Assumption \ref{asm2.3}, it follows that
\begin{flalign*}
Z_{1n}(m)&=N[S_{1n}^{(0)}(m)]+
[\sqrt{N}\zeta_{1n}]^{T}S_{1n}^{(23)}(m)[\sqrt{N}\zeta_{2n}]+o_{p}(1)\\
&=N[S_{1n}^{(0)}(m)]+
[\sqrt{N}\zeta_{1n}]^{T}\Lambda^{(23)}_{m}[\sqrt{N}\zeta_{2n}]+o_{p}(1),
\end{flalign*}
where the last equality holds by the law of large numbers for V-statistics.
Hence, $Z_{1n}(m)\to_{d}\chi_{m}$ as $n\to\infty$ by (\ref{3.9}), Lemma \ref{lem3.3}, and the continuous mapping theorem.
This completes the proof of (i).

(ii) It follows by a similar argument as for (i).
$\hfill\square$

\vspace{4mm}


\textsc{Proof of Theorem \ref{thm3.2}.}\,\,(i) By Lemmas \ref{lem3.1} and \ref{lemB2}, we have
\begin{flalign}\label{A17}
\sqrt{N}\left[S_{1n}(m)-\Lambda^{(0)}_{m}\right]=
\overline{Z}_{1n}(m)+o_{p}(1),
\end{flalign}
where $\Lambda^{(0)}_{m}=E[h^{(0)}_{m}(\eta_{1}^{(m)},\eta_{2}^{(m)},\eta_{3}^{(m)},\eta_{4}^{(m)})]>0$ and
\begin{flalign*}
\overline{Z}_{1n}(m)&:=\sqrt{N}\left[S_{1n}^{(0)}(m)-\Lambda^{(0)}_{m}\right]+[\sqrt{N}\zeta_{1n}]^{T}S_{1n}^{(11)}(m)
+[\sqrt{N}\zeta_{2n}]^{T}S_{1n}^{(12)}(m)\\
&\quad+
\frac{1}{2\sqrt{N}}\left\{[\sqrt{N}\zeta_{1n}]^{T}S_{1n}^{(21)}(m)[\sqrt{N}\zeta_{1n}]+[\sqrt{N}\zeta_{2n}]^{T}S_{1n}^{(22)}(m)
[\sqrt{N}\zeta_{2n}]\right.\\
&\quad\left.+
2[\sqrt{N}\zeta_{1n}]^{T}S_{1n}^{(23)}(m)[\sqrt{N}\zeta_{2n}]\right\}.
\end{flalign*}

First, since $S_{1n}^{(0)}(m)$ is a non-degenerate $V$-statistic under $H_{1}^{(m)}$,  part (c) of Theorem 1 in Denker and Keller (1983) implies that
\begin{flalign}
\sqrt{N}\left[S_{1n}^{(0)}(m)-\Lambda^{(0)}_{m}\right]=O_{p}(1).\label{A18}
\end{flalign}
Second, by the law of large numbers for V-statistics and Assumption \ref{asm2.3}, it follows that
\begin{flalign}
&[\sqrt{N}\zeta_{1n}]^{T}S_{1n}^{(11)}(m)=
\left[\frac{1}{\sqrt{n}}\sum_{i=1}^{n}\pi_{1i}\right]^{T}\Lambda^{(11)}_{m}+o_{p}(1)=O_{p}(1),\label{A19}\\
&[\sqrt{N}\zeta_{2n}]^{T}S_{1n}^{(12)}(m)=
\left[\frac{1}{\sqrt{n}}\sum_{i=1}^{n}\pi_{2i}\right]^{T}\Lambda^{(12)}_{m}+o_{p}(1)=O_{p}(1),\label{A20}\\
&\frac{1}{2\sqrt{N}}\left\{[\sqrt{N}\zeta_{1n}]^{T}S_{1n}^{(21)}(m)[\sqrt{N}\zeta_{1n}]+[\sqrt{N}\zeta_{2n}]^{T}S_{1n}^{(22)}(m)
[\sqrt{N}\zeta_{2n}]\right.\nonumber\\
&\quad\left.+
2[\sqrt{N}\zeta_{1n}]^{T}S_{1n}^{(23)}(m)[\sqrt{N}\zeta_{2n}]\right\}=o_{p}(1),\label{A21}
\end{flalign}
where $\Lambda^{(1s)}_{m}=E[h^{(1s)}_{m}(\eta_{1}^{(m)},\eta_{2}^{(m)},\eta_{3}^{(m)},\eta_{4}^{(m)})]$ for $s=1, 2$.
By (\ref{A18})-(\ref{A21}), $\overline{Z}_{1n}(m)=O_{p}(1)$, which together with (\ref{A17}) implies that $n[S_{1n}(m)]\to\infty$ in probability as $n\to\infty$.
This completes the proof of (i).

(ii) It follows by a similar argument as for (i). $\hfill\square$

\vspace{4mm}

Let $\varsigma_{st}=\big(\eta_{st},\frac{\p g_{st}(\theta_{s0})}{\p\theta_{s}}\big)$ for $s=1, 2$.
To prove Theorem \ref{thm4.1}, we need the following two lemmas, where the first lemma provides some useful results to prove the second one.


\begin{lem}\label{lemB3}
Suppose Assumptions \ref{asm2.1}, \ref{asm2.2}(i) and \ref{asm2.3}-\ref{asm2.5} hold. Then, under $H_{0}$, for $\forall K_{0}>0$,
\begin{flalign*}
(i)\,\,   \sup_{\Omega_{1}}&\left|\frac{1}{N^4}\sum_{q,q',r,r'}h^{(0)}_{m}\left(
x_{1}, x_{2},\left(\eta_{1q}, \eta_{2q'+m}\right),\left(
\eta_{1r}, \eta_{2r'+m}\right)\right)\right.\\
&\,\,\,\,\left.
-h_{2m}^{(0)}(x_1,x_2)\right|=o_p(1),
\end{flalign*}
where $\Omega_{1}=\{(x_{1}, x_{2}): \|x_{s}\|\leq K_{0}\mbox{ for }s=1, 2\}$, and $h_{2m}^{(0)}(x_1,x_2)$ is defined as in (\ref{h2m0});
\begin{flalign*}
(ii)\,\,   \sup_{\Omega_{2}}&\left|\frac{1}{N^4}\sum_{i',j',q',r'}h^{(23)}_{m}\left(\left(
z_{11},
\varsigma_{2i'+m}\right),\left(
z_{12},
\varsigma_{2j'+m}
\right),\left(
z_{13},
\varsigma_{2q'+m}
\right),\left(
z_{14},
\varsigma_{2r'+m}
\right)\right)\right.\\
&\,\,\, \,\left.-E\left[h^{(23)}_{m}\left(\left(
z_{11},
\varsigma_{21}
\right),\left(
z_{12},
\varsigma_{22}
\right),\left(
z_{13},
\varsigma_{23}
\right),\left(
z_{14},
\varsigma_{24}
\right)\right)\right]\right|=o_p(1),
\end{flalign*}
where $\Omega_{2}=\{(z_{11},z_{12},z_{13},z_{14}):\|z_{1s}\|\leq K_{0} \mbox{ for } s=1,2,3,4\}$;
\begin{flalign*}
(iii)\,\,   \sup_{\Omega_{3}}&\left|\frac{1}{N^4}\sum_{i,j,q,r}h^{(23)}_{m}\left(\left(
\varsigma_{1i},
z_{21}
\right),\left(
\varsigma_{1j},
z_{22}
\right),\left(
\varsigma_{1q},
z_{23}
\right),\left(
\varsigma_{1r},
z_{24}
\right)\right)\right.\\
&\,\,\, \,\left.-E\left[h^{(23)}_{m}\left(\left(
\varsigma_{11},
z_{21}
\right),\left(
\varsigma_{12},
z_{22}
\right),\left(
\varsigma_{13},
z_{23}
\right),\left(
\varsigma_{14},
z_{24}
\right)\right)\right]\right|=o_p(1),
\end{flalign*}
where  $\Omega_{3}=\{(z_{21},z_{22},z_{23},z_{24}):\|z_{2s}\|\leq K_{0} \mbox{ for } s=1,2,3,4\}$.
\end{lem}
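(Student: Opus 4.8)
The plan is to prove the three statements by the same scheme: expand the symmetrised kernels, let the quadruple average factorise, and then combine an ergodic-theorem argument with a compactness argument for the uniformity in the ``frozen'' arguments. By the symmetry of the construction under interchanging the two series (and the first/second kernel slots), part (iii) is the mirror image of part (ii), so it is enough to treat (i) and (ii). In each case, write $h^{(0)}_m$ (resp.\ $h^{(23)}_m$) out via its $\frac{1}{4!}$-symmetrisation as a finite sum of monomials, each monomial a product of the elementary functions $k,k_x,\dots,l,l_x,\dots$ evaluated at pairs of the arguments (and, for $h^{(23)}_m$, of one factor $\partial g_{1\cdot}(\theta_{10})/\partial\theta_1$ and one factor $\partial g_{2\cdot}(\theta_{20})/\partial\theta_2$, each entering \emph{linearly} because $\overline k$ and $\overline l$ are linear in them). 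Because in (i) the indices $q,q',r,r'$, and in (ii)/(iii) the summation indices, are run independently, every monomial of the averaged sum factorises into a product of averages of order at most $2$. The proof then rests on four ingredients: (a) the ergodic theorem (Assumption \ref{asm2.1}); (b) a uniform law of large numbers over the uniformly bounded, equi-Lipschitz families obtained by freezing one slot of an elementary kernel, which is a direct consequence of (a) and of the \emph{deterministic} sup-bounds and Lipschitz constants in Assumption \ref{asm2.5}; (c) the $L^1$-integrability of every monomial that carries a Jacobian factor, which holds since $E\|\partial g_{s\cdot}(\theta_{s0})/\partial\theta_s\|^{2}<\infty$ by Assumption \ref{asm2.2}(i); and (d) the fact that under $H_0$ the sequences $\{\eta_{1t}\}$ and $\{\eta_{2t}\}$ are mutually independent (each being i.i.d.), which makes the artificial decoupling of indices in (i) harmless (so the factorised averages recombine to $h_{2m}^{(0)}$) and identifies each limiting expectation with the (i.i.d.-copies) expectation appearing in the statement.

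For part (i) this is routine. Since $h^{(0)}_m$ carries no Jacobian, every factorised monomial is a finite product whose factors are either an $x$-free $V$-statistic of order $\le 2$ in the i.i.d.\ sequence $\{\eta_{1t}\}$ or $\{\eta_{2t}\}$, or a single average in which only $x_{1}$ or $x_{2}$ appears (in a frozen slot); the former converge a.s.\ by Hoeffding's strong law for $V$-statistics and the latter converge by the ordinary law of large numbers. The passage to the supremum over $\Omega_{1}$ is a one-line $\varepsilon$-net argument, since $(x_{1},x_{2})\mapsto(\text{any averaged monomial})$ is Lipschitz with the \emph{deterministic} constant provided by Assumption \ref{asm2.5}. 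Summing the finitely many monomials and using $H_0$ to recombine the limits yields $h_{2m}^{(0)}(x_{1},x_{2})=E[h^{(0)}_m(x_{1},x_{2},\eta_{3}^{(m)},\eta_{4}^{(m)})]$, as claimed.

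For part (ii) — the substantive case — the averaged monomials carry the $\eta_2$-side Jacobians $\partial g_{2,\cdot+m}(\theta_{20})/\partial\theta_{2}$ linearly, hence are \emph{unbounded}, and no bounded-kernel $V$-statistic law may be quoted. Here one first peels off the bounded part: an interior uniform LLN,
\begin{flalign*}
\sup_{\|a\|\le K}\left\|\frac1N\sum_{u=1}^{N}l_{x}(a,\eta_{2,u+m})-E\,l_{x}(a,\eta_{21})\right\|=o_{p}(1)\qquad\text{for every }K>0,
\end{flalign*}
which follows from (a)+(b), collapses each double average $\frac{1}{N^{2}}\sum_{t,u}(\partial g_{2,t+m}/\partial\theta_{2})\,l_{x}(\eta_{2,t+m},\eta_{2,u+m})$ to $\frac1N\sum_{t}(\partial g_{2,t+m}/\partial\theta_{2})\,\overline{l_{x}}(\eta_{2,t+m})+o_{p}(1)$, the error being controlled by $\bigl(\frac1N\sum_{t}\|\partial g_{2,t+m}/\partial\theta_{2}\|\bigr)o_{p}(1)=O_{p}(1)o_{p}(1)$ after a standard truncation of the event $\{\|\eta_{2,t+m}\|>K\}$ (whose Jacobian-weighted frequency is $o(1)$ as $K\to\infty$ by (c)); the remaining single average converges to its mean by the ergodic theorem, and (d) rewrites that mean as the i.i.d.-copies expectation. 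The $\eta_1$-side factors are, on $\Omega_{2}$, products of the bounded Lipschitz functions $k,k_{x}$ and of the components of the $z_{1s}$'s (which are $\le K_{0}$ in norm), hence constants w.r.t.\ the summation, and the Lipschitz modulus in $z=(z_{11},\dots,z_{14})$ of the whole average is at most $C(K_{0})\bigl(\frac1N\sum_{t}\|\partial g_{2,t+m}/\partial\theta_{2}\|\bigr)^{2}=O_{p}(1)$; this stochastic equicontinuity, combined with pointwise convergence on a finite net of $\Omega_{2}$, gives the uniform conclusion. Part (iii) is identical with the roles of the two series interchanged.

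The step I expect to be the main obstacle is exactly this handling of the unbounded, Jacobian-carrying $V$-statistic averages in (ii)/(iii). One cannot invoke an off-the-shelf strong law for $V$-statistics of a stationary ergodic (non-i.i.d.) sequence with an unbounded kernel; instead one must split off the bounded equi-Lipschitz piece, absorb the residue into the $O_{p}(1)$ average $\frac1N\sum_{t}\|\partial g_{2,t+m}/\partial\theta_{2}\|$ (finite in the limit by Assumption \ref{asm2.2}(i)), and separately verify that the supremum over the compacta $\Omega_{2},\Omega_{3}$ is still controlled — which works precisely because the free arguments there populate only the $\eta_1$-side (resp.\ $\eta_2$-side) slots, which carry no Jacobian of their own on the relevant compact set, so the equicontinuity modulus stays $O_{p}(1)$.
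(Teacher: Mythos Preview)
Your proposal is correct and, for part (i), follows essentially the same route as the paper: expand the symmetrised kernel $h^{(0)}_{m}$ into its finitely many monomials, observe that the quadruple average factorises into products of $V$-statistics of order at most $2$ (in the i.i.d.\ $\{\eta_{st}\}$) and single averages with one frozen slot, invoke the law of large numbers for $V$-statistics for the former and a uniform LLN via the bounded--Lipschitz structure of Assumption~\ref{asm2.5} for the latter, and recombine using $H_{0}$. The paper carries this out explicitly for one of the $24$ monomials and leaves the rest to symmetry.

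Where you go beyond the paper is in parts (ii)--(iii). The paper's proof of these is the single line ``the conclusions hold by similar arguments as for (i)'', whereas you correctly flag that $h^{(23)}_{m}$ carries the \emph{unbounded} Jacobian factors $\partial g_{2,\cdot+m}(\theta_{20})/\partial\theta_{2}$ (resp.\ $\partial g_{1\cdot}(\theta_{10})/\partial\theta_{1}$), so the bounded-kernel $V$-statistic argument does not apply directly. Your peeling/truncation scheme --- first collapsing the inner bounded-kernel average via a uniform LLN on $\{\|a\|\le K\}$, then controlling the residual by $\bigl(\frac1N\sum_t\|\partial g_{2,t+m}/\partial\theta_2\|\bigr)\cdot o_p(1)$ and a Cauchy--Schwarz/DCT bound on the truncated tail using the $L^2$-moment in Assumption~\ref{asm2.2}(i), and finally handling the supremum over $\Omega_{2}$ via the $O_p(1)$ stochastic Lipschitz modulus --- is a valid and self-contained way to fill in what the paper leaves implicit. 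In short: same approach for (i); for (ii)--(iii) your argument is more careful than the paper's, which simply asserts that the proof of (i) carries over.
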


\begin{proof} $(i)$ Denote $x_{1}=(x_{11},x_{21})$ and $x_{2}=(x_{12},x_{22})$. Without loss of generality, we assume that $m=0$. By the definition of $h^{(00)}_{0}$, it has 24 different terms, and we
only give the proof for its first term. That is, we are going to show that
\begin{flalign}
&\frac{1}{N^4}\sum_{q,r,q',r'}\tilde{k}_{12}^{(0)}[\tilde{l}_{12}^{(0)}+l_{q'r'}^{(0)}-2\tilde{l}_{1 q'}^{(0)}-E(\tilde{l}_{12}^{(0)})-E(l_{34}^{(0)})+2E(\tilde{l}_{13}^{(0)})]=o_{p}(1),\label{B15}
\end{flalign}
where $o_{p}(1)$ holds uniformly in $\Omega_{1}$, $\tilde{k}_{12}^{(0)}=k(x_{11},x_{12})$, $\tilde{l}_{12}^{(0)}=k(x_{21},x_{22})$, $l_{q'r'}^{(0)}=k(\eta_{2q'},\eta_{2r'})$,
$\tilde{l}_{1 q'}^{(0)}=k(x_{21},\eta_{2q'})$, $l_{34}^{(0)}=k(\eta_{23},\eta_{24})$, and $\tilde{l}_{1 3}^{(0)}=k(x_{21},\eta_{23})$.

By the triangle's inequality, we have
\begin{flalign*}
&\left|\frac{1}{N^4}\sum_{q,r,q',r'}\tilde{k}_{12}^{(0)}[\tilde{l}_{12}^{(0)}+l_{q'r'}^{(0)}-2\tilde{l}_{1 q'}^{(0)}-E(\tilde{l}_{12}^{(0)})-E(l_{34}^{(0)})+2E(\tilde{l}_{13}^{(0)})]\right|\\
& = \left|\frac{\tilde{k}_{12}^{(0)}}{N^4}\sum_{q,r,q',r'}[l_{q'r'}^{(0)}-2\tilde{l}_{1 q'}^{(0)}-E(l_{34}^{(0)})+2E(\tilde{l}_{13}^{(0)})]\right|\\
&\leq \left|\f{C}{N^2}\sum_{q',r'=1}^n[l_{q'r'}^{(0)}-E(l_{34}^{(0)})]\right|
+\left|\f{C}{N}\sum_{q'=1}^n[\tilde{l}_{1q'}^{(0)}-E(\tilde{l}_{13}^{(0)})]\right|.
\end{flalign*}
Hence, it follows that (\ref{B15}) holds by noting the fact that
\begin{flalign}
&\f{1}{N^2}\sum_{q',r'=1}^n[l_{q'r'}^{(0)}-E(l_{34}^{(0)})]=o_{p}(1),\label{B16}\\
&\sup_{\Omega_{1}}\f{1}{N}\sum_{q'=1}^n[\tilde{l}_{1q'}^{(0)}-E(\tilde{l}_{13}^{(0)})]=o_{p}(1),\label{B17}
\end{flalign}
where (\ref{B16}) holds by the law of large numbers for V-statistics,
and (\ref{B17}) holds by Assumption \ref{asm2.5} and standard arguments for uniform convergence.

$(ii)$ \& $(iii)$ The conclusions hold by similar arguments as for $(i)$.
\end{proof}

\begin{lem}\label{lemB4}
Suppose Assumptions \ref{asm2.1}, \ref{asm2.2}(i) and \ref{asm2.3}-\ref{asm2.5} hold. Then, under $H_{0}$,
\begin{flalign*}
(i)\,\, &  \sup_{\Omega_{1}}\left|h_{2m}^{(0*)}(x_1,x_2)
-h_{2m}^{(0)}(x_1,x_2)\right|=o_{p}(1),
\end{flalign*}
where $\Omega_{1}$, $h_{2m}^{(0)}(x_1,x_2)$ and $h_{2m}^{(0*)}(x_1,x_2)$ are defined as in Lemma \ref{lemB3}(i), (\ref{h2m0}) and (\ref{h2m0_star}), respectively;
\begin{flalign*}
(ii)\,\, &\left|\Lambda_{m}^{(23*)}-\Lambda_{m}^{(23)}\right|=o_{p}(1),
\end{flalign*}
where $\Lambda_{m}^{(23)}$ and $\Lambda_{m}^{(23*)}$ are defined as in (\ref{lambda_m23}) and (\ref{lambda_m23_star}), respectively.
\end{lem}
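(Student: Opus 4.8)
The plan is to reduce both statements to the uniform convergence results already established in Lemma~\ref{lemB3}, by successively replacing the bootstrap innovations $\widehat\eta_{st}^{*}$ by the true innovations $\eta_{st}$, and the estimator $\widehat\theta_{sn}$ by $\theta_{s0}$, controlling each replacement error through the Lipschitz and moment conditions in Assumptions~\ref{asm2.2}(i) and \ref{asm2.5}. Throughout I would use the bound, already exploited in the proof of Lemma~\ref{lemB1}, that $\|\widehat\eta_{st}-\eta_{st}\|\leq\|\widehat{R}_{st}(\widehat\theta_{sn})\|+\|\widehat\theta_{sn}-\theta_{s0}\|\sup_{\theta_s}\|\partial g_{st}(\theta_s)/\partial\theta_s\|$, which together with Assumption~\ref{asm2.4}, H\"older's inequality and the ergodic theorem (Assumptions~\ref{asm2.1} and \ref{asm2.2}(i)) yields $N^{-1}\sum_{t}\|\widehat\eta_{st}-\eta_{st}\|=o_p(1)$.

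For part (i), since $\{\widehat\eta_{st}^{*}\}$ are drawn with replacement, independently across $s$, from $\{\widehat\eta_{st}\}$, the conditional expectation $E^{*}$ in (\ref{h2m0_star}) reduces to an explicit $V$-statistic average,
\[
h_{2m}^{(0*)}(x_{1},x_{2})=\frac{1}{N^{4}}\sum_{q,q',r,r'}h^{(0)}_{m}\big(x_{1},x_{2},(\widehat\eta_{1q},\widehat\eta_{2q'+m}),(\widehat\eta_{1r},\widehat\eta_{2r'+m})\big).
\]
I would first replace each $\widehat\eta_{st}$ by $\eta_{st}$: since $k$, $l$, hence $h^{(0)}_{m}$, are bounded and Lipschitz in each argument by Assumption~\ref{asm2.5}, a telescoping bound gives an error that is $O_p\big(N^{-1}\sum_{t}(\|\widehat\eta_{1t}-\eta_{1t}\|+\|\widehat\eta_{2t+m}-\eta_{2t+m}\|)\big)=o_p(1)$, uniformly in $(x_{1},x_{2})$. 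What then remains is precisely the quantity treated in Lemma~\ref{lemB3}(i), which is $o_p(1)$ uniformly on $\Omega_{1}$; the triangle inequality yields (i).

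For part (ii), $\Lambda_{m}^{(23*)}$ is again, conditionally, an explicit $V$-statistic average of $h^{(23)}_{m}$ over resampled points, but now with $\partial g_{st}(\widehat\theta_{sn})/\partial\theta_s$ in place of $\partial g_{st}(\theta_{s0})/\partial\theta_s$. The essential difference from (i) is that $h^{(23)}_{m}$ is \emph{not} bounded: it is bilinear in the unbounded derivative factors $\partial g_{1i}/\partial\theta_1$ and $\partial g_{2j+m}/\partial\theta_2$, multiplied by the bounded Lipschitz functions $k_{x},l_{x}$. I would argue in three steps. (a) Replace $\partial g_{st}(\widehat\theta_{sn})/\partial\theta_s$ by $\partial g_{st}(\theta_{s0})/\partial\theta_s$; a mean value expansion bounds the remainder by $\|\widehat\theta_{sn}-\theta_{s0}\|\cdot N^{-1}\sum_{t}\sup_{\theta_s}\|\partial^{2}g_{st}(\theta_s)/\partial\theta_s^{2}\|=o_p(1)$ via Assumptions~\ref{asm2.2}(i) and \ref{asm2.3}. (b) Replace $\widehat\eta_{st}^{*}$ by $\eta_{st}$ inside the bounded factors $k_{x},l_{x}$; by their Lipschitz property, and since the derivative factors are $O_p(1)$ on average by Assumption~\ref{asm2.2}(i), the error is again $O_p\big(N^{-1}\sum_{t}\|\widehat\eta_{st}-\eta_{st}\|\big)=o_p(1)$. (c) For the remaining average, over the true $\varsigma_{st}$ with $\theta_{s0}$, truncate the derivative factors at a level $K_{0}$: the tail is uniformly negligible by the second-moment bound in Assumption~\ref{asm2.2}(i), while the truncated average is governed by Lemma~\ref{lemB3}(ii)--(iii), which give uniform convergence over $\Omega_{2}$ and $\Omega_{3}$ to $\Lambda_{m}^{(23)}$ (here one uses that, under $H_{0}$, $\{\eta_{1t}\}$ and $\{\eta_{2t}\}$ are independent, so the two-sided average factorizes and matches $E[h^{(23)}_{m}(\varsigma_1^{(m)},\dots,\varsigma_4^{(m)})]$). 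Sending $K_{0}\to\infty$ after $n\to\infty$ gives $|\Lambda_{m}^{(23*)}-\Lambda_{m}^{(23)}|=o_p(1)$.

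The main obstacle is step (c): because $h^{(23)}_{m}$ is unbounded, Lemma~\ref{lemB3} cannot be applied directly, so one must couple a truncation of the derivative factors (legitimized by the second-moment conditions in Assumption~\ref{asm2.2}(i)) with the bilinear structure of $h^{(23)}_{m}$ and the conditionally i.i.d.\ nature of the resample; simultaneously, the three perturbations---resampling, $\widehat\eta_{st}$ versus $\eta_{st}$, and $\widehat\theta_{sn}$ versus $\theta_{s0}$---must all be shown negligible without any being amplified by those unbounded factors.
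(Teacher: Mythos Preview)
Your approach is correct and essentially the same as the paper's. Part~(i) is identical: write $h_{2m}^{(0*)}$ as the $N^{-4}$ average over residuals, pass from $\widehat\eta_{st}$ to $\eta_{st}$ via the Lipschitz bound (this is the paper's display (\ref{B18})), then invoke Lemma~\ref{lemB3}(i). For part~(ii) the paper likewise first reduces from $\widehat\varsigma_t^{(m*)}$ to the true $\varsigma_{st}$ (your steps (a)--(b), done in one stroke ``by a similar argument as for (\ref{B18})''), and then decomposes the remaining $N^{-8}$-average as $\Xi_0=\Xi_1+\Xi_2$: $\Xi_1$ fixes the series-1 indices and is controlled by Lemma~\ref{lemB3}(ii), while $\Xi_2$ is handled by Lemma~\ref{lemB3}(iii) together with dominated convergence. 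Your explicit truncation of the derivative factors at level $K_0$ is simply an alternative---and arguably more transparent---device for coping with the unboundedness of $h^{(23)}_m$ when applying Lemma~\ref{lemB3}(ii)--(iii); the paper's iterated-conditioning-plus-DCT argument achieves the same end.
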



\begin{proof}
(i) First, it is straightforward to see that
\begin{flalign}
h_{2m}^{(0*)}(x_1,x_2)&=
\frac{1}{N^4}\sum_{q,q',r,r'}h^{(0)}_{m}\left(x_{1}, x_{2},\left(
\widehat{\eta}_{1q},
\widehat{\eta}_{2q'+m}
\right),\left(
\widehat{\eta}_{1r},
\widehat{\eta}_{2r'+m}
\right)\right)\nonumber\\
&=
\frac{1}{N^4}\sum_{q,q',r,r'}h^{(0)}_{m}\left(x_{1}, x_{2},\left(
\eta_{1q},
\eta_{2q'+m}
\right),\left(
\eta_{1r},
\eta_{2r'+m}
\right)\right)+o_{p}(1),\label{B18}
\end{flalign}
where $o_{p}(1)$ holds uniformly in $\Omega_{1}$ by Taylor's expansion and Assumptions \ref{asm2.3} and \ref{asm2.5}.
Then, the conclusion holds by (\ref{B18}) and Lemma \ref{lemB3}(i).

(ii) Define
$$\mathcal{H}(i,i',j,j',q,q',r,r')=h^{(23)}_{m}\left(\left(
\varsigma_{1i},
\varsigma_{2i'}
\right),\left(
\varsigma_{1j},
\varsigma_{2j'}
\right),\left(
\varsigma_{1q},
\varsigma_{2q'}
\right),\left(
\varsigma_{1r},
\varsigma_{2r'}
\right)\right).$$
By a similar argument as for (\ref{B18}), we have
\begin{flalign*}
\Lambda_{m}^{(23*)}-\Lambda_{m}^{(23)}=\Xi_{0}+o_{p}(1),
\end{flalign*}
where
$$\Xi_{0}=\f{1}{N^8}\sum_{i,j,q,r,i',j',q',r'}\mathcal{H}(i,i'+m,j,j'+m,q,q'+m,r,r'+m)-\Lambda_{m}^{(23)}.$$
Rewrite
\begin{flalign}
\Xi_{0}:=\f{1}{N^4}\sum_{i,j,q,r}\Xi_{1,ijqr}+\f{1}{N^4}\sum_{i,j,q,r}\Xi_{2,ijqr},\label{B19}
\end{flalign}
where
\begin{flalign*}
&\Xi_{1,ijqr}=\f{1}{N^4}\sum_{i',j',q',r'}\mathcal{H}(i,i'+m,j,j'+m,q,q'+m,r,r'+m)\\
&\, \, \,\,\,\,\, \, \,\,\,\,\, \, \,\,\,\,\, \, \,\,\,\,\, \, \,\,\,\, -E_{\varsigma_{21},\varsigma_{22},\varsigma_{23},\varsigma_{24}}\left[\mathcal{H}(i,1,j,2,q,3,r,4)\right],\\
&\Xi_{2,ijqr}=E_{\varsigma_{21},\varsigma_{22},\varsigma_{23},\varsigma_{24}}\left[\mathcal{H}(i,1,j,2,q,3,r,4)\right]-\Lambda_{m}^{(23)}.
\end{flalign*}
By Lemma \ref{lemB3}(ii), $\Xi_{1,ijqr}=o_{p}(1)$ uniformly in $i,j,q,r$, and hence
\begin{flalign}
\f{1}{N^4}\sum_{i,j,q,r}\Xi_{1,ijqr}=o_{p}(1).\label{B20}
\end{flalign}
Moreover,
we can rewrite
\begin{flalign}\label{B21}
\f{1}{N^4}\sum_{i,j,q,r}\Xi_{2,ijqr}&=E_{\varsigma_{21},\varsigma_{22},\varsigma_{23},\varsigma_{24}}
\big\{\mathcal{H}(i,1,j,2,q,3,r,4) \nonumber\\
&\quad-E_{\varsigma_{11},\varsigma_{12},\varsigma_{13},\varsigma_{14}}
\left[\mathcal{H}(1,1,2,2,3,3,4,4)\right]\big\},
\end{flalign}
where we have used the fact that under $H_{0}$,
$$\Lambda_{m}^{(23)}=
E_{\varsigma_{21},\varsigma_{22},\varsigma_{23},\varsigma_{24}}
E_{\varsigma_{11},\varsigma_{12},\varsigma_{13},\varsigma_{14}}\left[\mathcal{H}(1,1,2,2,3,3,4,4)\right].$$
By (\ref{B21}), Lemma \ref{lemB3}(iii), Assumptions \ref{asm2.2}(i) and \ref{asm2.5}, and the dominated convergence theorem, we can show that
\begin{flalign}
\f{1}{N^4}\sum_{i,j,q,r}\Xi_{2,ijqr}=o_{p}(1).\label{B22}
\end{flalign}
Hence, the conclusion holds by (\ref{B19})-(\ref{B20}) and (\ref{B22}).
\end{proof}

\textsc{Proof of Theorem \ref{thm4.1}.}\,\, (i)
By Assumptions \ref{asm4.1} and \ref{asm4.2}(i), $\sqrt{N}\zeta_{sn}^{*}=O_{p}^{*}(1)$.
Then, by (\ref{4.1})-(\ref{4.2}), Assumption \ref{asm4.2}, and a similar argument as for Lemmas \ref{lem3.2}(ii)-(iii) and \ref{lemB1},
we can show that
\begin{flalign}
S_{1n}^{**}(m)&=
\sum_{j=1}^{\infty}\lambda_{jm}^*\left[\frac{1}{\sqrt{N}}\sum_{i=1}^N\Phi_{jm}^*(\widehat{\eta}_i^{(m*)})\right]\nonumber\\
&\quad
+[\sqrt{N}\zeta_{1n}^{*T}]\Lambda^{(23*)}_{m}[\sqrt{N}\zeta_{2n}^*]+o_{p}^{*}(1)=O_{p}^{*}(1).\label{AA8}
\end{flalign}
This completes the proof of (i).

(ii) It follows by a similar argument as for (i).

(iii) Let $\mathcal{T}_{1i}^{*}=\left(\big(\Phi_{jm}^{*}(\widehat{\eta}_{i}^{(m*)})\big)_{j\geq1, 0\leq m\leq M}\right)^{T}$,  $\mathcal{T}_{2i}^{*}=\big((\pi_{si}^{*T})_{1\leq s\leq 2}\big)^{T}$, and
$$\mathcal{T}_{n}^*=\big(\frac{1}{\sqrt{N}}\sum_{i=1}^{N}\mathcal{T}_{1i}^{*T}, \frac{1}{\sqrt{n}}\sum_{i=1}^{n}\mathcal{T}_{2i}^{*T}\big)^{T},$$
where $\pi_{si}^*$ is defined as in Assumption \ref{asm4.1}.
Also, let $\mathcal{T}_{i}^{*}=(\mathcal{T}_{1i}^{*T},\mathcal{T}_{2i}^{*T})^{T}$. As for Lemma \ref{lem3.3}, it is not hard to see that
conditional on $\varpi_{n}$,
\begin{equation}\label{A27}
\mathcal{T}_n^*\rightarrow_d \mathcal{T}^*
\end{equation}
in probability as $n\to\infty$, where  $\mathcal{T}^*$ is a multivariate normal distribution with covariance matrix $\overline{\mathcal{T}}^{*}$, and
$\overline{\mathcal{T}}^{*}=\lim_{n\to\infty}E^{*}(\mathcal{T}_{1}^{*}\mathcal{T}_{1}^{*T})=
E(\mathcal{T}_1\mathcal{T}_1^T)=\overline{\mathcal{T}}$ in probability
by Assumption \ref{asm4.2}.

Next, by Lemma \ref{lemB4}(i) and  Corollary XI.9.4(a) in Dunford and Schwartz (1963, p.1090), we can get
\begin{equation}\label{AA10}
|\lambda_{jm}^*-\lambda_{jm}|=o(1).
\end{equation}
Hence, the conclusion holds by (\ref{AA8})-(\ref{AA10}), Lemma \ref{lemB4}(ii), and the continuous mapping theorem.
This completes the proof of (iii).

(iv) It follows by a similar argument as for (iii).  $\hfill\square$

%




\end{document}